\newcommand{\old}[1]{{}}
\newcommand{\eps}{\varepsilon}
\newcommand{\diam}{\mathrm{diam}}
\newcommand{\dir}{\mathrm{dir}}
\newcommand{\per}{\mathrm{per}}
\newcommand{\hper}{\mathrm{hper}}
\newcommand{\vper}{\mathrm{vper}}
\newcommand{\slope}{\mathrm{slope}}
\newcommand{\wdth}{\text{width}}
\newcommand{\hght}{\text{height}}
\newcommand{\MST}{\text{MST}}
\newtheorem{question}[theorem]{Question}
\title{Light Euclidean Steiner Spanners in the Plane}
\titlerunning{Light Euclidean Steiner Spanners in the Plane}
\author{Sujoy Bhore}{Universit\'e Libre de Bruxelles, Brussels, Belgium.}{sujoy.bhore@gmail.com}{0000-0003-0104-1659} {Research on this paper was supported by the Fonds de la Recherche Scientifique-FNRS under Grant no MISU F 6001.}
\author{Csaba D. T\'oth}{California State University Northridge, Los Angeles, CA; and Tufts University, Medford, MA, USA.}{csaba.toth@csun.edu}{0000-0002-8769-3190}
{Research on this paper was partially supported by the NSF award DMS-1800734.}
\authorrunning{S.~Bhore, Cs.~D.~T\'oth}
\keywords{Geometric spanner, lightness, minimum weight} 
\begin{document}

\maketitle

\begin{abstract}
Lightness is a fundamental parameter for Euclidean spanners; it is the ratio of the spanner weight to the weight of the minimum spanning tree of a finite set of points in $\mathbb{R}^d$. In a recent breakthrough, Le and Solomon (2019) established the precise dependencies on $\eps>0$ and $d\in \mathbb{N}$ of the minimum lightness of $(1+\eps)$-spanners, and observed that additional Steiner points can substantially improve the lightness. Le and Solomon (2020) constructed Steiner $(1+\eps)$-spanners of lightness $O(\eps^{-1}\log\Delta)$ in the plane, where  $\Delta\geq \Omega(\sqrt{n})$ is the \emph{spread} of the point set, defined as the ratio between the maximum and minimum distance between a pair of points.
They also constructed spanners of lightness $\tilde{O}(\eps^{-(d+1)/2})$ in dimensions $d\geq 3$. Recently, Bhore and T\'{o}th (2020) established a lower bound of $\Omega(\eps^{-d/2})$ for the lightness of Steiner $(1+\eps)$-spanners in $\mathbb{R}^d$, for $d\ge 2$.
The central open problem in this area is
to close the gap between the lower and upper bounds in all dimensions $d\geq 2$.

In this work, we show that for every finite set of points in the plane and every $\eps>0$, there exists a Euclidean Steiner $(1+\eps)$-spanner of lightness $O(\eps^{-1})$; this matches the lower bound for $d=2$. We generalize the notion of shallow light trees, which may be of independent interest, and use directional spanners and a modified window partitioning scheme to achieve a tight weight analysis.
\end{abstract}

\section{Introduction}
\label{sec:intro}

Given an edge-weighted graph $G$, a spanner is a subgraph $H$ of $G$ that preserves the length of the shortest paths in $G$ up to some amount of multiplicative or additive distortion. Formally, a subgraph $H$ of a given edge-weighted graph $G$ is a $t$-\emph{spanner}, for some $t\ge 1$, if for every $pq\in \binom{V(G)}{2}$ we have $d_H(p,q)\leq t\cdot d_G(p,q)$, where $d_G(p,q)$ denotes the length of
the shortest path in $G$.
The parameter $t$ is called the \emph{stretch factor} of the spanner. Graph spanners were introduced by Peleg and Sch\"{a}ffer~\cite{peleg1989graph}, and since then it has turned out to be a  fundamental graph structure with numerous applications in the field of distributed systems and communication, distributed queuing protocol, compact routing schemes, etc.; see~\cite{demmer1998arrow,
herlihy2001competitive, PelegU89a, peleg1989trade}.
For edge-weighted graphs, a natural parameter is the \emph{lightness} of a spanner, that is associated with the total weight of the spanner. The \emph{lightness} of a spanner $H$ of an input graph $G$, is the  ratio $w(H)/w(\MST)$ between the total weight of $H$ and the weight of a minimum spanning tree $(\MST)$ of $G$. Note that, since a spanner $H$ is a connected graph, the trivial lower bound for lightness is $1$.

In geometric settings, a $t$-spanner for a finite set $S$ of points in $\mathbb{R}^d$ is a subgraph of the underlying complete graph $G=(S,\binom{S}{2})$, that preserves the pairwise Euclidean distances between points in $S$ to within a factor of $t$, which is the \emph{stretch factor}. The edge weights of $G$ are the Euclidean distances between the vertices. Chew~\cite{Chew86, Chew89} initiated the study of Euclidean spanners in 1986, and showed that for a set of $n$ points in $\mathbb{R}^2$, there exists a spanner with $O(n)$ edges and constant stretch factor.
Since then a large body of research has been devoted to Euclidean spanners due to its many applications across domains, such as, topology control in wireless networks~\cite{schindelhauer2007geometric}, efficient regression in metric spaces~\cite{gottlieb2017efficient},
approximate distance oracles~\cite{gudmundsson2008approximate}, and others. Moreover, Rao and Smith~\cite{rao1998approximating} showed the relevance of Euclidean spanners in the context of other fundamental geometric \textsf{NP}-hard problems, e.g., Euclidean traveling salesman problem and Euclidean minimum Steiner tree problem. Many different spanner construction approaches have been developed for Euclidean spanners over the years, that each found further applications in geometric optimization, such as spanners based on
well-separated pair decomposition (WSPD)~\cite{callahan1993optimal, GudmundssonLN02},
skip-lists~\cite{arya1994randomized},
path-greedy and gap-greedy approaches~\cite{althofer1993sparse, arya1997efficient},
locality-sensitive orderings~\cite{ChanHJ20}, and more. We refer to the book by Narasimhan and Smid~\cite{narasimhan2007geometric} and the survey by Bose and Smid~\cite{BoseS13} for a summary of results and techniques on Euclidean spanners up to 2013.

Lightness and sparsity are two natural parameters for Euclidean spanners. For a set $S$ of points in $\mathbb{R}^d$,  the lightness is the ratio of the spanner weight (i.e., the sum of all edge weights) to the weight of the Euclidean minimum spanning tree
$\MST(S)$. Then, the \emph{sparsity} of a spanner on $S$ is the ratio of its size to the size of a spanning tree. Classical results show that when the dimension $d\in \mathbb{N}$ and $\varepsilon > 0$ are constant, every set $S$ of $n$ points in $d$-space admits an
$(1 +\varepsilon)$-spanners with $O(n)$ edges and weight proportional to that of the Euclidean \textsc{MST} of $S$. We refer to a series of spanners constructions for a comprehensible understanding of sparse spanners~\cite{Chew89, Clarkson87,
keil1988approximating, keil1992classes,
ruppert1991approximating,
yao1982constructing}.

Of particular interest, we elaborate on the \emph{lightness} aspect of Euclidean spanners. Das et al.~\cite{das1993optimally} showed that the \emph{greedy-spanner} (cf.~\cite{althofer1993sparse}) has constant lightness in $\mathbb{R}^3$. This was generalized later to $\mathbb{R}^d$, for all $d\in \mathbb{N}$, by Das et al.~\cite{narasimhan1995new}. However the dependencies on $\eps$ and $d$ have not been addressed. Rao and Smith~\cite{rao1998approximating} showed
that the greedy spanner has lightness $\eps^{-O(d)}$ in $\mathbb{R}^d$ for every constant $d$, and asked what is the best possible constant in the exponent. A complete proof for the existence of a $(1+\eps)$-spanner with lightness $O(\eps^{-2d})$ is in the book on geometric spanners~\cite{narasimhan2007geometric}. Gao et al.~\cite{gao2006deformable} considered $(1+\eps)$-spanners in doubling metrics, and showed that every finite set of $n$ points in doubling dimension $d$ has a spanner of sparsity $\eps^{-O(d)}$.
In 2015, Gottlieb~\cite{gottlieb2015light} showed that a metric of doubling dimension $d$ has a spanner of lightness $(d/\eps)^{O(d)}$, which improved the $O(\log n)$ lightness bound of
Smid~\cite{smid2009weak}. Recently, Borradaile et al.~\cite{borradaile2019greedy} showed that the greedy $(1+\eps)$-spanner of a finite metric space of doubling dimension $d$ has lightness $\eps^{-O(d)}$. Le and Solomon~\cite{le2019truly} established the dependencies of $\varepsilon$ in the \emph{lightness} and \emph{sparsity} bounds of Euclidean $(1+\eps)$-spanners up to polylogerithmic factors.
They constructed, for every $\eps>0$ and constant $d\in \mathbb{N}$, a set $S$ of $n$ points in $\mathbb{R}^d$ for which any $(1+\eps)$-spanner must have lightness $\Omega(\eps^{-d})$ and sparsity $\Omega(\eps^{-d+1})$,
whenever $\eps = \Omega(n^{-1/(d-1)})$. Moreover, they showed that the greedy $(1+\eps)$-spanner in $\mathbb{R}^d$ has lightness $O(\eps^{-d}\log \eps^{-1})$. Several classical constructions are known to achieve sparsity $O(\eps^{-d+1})$ in $\mathbb{R}^d$~\cite{althofer1993sparse,le2019truly,narasimhan2007geometric}.

\subparagraph{Steiner Spanners.} \emph{Steiner points} are additional vertices in a network that are not part of the input, and a $t$-spanner must achieve stretch factor $t$ only between pairs of the input points in $S$.
Le and Solomon~\cite{le2019truly} observed that it is possible to use Steiner points to bypass the lower bounds and substantially improve the bounds for \emph{lightness} and \emph{sparsity}
of Euclidean $(1+\eps)$-spanners. For minimum sparsity, they gave an upper bound of $O(\eps^{(1-d)/2})$ for $d$-space and a lower bound of $\Omega(\eps^{-1/2}/\log\eps^{-1})$. For minimum lightness, they gave a lower bound of $\Omega(\eps^{-1}/\log\eps^{-1})$, for points in the plane ($d=2$)~\cite{le2019truly}. In a subsequent work~\cite{le2020light}, they have constructed Steiner $(1+\eps)$-spanners of lightness $O(\eps^{-1}\log\Delta)$ in the plane, where $\Delta$ is the \emph{spread} of the point set, defined as the ratio between the maximum and minimum distance between a pair of points. In particular, $\log \Delta\in \Omega(\log n)$ in doubling metrics.

Recently, Bhore and T\'{o}th~\cite{BT-less-21} established a lower bound of $\Omega(\eps^{-d/2})$ for the lightness of Steiner $(1+\eps)$-spanners in Euclidean $d$-space for all $d\ge 2$. Moreover, for points in the plane, they established an upper bound of $O(\eps^{-1}\log n)$. In~\cite{le2020unified},
Le and Solomon constructed spanners of lightness $\tilde{O}(\eps^{-(d+1)/2})$ in dimensions $d\geq 3$,
nearly matching the lower bound $\Omega(\eps^{-d/2})$, for $d\ge 3$. The central open problem in this area is to close the gap between the lower and upper bounds of lightness, in all dimensions $d\geq 2$.

\begin{question}\label{ques-sp-plane}
Do there exist Euclidean Steiner $(1+\eps)$-spanners for a finite set of points in $\mathbb{R}^d$, of lightness $O(\eps^{-d/2})$, for any $d\ge 2$?
\end{question}

Bounding the \emph{lightness} of Euclidean spanners is often harder than bounding the \emph{sparsity}, as also noted by Le and Solomon~\cite{le2020light}. Several works portrayed the difficulties of constructing light spanners in Euclidean spaces, doubling metrics, as well as on other weighted graphs;
see~\cite{althofer1993sparse, borradaile2019greedy,
chechik2018near, elkin2015light,
gottlieb2015light,
le2019truly, smid2009weak,
narasimhan1995new, rao1998approximating}.
A delicate aspect of the problem is to find suitable locations for \emph{Steiner points}.
Recent results on Steiner spanners~\cite{BT-less-21, le2019truly, le2020light, le2020unified} suggest that highly nontrivial insights are required to argue the upper bounds for Steiner spanners, and they tend to be even more intricate than their non-Steiner counterpart.

\subparagraph{Related Previous Work.}
Steiner points were used in several occasions to improve the overall weight of a network.
Previously, Elkin and Solomon~\cite{elkin2015steiner} and Solomon~\cite{Solomon15} showed that Steiner points can improve the weight of the network in the single-source setting. In particular, they introduced the so-called \emph{shallow-light trees} (\textsf{SLT}), that is a single-source spanning tree that concurrently approximates a shortest-path tree (between the source and all other points) and a minimum spanning tree (for the total weight). They proved that Steiner points help to obtain exponential improvement on the lightness of \textsf{SLT}s in a general metric space~\cite{elkin2015steiner}, and quadratic improvement on the lightness in Euclidean spaces~\cite{Solomon15}.

\subparagraph{Our Contribution.}
In this work, we show that for every finite set of points in the plane and every $\eps>0$, there exists a Euclidean Steiner $(1+\eps)$-spanner of lightness $O(\eps^{-1})$ (Theorem~\ref{thm:UB}). This matches the lower bound for $d=2$, and thereby closes the gap between lower and upper bounds of lightness for Euclidean $(1+\eps)$-spanners in $\mathbb{R}^2$.

On the one hand, without Steiner points, the greedy spanner in Euclidean plane has lightness $\tilde{O}(\eps^{-2})$, which is the best possible up to lower-order terms~\cite{le2019truly}.  On the other hand, with Steiner points, recent constructions achieved linear dependence on $\eps^{-1}$, while loosing the independence from $n$; see~\cite{BT-less-21, le2020light}.
Our result is the first that constructs Steiner spanners with sub-quadratic dependence on $\eps^{-1}$ without any dependence on $n$ or any assumption on the point set; in fact our result achieves the optimal dependence on $\eps$.

\begin{theorem}\label{thm:UB}
For every finite point sets $S\subset \mathbb{R}^2$ and $\eps>0$, there exists a Euclidean Steiner $(1+\eps)$-spanner of weight $O(\frac{1}{\eps}\,\|\MST(S)\|)$.
\end{theorem}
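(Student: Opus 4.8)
The plan is to build the spanner hierarchically over a quadtree-like decomposition of a bounding box of $S$, so that the total weight telescopes against $\|\MST(S)\|$ rather than accumulating a $\log\Delta$ or $\log n$ factor. First I would normalize so that $\MST(S)$ has a known scale, take an axis-aligned bounding square $Q_0$ of side $O(\diam(S))$, and recursively partition it into a quadtree whose leaves are ``small'' cells. At each level $\ell$ the relevant grid cells have side length roughly $\eps^{-1}$ times the ``local feature size'', but the key quantitative point is the standard packing fact that the cells intersecting $\MST(S)$ at level $\ell$ have total perimeter $O(\|\MST(S)\|)$ at every level simultaneously only if we charge carefully; to avoid a $\log$ blowup, I would instead use a \emph{compressed} quadtree / a window partition in the spirit of Le--Solomon, where the number of levels contributing to a fixed portion of the MST is $O(1)$ on average.

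The second ingredient is the single-source primitive: for each cell $C$ we need to connect a net of ``portals'' on $\partial C$ (one net per direction, roughly $O(\eps^{-1/2})$ portals per side, spaced so that crossing a cell via the nearest portal costs a $(1+\eps)$ detour) to the points of $S$ inside $C$ and to the portals of the child cells, using a tree of weight $O(\eps^{-1})$ times the ``demand'' inside $C$. This is exactly where the generalized shallow-light trees come in: a classical SLT guarantees a shortest-path bound from one source; here I would prove a multi-source / multi-sink variant that simultaneously (i) keeps each input point within a $(1+\eps)$-factor of its straight-line distance to each boundary portal it might route to, and (ii) has total weight $O(\eps^{-1})$ times the cost of an MST of the portals plus points in $C$. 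Summed over the quadtree, the portal-MSTs at the cells meeting $\MST(S)$ have total weight $O(\|\MST(S)\|)$ by the perimeter/packing bound, giving the claimed $O(\eps^{-1}\|\MST(S)\|)$.

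For the stretch analysis I would route a query pair $p,q\in S$ as follows: go from $p$ to the boundary of the smallest quadtree cell containing both, hopping through portals of ancestor cells, then back down to $q$; each hop is along the (generalized) SLT, so incurs only a $(1+\eps)$ multiplicative loss, and crucially the number of ``turns'' is $O(1)$ because directional spanners let us keep hops nearly collinear with $pq$ — the detour at each portal transition is an \emph{additive} $O(\eps)\cdot(\text{cell side})$, and the cell sides along the path form a geometric series bounded by $O(d(p,q))$. Rescaling $\eps$ by a constant absorbs the finitely many $(1+\eps)$ factors.

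The main obstacle, as the abstract hints, is the weight bound, specifically eliminating the dependence on $\log\Delta$: a naive quadtree charges every level of the hierarchy that a given MST edge passes through, and there can be $\Theta(\log\Delta)$ such levels. The fix I expect to need is a \emph{modified window partitioning} where the decomposition is adapted to $\MST(S)$ itself — roughly, only $O(1)$ ``active'' scales are maintained near any portion of the MST, long empty corridors are collapsed, and the Steiner portals on the interface of a collapsed corridor are connected directly by a single light tree rather than rebuilt at every intermediate scale. Making this precise — defining the windows, bounding the number of active windows per unit length of the MST, and showing the collapsed-corridor trees still support $(1+\eps)$-routing — is the technical heart of the argument, and is where the generalized shallow-light tree lemma and the directional-spanner decomposition are combined to get a telescoping, $n$-free bound.
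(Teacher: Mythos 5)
Your plan is a hierarchical (compressed-quadtree) portal scheme, and the step you yourself flag as ``the technical heart'' --- showing that only $O(1)$ scales get charged per unit length of $\MST(S)$ --- is exactly the gap, and it is not clear it can be repaired within that framework. Compressing empty corridors (the compressed-quadtree trick) only removes levels in regions containing no input; it does nothing about the fact that every edge of the MST is crossed by cells of $\Theta(\log\Delta)$ distinct scales, each of which your scheme equips with portal nets and connecting trees, so the weight bound $O(\eps^{-1}\|\MST(S)\|)$ is asserted rather than proved. The paper avoids a hierarchy of scales altogether: it first splits into $O(\eps^{-1/2})$ direction classes (Lemma~\ref{lem:dir}), and for each class builds a \emph{single, non-hierarchical} modified window partition of the faces of a rectilinear MST into fuzzy staircases and tame histograms, in which vertical edges are replaced by slope-$\pm\Lambda\eps^{-1/2}$ paths. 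The point of that modification is an asymmetric accounting (Lemma~\ref{lem:tiling}): total perimeter $O(\eps^{-1/2}\per)$ but \emph{horizontal} perimeter only $O(\per)$, and the per-face spanner (Lemma~\ref{lem:hist}) costs $O(\per(F)+\eps^{-1/2}\hper(F))$, so only the horizontal part ever picks up the $\eps^{-1/2}$ factor. Your proposal has no analogue of this mechanism, and without it the telescoping you invoke does not happen.

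There is also a quantitative problem in the stretch/weight trade-off of the portal step. With $O(\eps^{-1/2})$ portals per cell side, snapping to the nearest portal costs an additive $\Theta(\eps^{1/2}\cdot\mathrm{side})$, which is $O(\eps)\,\|pq\|$ only when $pq$ crosses that side nearly perpendicularly; serving \emph{all} directions from one net would force spacing $O(\eps\cdot\mathrm{side})$, i.e.\ $\eps^{-1}$ portals per side, and the weight budget collapses. So you are implicitly committed to a per-direction construction with a per-direction weight budget of $O(\eps^{-1/2}\|\MST(S)\|)$ --- which is precisely the statement (Lemma~\ref{lem:dir}) the paper proves via staircase SLTs and the window partition, and which your sketch never establishes. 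Note also that the SLT error model the paper exploits is distance-proportional (the source sits at distance $\eps^{-1/2}$ times the target segment's length, so the detour is $O(\eps)$ times the distance traveled), not the side-proportional portal-snapping error in your routing argument; conflating the two hides where the $(1+\eps)$ guarantee actually comes from for pairs whose separation is comparable to a single cell.
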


\subparagraph{Outline.}
We review previous results on angle-bounded paths, \textsf{SLT}s, and window partitions that we use in our construction (Section~\ref{sec:pre}). The tight bound in Theorem~\ref{thm:UB} relies on three new ideas, which may be of independent interest: First, we generalize Solomon's \textsf{SLT}s to points on a staircase path (Section~\ref{sec:SLT}). Second, we reduce the proof of Theorem~\ref{thm:UB} to the construction of ``directional'' spanners, in each of $\Theta(\eps^{-1/2})$ directions, where it is enough to establish the stretch factor $1+\eps$ for point pairs $s,t\in S$ where $\mathrm{dir}(st)$ is in an interval of size $\sqrt{\eps}$ (Section~\ref{sec:redux}). Combining the first two ideas, we show how to construct light directional spanners for points on a staircase path (Section~\ref{sec:staircases}).
In each direction, we start with 
a rectilinear MST of $S$, and augment it into a directional spanner.
We modify the classical window partition of a rectilinear polygon into histograms by replacing vertical edges with angle-bounded paths; this is the final piece of the puzzle. Near-vertical paths (with slopes
$\pm \, \eps^{-1/2}$) allow sufficient flexibility to reduce the weight of a histogram subdivision, and we can construct directional $(1+\eps)$-spanners for each face of such a subdivision (Section~\ref{sec:hist}).

\section{Preliminaries}
\label{sec:pre}

The \emph{direction} of a line segment $ab$ in the plane, denoted $\mathrm{dir}(ab)$, is the minimum counterclockwise angle $\alpha\in [0,\pi)$ that rotates the $x$-axis to be parallel to $ab$. The set of possible directions $[0,\pi)$ is homeomorphic to the unit circle $\mathbb{S}^1$, and an interval $(\alpha,\beta)$ of directions corresponds to the counterclockwise arc of $\mathbb{S}^1$ from $\alpha\pmod{\pi}$ to $\beta\pmod{\pi}$.

\subparagraph{Angle-Bounded Paths.}
For $\delta\in (0,\pi/2]$, a polygonal path $(v_0,\ldots, v_m)$ is \emph{$(\theta\pm \delta)$-angle-bounded} if the direction of every segment $v_{i-1}v_i$ is in the interval $[\theta-\delta,\theta+\delta]$; see Fig.~\ref{fig:1}(a).
Borradaile and Eppstein~\cite[Lemma~5]{BorradaileE15} observed that the
weight of a $(\theta\pm \delta)$-angle-bounded $st$-path is at most $(1+O(\delta^2))\|st\|$.
We prove this observation in a more precise form. The quadratic growth rate in $\delta$ is due to the Taylor estimate $\sec(x)=\frac{1}{\cos(x)}\leq 1+x^2$ for $x\leq \frac{\pi}{4}$.

\begin{lemma}\label{lem:angle2}
Let $a,b\in \mathbb{R}^2$ and let $P=v_0 v_1 \ldots v_m$ be an $ab$-path such that $P$ is monotonic in direction $\overrightarrow{ab}$ and $|\mathrm{dir}(v_{i-1}v_i)-\mathrm{dir}(ab)|\leq \delta\leq \frac{\pi}{4}$, for $i=1,\ldots ,m$. Then $\|P\|\leq (1+\delta^2)\|ab\|$.
\end{lemma}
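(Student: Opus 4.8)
The plan is to set up coordinates so that $\overrightarrow{ab}$ is the positive $x$-axis, and then bound the length of $P$ by comparing each edge's length to its horizontal projection. First I would rotate so that $\mathrm{dir}(ab) = 0$, placing $a$ at the origin and $b = (\|ab\|, 0)$. Write each edge $v_{i-1}v_i$ as a vector $(\Delta x_i, \Delta y_i)$. The hypothesis that $|\mathrm{dir}(v_{i-1}v_i) - \mathrm{dir}(ab)| \le \delta$ means the angle $\phi_i$ that this edge makes with the $x$-axis satisfies $|\phi_i| \le \delta \le \pi/4$ (here I should be slightly careful about the convention that directions live in $[0,\pi)$ versus signed angles, but the monotonicity assumption resolves any ambiguity: since $P$ is monotone in direction $\overrightarrow{ab}$, each $\Delta x_i \ge 0$, so each edge genuinely makes a signed angle in $[-\delta, \delta]$ with the positive $x$-axis).

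The key computation is then: the length of edge $i$ is $\|v_{i-1}v_i\| = \Delta x_i \sec \phi_i$ when $\Delta x_i > 0$ (and the edge has length $|\Delta y_i| \le \Delta x_i \tan\delta$-controlled... actually if $\Delta x_i = 0$ we need a separate nod, but monotonicity plus the direction bound with $\delta < \pi/2$ forces $\Delta x_i > 0$ for every edge, or we can absorb degenerate edges). Using the Taylor/convexity estimate $\sec\phi \le 1 + \phi^2 \le 1 + \delta^2$ for $|\phi| \le \pi/4$ quoted in the text, we get $\|v_{i-1}v_i\| \le (1+\delta^2)\,\Delta x_i$. Summing over $i = 1, \dots, m$ and using that the horizontal projections telescope, $\sum_i \Delta x_i = \|ab\|$ (this is exactly where monotonicity in direction $\overrightarrow{ab}$ is used — all $\Delta x_i$ have the same sign, so there is no cancellation and the total equals the net displacement), yields $\|P\| = \sum_i \|v_{i-1}v_i\| \le (1+\delta^2) \sum_i \Delta x_i = (1+\delta^2)\|ab\|$.

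The main obstacle, such as it is, is bookkeeping rather than depth: one must state precisely the inequality $\sec\phi \le 1 + \phi^2$ on $[-\pi/4, \pi/4]$ (this follows from $\cos\phi \ge 1 - \phi^2/2 \ge \frac{1}{1+\phi^2}$ for $|\phi| \le \pi/4$, the last step because $(1-\phi^2/2)(1+\phi^2) = 1 + \phi^2/2 - \phi^4/2 \ge 1$ when $\phi^2 \le 1$, which holds since $\pi^2/16 < 1$), and one must handle the relationship between the $[0,\pi)$ direction convention and signed angles cleanly. I would dispatch the latter by noting that monotonicity in direction $\overrightarrow{ab}$ means the orthogonal projection of $P$ onto the line through $a$ and $b$ traverses the segment $ab$ monotonically, so each edge contributes a nonnegative amount $\Delta x_i$ to the projection and these sum to $\|ab\|$; combined with the angle bound this pins down each $\phi_i \in [-\delta, \delta]$. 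Everything else is the one-line summation above.
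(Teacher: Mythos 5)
Your proposal is correct and follows essentially the same route as the paper: project each edge onto the line through $a$ and $b$, use monotonicity so the projections sum (without cancellation) to $\|ab\|$, and bound each edge by $\sec\delta\leq 1+\delta^2$ times its projection. Your write-up is in fact a bit more careful than the paper's (which compresses the projection identity and the secant bound into one line), including the explicit verification of $\sec\phi\leq 1+\phi^2$ on $[-\pi/4,\pi/4]$.
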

\begin{proof}
For $i=0,\ldots , m$, let $u_i$ be the orthogonal projection of $v_i$ to the line $ab$, and let $\alpha_i=\mathrm{dir}(v_{i-1}v_i)-\mathrm{dir}(ab)$. Then
$\|ab\|=\sum_{i=1}^m \|u_{i-1}u_i\| =\sum_{i=1}^m \|v_{i-1}v_i\| \sec \angle_i \leq \|P\|\sec \delta\leq (1+\delta^2)\|P\|$, as claimed.
\end{proof}

\begin{figure}[htbp]
 \centering
 \includegraphics[width=0.98\textwidth]{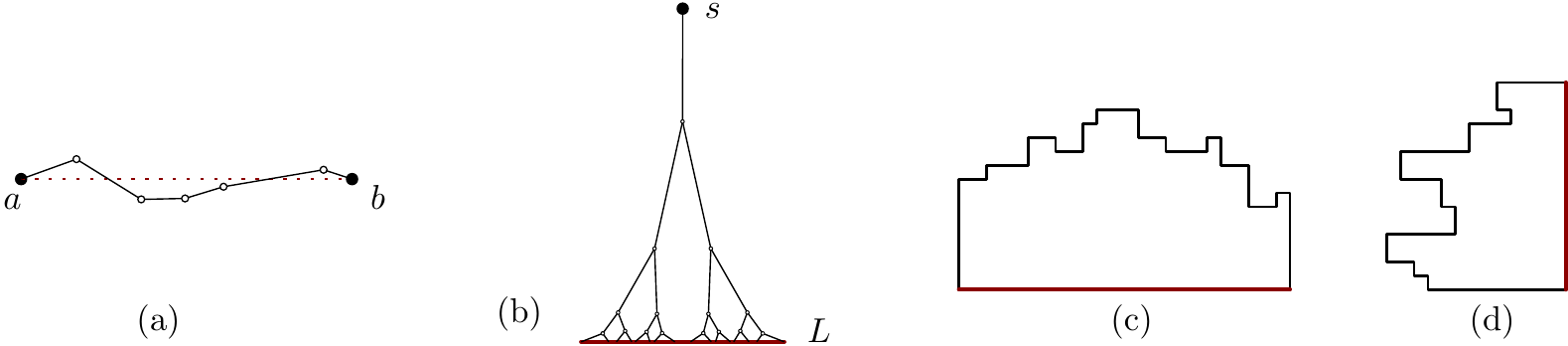}
 \caption{(a) A $(0\pm\delta)$-angle-bounded path.
(b) A shallow-light tree between a source $s$ and a horizontal line segment $L$.
(c)--(d) An $x$- and a $y$-monotone histogram.}
    \label{fig:1}
\end{figure}

\subparagraph{Shallow-Light Trees.}
Shallow-light trees (\textsf{SLT}) were introduced by Awerbuch et al.~\cite{AwerbuchBP90}
and Khuller et al.~\cite{KhullerRY93}. Given a source $s$ and a point set $S$ in a metric space, an \emph{$(\alpha,\beta)$}-\textsf{SLT} is a Steiner tree rooted at $s$ that contains a path of weight at most $\alpha \,\|ab\|$ between the \emph{source} $s$ and any point $t\in S$, and has weight at most $\beta\,\|\MST(S)\|$. We build on the following basic variant of \textsf{SLT} between a source $s$ and a set $S$ of collinear points in the plane; see Fig.~\ref{fig:1}(b).

\begin{lemma}[Solomon~{\cite[Section~2.1]{Solomon15}}]\label{lem:shallow}
Let $0<\eps<1$, let $s=(0,\eps^{-1/2})$ be a point on the $y$-axis, and let $S$ be a set of points in the line segment $L=[-\frac12,\frac12]\times \{0\}$ in the $x$-axis. Then there exists a geometric graph of weight $O(\eps^{-1/2})$ that contains, for every point $t\in L$, an $st$-path $P_{st}$ with $\|P_{st}\|\leq (1+\eps)\,\|st\|$.
\end{lemma}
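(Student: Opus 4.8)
The plan is to build the graph as a Steiner tree over a dyadic hierarchy of subintervals of $L$, augmented with the segment $L$ itself, placing the Steiner points at heights chosen to balance lightness against the straightness of the root-to-leaf paths. First I would set $J=\lceil\tfrac12\log_2(1/\eps)\rceil$ and, for $j=0,\dots,J$, partition $L$ into $2^j$ congruent subintervals of length $2^{-j}$; above the midpoint $c$ of each level-$j$ subinterval I place a Steiner point $(c,h_j)$, where $h_j:=\eps^{-1/2}(r^{\,j}-r^{\,J})/(1-r^{\,J})$ for a fixed ratio $r\in(\tfrac14,\tfrac12)$, say $r=\tfrac13$. Then $h_0=\eps^{-1/2}$, $h_J=0$, and the drops $d_j:=h_j-h_{j+1}$ satisfy $d_j=\Theta(\eps^{-1/2}r^{\,j})$. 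Since the unique level-$0$ point is exactly $s$ and the $2^J=\Theta(\eps^{-1/2})$ level-$J$ points lie on $L$, I take the graph to be $L$ together with all edges joining a level-$j$ point to the two level-$(j{+}1)$ points at the midpoints of its two child subintervals. For $t\in L$, the path $P_{st}$ descends from $s$ through the tree to the level-$J$ point $p$ at the midpoint of the subinterval containing $t$, and then follows $L$ from $p$ to $t$.

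For the weight, the $2^{j+1}$ edges between levels $j$ and $j{+}1$ each have horizontal extent $2^{-(j+2)}$ and vertical extent $d_j$, so, using $\sqrt{a^2+d^2}\le a+d$, they contribute at most $2^{j+1}\big(2^{-(j+2)}+d_j\big)=\tfrac12+2^{j+1}d_j$. Summing over $j$, the horizontal terms total $J/2=O(\log(1/\eps))=O(\eps^{-1/2})$, while the vertical terms total $\sum_j 2^{j+1}d_j=O\big(\eps^{-1/2}\sum_j(2r)^{\,j}\big)=O(\eps^{-1/2})$, the geometric series converging because $r<\tfrac12$. Adding $\|L\|=1$ gives total weight $O(\eps^{-1/2})$.

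For the stretch, the crucial point is that every root-to-leaf tree path has the \emph{same} length $\ell=\sum_{j=0}^{J-1}\sqrt{(2^{-(j+2)})^2+d_j^2}$, because for each $j<J$ the two edges from a level-$j$ point to its level-$(j{+}1)$ children have the same length. Since $\sum_j d_j=h_0-h_J=\eps^{-1/2}$, the excess is $\ell-\eps^{-1/2}=\sum_j\big(\sqrt{(2^{-(j+2)})^2+d_j^2}-d_j\big)\le\sum_j\frac{4^{-(j+2)}}{2d_j}=O\big(\eps^{1/2}\sum_j(4r)^{-j}\big)=O(\eps^{1/2})$, the geometric series converging because $r>\tfrac14$; with $r=\tfrac13$ this excess is at most $\tfrac12\eps^{1/2}$. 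The portion of $P_{st}$ running along $L$ has length at most half a level-$J$ subinterval, i.e.\ $2^{-(J+1)}\le\tfrac12\eps^{1/2}$. Hence $\|P_{st}\|\le\eps^{-1/2}+\tfrac12\eps^{1/2}+\tfrac12\eps^{1/2}=(1+\eps)\eps^{-1/2}\le(1+\eps)\|st\|$, the last inequality because $s$ and $t$ differ by $\eps^{-1/2}$ in the $y$-coordinate.

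The hard part is choosing the decay rate $r$: the weight term $\sum_j 2^{j+1}d_j$ blows up unless the height drops at deep levels are small, pushing $d_j$ to decay fast, whereas the excess term $\sum_j 4^{-(j+2)}/(2d_j)$ blows up unless the early drops are large, pushing $d_j$ to decay slowly — a geometric profile with ratio strictly between $\tfrac14$ and $\tfrac12$ is exactly what makes both series converge, to $O(\eps^{-1/2})$ and $O(\eps^{1/2})$ respectively. In particular the naive ``halve the height at each level'' ($r=\tfrac12$) already forces weight $\Theta(\eps^{-1/2}\log(1/\eps))$, and one genuinely needs $J=\Theta(\log(1/\eps))$ levels to refine $L$ down to scale $\eps^{1/2}$, so the logarithmic factor cannot be removed by truncating the hierarchy.
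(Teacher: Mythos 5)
Your proof is correct: the tree is well defined (the level-$0$ point is $s$, the level-$J$ points lie on $L$), the weight bound $\sum_j\bigl(\tfrac12+2^{j+1}d_j\bigr)+\|L\|=O(\eps^{-1/2})$ is valid since $2r<1$, and the stretch bound is sound because every root-to-leaf path has length exactly $\ell\le \eps^{-1/2}+\tfrac12\eps^{1/2}$ (the excess series converges since $4r>1$), the walk along $L$ costs at most $2^{-(J+1)}\le\tfrac12\eps^{1/2}$, and $\|st\|\ge\eps^{-1/2}$ for every $t\in L$.

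The route differs from the paper's in the construction and the routing, though the balancing principle is the same. The paper places $2^{k+1}$ sample points with spacing about $\eps$ on $L$ and, at each point, anchors a single near-vertical segment whose deviation from vertical is $2^{(j-k)/2}$ at level $j$ of the dyadic hierarchy; an $st$-path ascends from $t_q$ by switching between segments at their crossing points, and the error is charged edge-by-edge per level. You instead place explicit Steiner points above the midpoints of the dyadic cells at heights decaying geometrically with ratio $r=\tfrac13$, route root-to-leaf and then along $L$, and stop the hierarchy already at spacing $\eps^{1/2}$ (which suffices because the additive error only needs to be $O(\eps^{1/2})=O(\eps\|st\|)$, whereas the paper refines down to spacing $\eps$). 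Note that in the paper's construction the vertical extent of a level-$j$ segment is about $\eps^{-1/2}2^{-3j/2}$, i.e., a geometric decay with ratio $2^{-3/2}\in(\tfrac14,\tfrac12)$, so your observation that the decay ratio must lie strictly between $\tfrac14$ and $\tfrac12$ captures exactly the tension both constructions resolve. What your version buys is a cleaner graph (a balanced tree plus $L$) and a transparent weight/excess trade-off; what the paper's anchored-segment version buys is that it adapts almost verbatim to the generalizations needed later (Lemma~\ref{lem:stairs} and Corollary~\ref{cor:stairs}), where $L$ is a staircase or has its vertical edges replaced by angle-bounded paths: there the Steiner segments are hung off $L$ itself and slide along its vertical portions, which would require extra work in a midpoint-tree construction.
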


We note that the weight analysis of the $st$-path $P_{st}$ in an \textsf{SLT} does not use  angle-boundedness. In particular, an \textsf{SLT} may contain short edges of arbitrary directions, but the directions of long edges are nearly to vertical.
In Section~\ref{sec:SLT} below, we generalize the shallow-light trees to obtain
$(1+\eps)$-spanners between points on two staircase paths.

\subparagraph{Stretch Factor of $1+\eps$ Versus $1+O(\eps)$.}
In the geometric spanners we construct, an $st$-path may comprise $O(1)$ subpaths, each of which is angle-bounded or contained in an \textsf{SLT}. For the ease of presentation, we typically establish a stretch factor of $1+O(\eps)$ in our proofs. It is understood that $1+\eps$ can be achieved by a suitable scaling of the constant coefficients.

\subparagraph{Histogram Decomposition.}
A path in the plane is \emph{$x$-monotone} (resp., \emph{$y$-monotone}) if its intersection with
every vertical (resp., horizontal) line is connected.
A \emph{histogram} is a rectilinear simple polygon bounded by an axis-parallel line segment
and an $x$- or $y$-monotone path; see Fig.~\ref{fig:1}(c--d).
It is well known that every rectilinear simple polygon $P$ can be subdivided into histograms (faces) such that every axis-parallel line segment in $P$ intersects (\emph{stabs}) at most three histograms~\cite{Edelsbrunner1984167,Levcopoulos}; such a subdivision is also called a
\emph{window partition}~\cite{Link00,Suri90} of $P$, and can be computed in $O(n\log n)$ time if $P$ has $n$ vertices. The stabbing property implies that the total perimeter of the histograms in such a subdivision is $O(\per(P))$, where $\per(P)$ denotes the perimeter of $P$.

Dumitrescu and T\'oth~\cite{DumitrescuT09} showed that for a finite point set $S\subset \mathbb{R}^2$, one can refine the window partition, while increasing the weight by a constant factor, to construct a graph with constant geometric dilation. The \emph{geometric dilation} of a geometric graph $G$ is $\sup_{a,b\in G}d_G(a,b)/\|ab\|$, where $d_G(a,b)$ denotes the Euclidean length of a shortest path in $G$, and the supremum is taken over all point pairs $\{a,b\}$ at vertices and along edges of $G$. We follow a similar approach here, but we construct a subdivision of ``modified'' histograms (defined in Section~\ref{sec:hist}), where the vertical edges are replaced by angle-bounded paths.

\section{Generalized Shallow Light Trees}
\label{sec:SLT}

In Section~\ref{ssec:stairs}, we generalize Lemma~\ref{lem:shallow}, and construct shallow-light trees between a source $s$ and points on an $x$- and $y$-monotone rectilinear path $L$, which is called a \emph{staircase path}. In Section~\ref{ssec:combination}, we show how to combine two shallow-light trees to obtain a spanner between point pairs on two staircase paths.

\subsection{Single Source and Staircase Chain}
\label{ssec:stairs}

We present a new, slightly modified proof for Solomon's result on \textsf{SLT}s between a single source $s$ and a horizontal line segment, and then adapt the modified proof to obtain a \textsf{SLT} between $s$ and an $x$- and $y$-monotone polygonal chain. In the proof below, we use the Taylor estimates $\cos x\geq 1-x^2/2$ and $\sin x\geq x/2$ for $x\leq \pi/3$.

\begin{figure}[htbp]
 \centering
 \includegraphics[width=0.75\textwidth]{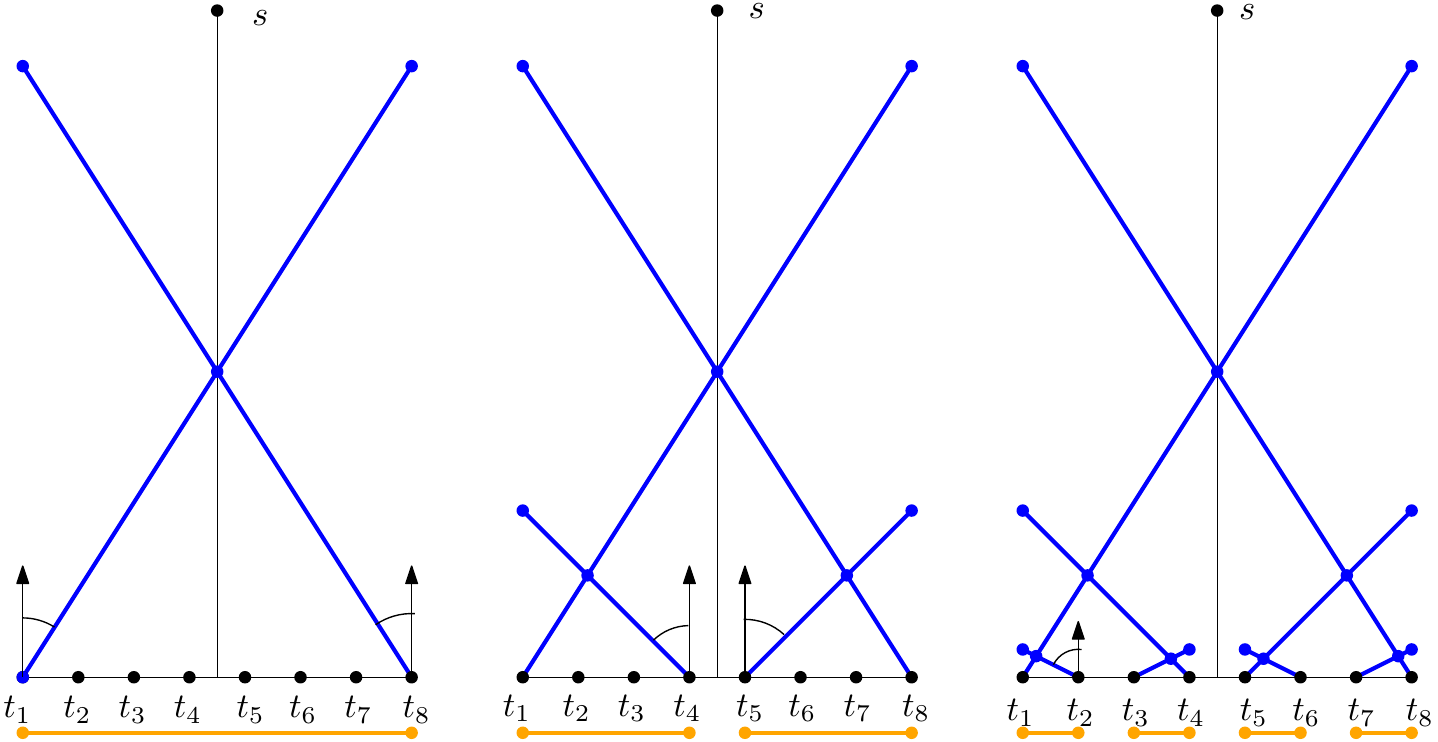}
 \caption{The segments added to graph $G$ at level $j=0,1,2$ for $m=2^3=8$ points.
 The intervals $[t_a,t_b]$ at level $j$ are indicated below the line $L$.}
\label{fig:SLT}
\end{figure}

\begin{proof}[Alternative proof for Lemma~\ref{lem:shallow}]
Assume w.l.o.g.\ that $\eps=2^{-k}$ for $k\in \mathbb{N}$. Let $T=\{t_i: i=1,\ldots , 2^{k+1}\}$ be $2^{k+1}$ points on the line segment $L=[-\frac12,\frac12]\times \{0\}$ with uniform $1/(2^{k+1}-1)< \eps$ spacing between consecutive points.
Consider the standard binary partition of $\{1,\ldots , 2^{k+1}\}$ into intervals, associated with a binary tree: At level 0, the root corresponds to the interval $[1,2^{k+1}]$ of all $2^{k+1}$ integers. At level $j$, we have intervals $[i\cdot 2^{k-j}+1,(i+1)\cdot 2^{k-j}]$ for $i=0,\ldots  ,2^j-1$. Note that if a point $q$ is the left (resp., right) endpoint of an interval at a level $j$, then $q$ is the left (resp., right) endpoint of all descendant intervals that contains $q$.

For every $q\in \{1,\ldots , 2^{k+1}\}$, we define a line segment $\ell_q$ with one endpoint at $t_q$: Let $j\geq 0$ be the smallest level such that $q$ is an endpoint of some interval $I_q$ at level $j$. Let $T_q$ be the line segment along the $x$-axis spanned by the points corresponding to $I_q$. If $q$ is the left (resp., right) endpoint of $I_q$, then let $\ell_q$ be the line segment of direction $\frac{\pi}{2}-2^{(j-k)/2}$ (resp., $\frac{\pi}{2}+2^{(j-k)/2})$
such that its orthogonal projection to the $x$-axis is $T_q$; see Fig.~\ref{fig:SLT}.
Note that for $j=0$, we use directions $\frac{\pi}{2}\pm 2^{-k/2}=\frac{\pi}{2}\pm \sqrt{\eps}$.
Let $G$ be the union of segments $\ell_q$ for $q=1,\ldots , 2^{k+1}$,
the horizontal segment $L$, and the vertical segment from $s$ to the origin.

\smallskip\noindent\emph{Lightness analysis.}
We show that $\|G\|=O(\eps^{-1/2})$. We have $\|L\|=1$, and the length of the vertical segment between $s$ and the origin is $\eps^{-1/2}$. At level $j$ of the binary tree, we construct $2^j$ segments $\ell$, each of length $\|\ell\|\leq 2^{-j}/\sin(2^{(j-k)/2})\leq 2\cdot 2^{(k-3j)/2}$. Summation over all levels yields
$\sum_{j=0}^{k} 2^{j}\cdot 2\cdot 2^{(k-3j)/2}= 2^{k/2}\cdot 2\cdot \sum_{j=0}^k 2^{-j/2}= O(2^{k/2}) = O(\eps^{-1/2})$.

\smallskip\noindent\emph{Source-stretch analysis.}
We show that $G$ contains an $st_q$-path of length $(1+O(\eps))\|st_q\|$ for all $q=1,\ldots, 2^{k+1}$. First note that $\|st_q\|\geq \eps^{-1/2}$, as the distance between $s$ and $L$ is $\eps^{-1/2}$. For each interval $[t_a,t_b]$ in the binary tree, $\ell_a$ and $\ell_b$ have positive and negative slopes, respectively, and so they cross above the interval $[t_a,t_b]$. Consequently, for every point $t_q$, the union of the $k+1$ segments corresponding to the intervals that contain $t_q$ must contain a $y$-monotonically increasing path $P_q$ from $t_q$ to $s$. The $y$-projection of this path has length $\eps^{-1/2}$. Consider one edge $e$ of $P_q$ along a segment $\ell$ at level $j$, which has direction $\frac{\pi}{2}\pm \alpha=\frac{\pi}{2}\pm 2^{(j-k)/2}$. Then the difference between the length of $e$ and the $y$-projection of $e$ is
$\|e\|(1-\cos \alpha)
\leq \|\ell\|(1-\cos\alpha)
\leq 2^{-j} \frac{1-\cos\alpha}{\sin \alpha}
\leq 2^{-j} \frac{\alpha^2/2}{\alpha/2}
= 2^{-j} \alpha = 2^{-j}\cdot 2^{(j-k)/2} =2^{-(j+k)/2}$.
Since $P_q$ contains at most one edge in each level, summation over all edges of $P_q$ yields
$\sum_{j=0}^{k} 2^{-(j+k)/2} = 2^{-k/2} \sum_{j=0}^{k} 2^{-j/2} = O(\eps^{1/2})\leq \|st_q\|\cdot O(\eps)$.

Finally, for an arbitrary point $t\in L$, we have $\|st\|\geq \eps^{-1/2}$, and $G$ contains an $st$-path that consists of an $st_q$-path from $s$ to the point $t_q$ closest to $t$, followed by the horizontal segment $t_qt$ of weight $\|t_qt\|< 1/2^k\leq \eps$. The total weight of this path is $(1+O(\eps))\|st\|$.
After suitable scaling of the constant coefficients, $G$ contains a path of weight at most $(1+\eps)\|st\|$ for any $t\in L$, as required.
\end{proof}

\begin{figure}[htbp]
 \centering
 \includegraphics[width=0.8\textwidth]{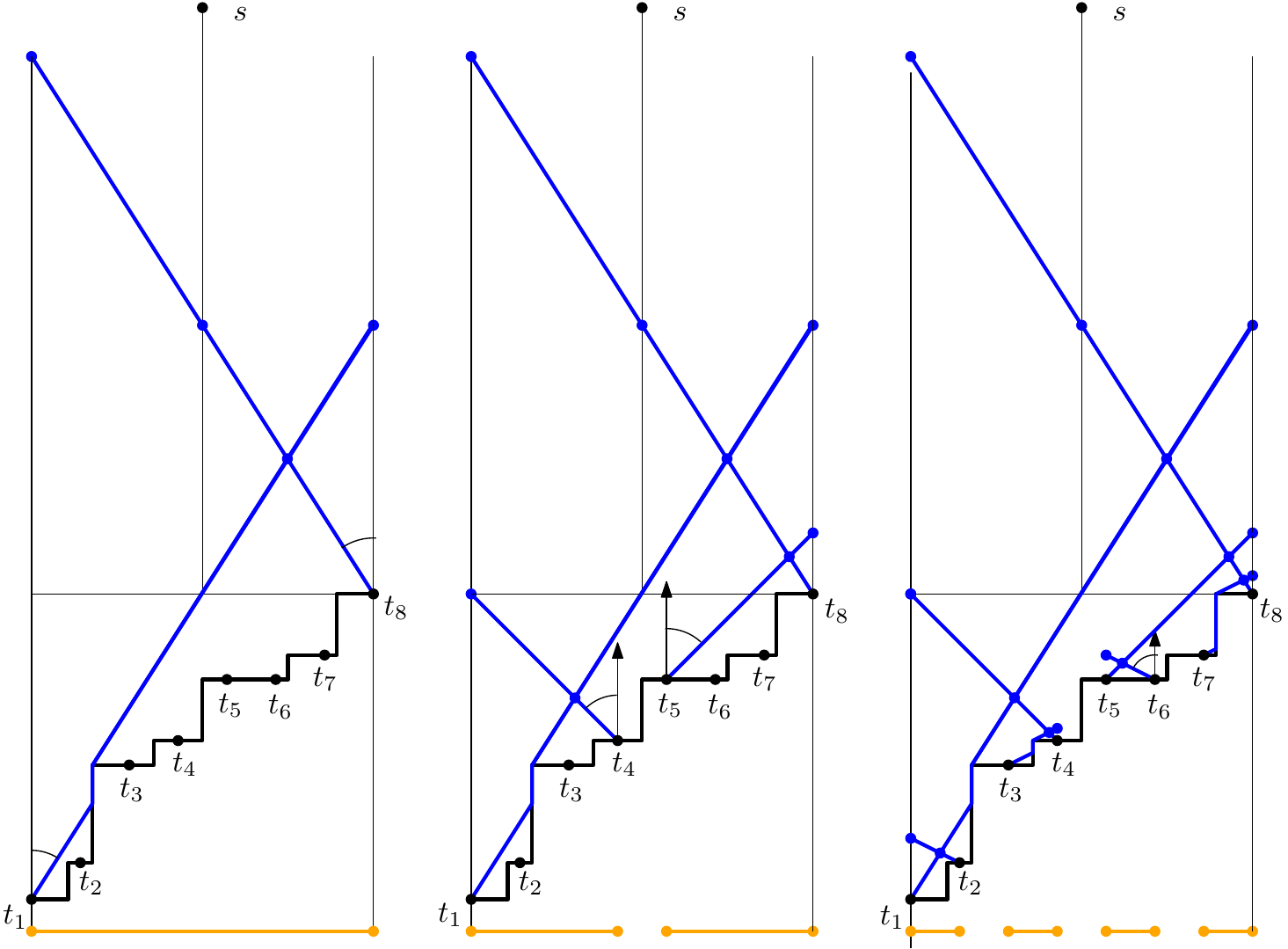}
 \caption{The paths $\gamma_q$ added to graph $G$ at level $j=0,1,2$ for $m=2^3=8$ points.
 The intervals $[t_a,t_b]$ at level $j$ are indicated below the staircase path $L$.}
    \label{fig:SLT2}
\end{figure}

\begin{lemma}\label{lem:stairs}
Let $0<\eps<1$, let $s=(0,\eps^{-1/2})$ be a point on the $y$-axis, and let $L$ be an $x$- and $y$-monotone increasing staircase path between the vertical lines $x=\pm \,\frac12$, such that the right endpoint of $L$ is $(\frac12,0)$ on the $x$-axis. Then there exists a geometric graph $G$ comprised of $L$ and additional edges of weight $O(\eps^{-1/2})$ such that $G$ contains, for every $t\in L$, an $st$-path $P_{st}$ with $\|P_{st}\|\leq (1+O(\eps))\,\|st\|$.
\end{lemma}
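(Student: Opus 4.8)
The plan is to mimic the alternative proof of Lemma~\ref{lem:shallow} as closely as possible, replacing the horizontal segment $L$ with the staircase path and straight segments $\ell_q$ with suitable near-vertical paths $\gamma_q$ that ``hug'' the staircase. First I would set up the same dyadic machinery: assume $\eps=2^{-k}$, place $2^{k+1}$ sample points $t_1,\dots,t_{2^{k+1}}$ along $L$ itself, taken at uniform $x$-spacing $1/(2^{k+1}-1)<\eps$ (so consecutive samples differ in $x$-coordinate by at most $\eps$, and since $L$ is $y$-monotone increasing with total height at most $1$, consecutive samples also differ in arc-length along $L$ by $O(\eps)$). Build the same binary tree of intervals $[t_a,t_b]$. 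For each $q$ that is the left (resp.\ right) endpoint of its minimal interval $I_q$ at level $j$, instead of a single segment of direction $\tfrac\pi2 \mp 2^{(j-k)/2}$ whose $x$-projection is the horizontal span $T_q$, I would use a path $\gamma_q$ starting at $t_q$ that goes ``up and over'' $I_q$: concretely, $\gamma_q$ is the polygonal path obtained by walking along the staircase portion of $L$ spanned by $I_q$ but lifted so that each horizontal step of $L$ of length $w$ is replaced by a segment of direction $\tfrac\pi2\mp 2^{(j-k)/2}$ with horizontal extent $w$, while vertical steps of $L$ are copied verbatim (or replaced by near-vertical segments of the same tiny slope). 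The point is that $\gamma_q$ has the same $x$-extent as $I_q$, is $y$-monotone increasing, and its total rise equals (height of staircase over $I_q$) plus $\big(\text{$x$-extent of }I_q\big)\cdot\cot(2^{(j-k)/2})$.

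The lightness analysis carries over almost verbatim. The horizontal steps of $L$ contribute, across all of $\gamma_q$ at level $j$, a total $x$-extent of at most $1$ per level (the level-$j$ intervals partition $L$), and each unit of $x$-extent costs $1/\sin(2^{(j-k)/2})\le 2\cdot 2^{(k-j)/2}$ in $\gamma_q$ length; multiplied by nothing further (it's already a total over the level) this gives $O(2^{(k-j)/2})$ per level, and summing $\sum_{j=0}^k 2^{(k-j)/2}=O(2^{k/2})=O(\eps^{-1/2})$. The vertical steps of $L$ copied into the $\gamma_q$'s contribute at most (total height of $L$) $=O(1)$ per level, hence $O(k)=O(\log\eps^{-1})$ total --- wait, that is too much, so here I would instead route the $\gamma_q$ paths to reuse the edges of $L$ itself for the vertical portions rather than duplicating them; since $L$ is already in $G$, each $\gamma_q$ only contributes its \emph{new} near-horizontal-lifted segments, and the vertical parts are shared with $L$ and counted once in $\|L\|=O(1)$. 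Adding the vertical segment from $s$ to the top of $L$ (length $\eps^{-1/2}+O(1)$) gives $\|G\|=O(\eps^{-1/2})$ as claimed.

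For the source-stretch analysis, fix $t\in L$ and let $t_q$ be the nearest sample. As before, for each interval $I$ containing $t_q$ the left-path and right-path cross somewhere above $I$ (both are $y$-monotone, one leaning left, one leaning right, with matching $x$-extent $I$), so the union of the $k+1$ relevant paths contains a $y$-monotone increasing walk $P_q$ from $t_q$ up to $s$. Its total $y$-rise is $\eps^{-1/2}+O(1)$, and $\|st_q\|\ge\eps^{-1/2}$. I then bound $\|P_q\|$ minus its $y$-rise. Each maximal sub-path of $P_q$ lying on level-$j$ lifted segments has some $x$-extent $\Delta_j$, and along it the excess of length over $y$-rise is $\Delta_j(\csc\alpha_j-\cot\alpha_j)=\Delta_j\tan(\alpha_j/2)\le\Delta_j\cdot\alpha_j$ with $\alpha_j=2^{(j-k)/2}$; since $\sum_j\Delta_j\le 1$ (the $x$-extents telescope down the chain of nested intervals, each at most the width of the previous), but more carefully $\Delta_j$ is at most the $x$-extent of the level-$j$ interval containing $t_q$, which is $2^{k-j}/(2^{k+1}-1)\le 2^{-j}$, so the excess is $\le\sum_{j=0}^k 2^{-j}\cdot 2^{(j-k)/2}=2^{-k/2}\sum_j 2^{-j/2}=O(\eps^{1/2})$. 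Vertical portions of $P_q$ borrowed from $L$ contribute zero excess. Hence $\|P_q\|\le(\eps^{-1/2}+O(1))+O(\eps^{1/2})$, and since $\|st_q\|\ge\eps^{-1/2}$ and (because $s$ is directly above the origin and $t_q$ has $|x|\le\tfrac12$, $y=O(1)$) also $\|st_q\|\le\eps^{-1/2}+O(1)\le(1+O(\eps))\eps^{-1/2}$, we get $\|P_q\|\le(1+O(\eps))\|st_q\|$. Finally append the sub-path of $L$ from $t_q$ to $t$, of length $O(\eps)\le O(\eps)\|st\|$, to reach any $t\in L$, and rescale constants.

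The main obstacle --- and the subtlety that must be handled carefully --- is precisely the weight bookkeeping for the \emph{vertical} portions of the staircase: a naive lift that duplicates $L$ inside every $\gamma_q$ would cost $\Theta(\log\eps^{-1})$ extra, destroying the bound. The fix (sharing the edges of $L$ across all the paths, so that $\gamma_q$ only contributes its novel near-horizontal-leaning segments) needs to be stated cleanly, and one must check that the resulting $P_q$ is still a genuine connected $y$-monotone path in $G$ --- i.e.\ that where a lifted segment of $\gamma_q$ meets a vertical edge of $L$, the geometry actually lines up. This is where choosing $\gamma_q$ to have exactly the $x$-extent of $I_q$ and to agree with $L$ at the endpoints of each horizontal step of $L$ (so the lifted copy and $L$ share the vertices at the tops/bottoms of $L$'s vertical steps) is essential; with that alignment the crossing argument for left- and right-leaning paths goes through unchanged, and the rest is the same Taylor-estimate computation as in Lemma~\ref{lem:shallow}.
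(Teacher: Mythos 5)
Your construction is essentially the paper's: lift the dyadic segments of the single-source \textsf{SLT} into near-vertical paths that reuse the vertical edges of $L$, so that only the tilted pieces (total $x$-extent $|T_q|$ per path, hence $O(\eps^{-1/2})$ overall) and the segment from $s$ down to $L$ count as new weight. However, there is a genuine gap in your stretch analysis: you assume $L$ has height $O(1)$ (``total height at most $1$''), which is not part of the hypothesis --- the staircase's left end may lie arbitrarily far below the $x$-axis, and this is exactly how the lemma is used later (staircases hanging below the rectangle in Lemma~\ref{lem:combine2} and in the recursion of Lemma~\ref{lem:staircase}). Under that unwarranted assumption you claim consecutive samples are $O(\eps)$ apart in arc-length, that the connector from $t$ to $t_q$ has length $O(\eps)$, and that $\|st_q\|\le \eps^{-1/2}+O(1)\le(1+O(\eps))\eps^{-1/2}$; the last inequality is false even for bounded height, since $(1+O(\eps))\eps^{-1/2}=\eps^{-1/2}+O(\sqrt{\eps})$, and your bookkeeping would only yield stretch $1+O(\sqrt{\eps})$. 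The correct accounting, which the paper uses, is to charge all vertical travel (along $L$ and along the shared vertical edges inside the $\gamma$-paths) against the $y$-difference itself: the $y$-rise of $P_q$ is exactly $\eps^{-1/2}+|y(t_q)|\le\|st_q\|$, vertical edges contribute zero excess, and the only error terms are the $O(\sqrt{\eps})$ tilt excess plus the $O(\eps)$ of horizontal edges between $t$ and $t_q$ (choosing $t_q$ to the right of $t$), both absorbed by $\|st\|\ge\eps^{-1/2}+|y(t)|\ge\eps^{-1/2}$.

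A second, related flaw is in the lift itself. You replace \emph{each} horizontal step of $L$ of width $w$ by a tilted segment of the same horizontal extent and copy the vertical steps ``verbatim,'' then propose to avoid the $\Theta(\log\eps^{-1})$ overhead by sharing those vertical pieces with $L$. But after the first tilted segment the lifted path lies strictly above $L$, so its vertical pieces are translates of $L$'s vertical edges, not edges of $L$; your claim that the lifted copy ``agrees with $L$ at the endpoints of each horizontal step'' is geometrically inconsistent with the lift as described (a tilted segment over the same horizontal extent rises strictly more than the horizontal step). The sharing trick only works with the paper's formulation: from $t_q$ the path travels in the fixed direction $\frac{\pi}{2}-2^{(j-k)/2}$ until it actually meets a vertical edge of $L$, follows that edge of $L$ itself to its upper endpoint, and then resumes the fixed direction. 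With that definition the path stays weakly above $L$, its vertical portions genuinely lie on $L$, its tilted portions have total weight $\|\ell_q\|$ as in Lemma~\ref{lem:shallow}, and the crossing and excess arguments go through; as written, your construction does not achieve the claimed weight bound without this correction.
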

We can adjust the construction above as follows; refer to Fig.~\ref{fig:SLT2}.
\begin{proof}
Assume w.l.o.g.\ that $\eps=2^{-k}$ for some $k\in \mathbb{N}$.
Let $T=\{t_i: i=1,\ldots , 2^{k+1}\}$ be $2^{k+1}$ points in $L$ on equally spaced vertical lines, with spacing $1/(2^{k+1}-1)< \eps$.
Consider the standard binary partition of  $\{1,\ldots , 2^{k+1}\}$ into intervals as in the previous proof.

For every $q\in \{1,\ldots , 2^{k+1}\}$, we define a polygonal path $\gamma_q$ with one endpoint at $t_q$; see Fig.~\ref{fig:SLT2}. Let $j\geq 0$ be the smallest level such that $t_q$ is an endpoint of some interval
$I_q$ at level $j$. If $t_q$ is the right endpoint of $I_q$, then let $\gamma_q$
be the line segment of direction $\frac{\pi}{2}+2^{(j-k)/2}$ such that its
$x$-projection is $T_q$. If $t_q$ is the left endpoint of $I_q$, then $\gamma_q$ will be an $x$- and $y$-monotone path whose $x$-projection is $T_q$, and its edges will be vertical segments along $L$ and segments of direction $\alpha_q=\frac{\pi}{2}-2^{(j-k)/2}$. Specifically, $\gamma_q$ starts from $t_q$ with a line of direction $\alpha_q$. Whenever $\gamma_q$ encounters a vertical edge of $L$, it follows it upward until its upper endpoint, and then continues in direction $\alpha_q$.

Let $G$ be the union of all paths $\gamma_q$ for $q=1,\ldots , 2^{k+1}$, as well as the path $L$, and the vertical segment from $s$ to the origin. This completes the construction of $G$.

\smallskip\noindent\emph{Lightness analysis.}
We show that $\|G\|=\|L\|+O(\eps^{-1/2})$. The distance between $s$ and $L$ is $\eps^{-1/2}$. For every $q\in \{1,\ldots, 2^{k+1}\}$, the path $\gamma_q$ is composed of vertical segments along $L$, and nonvertical segments whose total weight is the same as $\|\ell_q\|$ in the proof of Lemma~\ref{lem:shallow}, where we have seen that
$\sum_{q=1}^{2^{k+1}} \|\ell_q\| = O(\eps^{-1/2})$. Consequently, $\|G\|=\|L\|+O(\eps^{-1/2})$.

\smallskip\noindent\emph{Source-stretch analysis.}
We show that $G$ contains an $st_q$-path of weight $(1+(\eps))\|st_q\|$ for all $q=1,\ldots, 2^{k+1}$. Denoting $y(t_q)$ the $y$-coordinate of point $t_q$, we have $\|st_q\|\geq \eps^{-1/2}+|y(t_q)|$.
For each interval $[t_a,t_b]$ in the binary tree, the paths $\gamma_a$ and $\gamma_b$ cross above the portion of $L$ between $t_a$ and $t_b$. Consequently, for every point $t_q$, the union of the $k+1$ paths $\gamma$ corresponding to the intervals that contain $t_q$ must contain a $y$-monotonically increasing path $P_q$ from $t_q$ to $s$.
The $y$-projection of this path has length $\eps^{-1/2}+|y(t_q)|$. Some of the edges of this path are vertical.
Consider the union of all nonvertical edges $e$ of $P_q$ along a path $\gamma$ at level $j$,
which all have direction $\frac{\pi}{2}\pm 2^{(j-k)/2}$. The difference
between the length of $e$ and the $y$-projection of $e$ is bounded by the same analysis as in the proof of Lemma~\ref{lem:shallow}. Summation over all levels yields $O(\eps^{1/2})\leq \|st_q\|\cdot O(\eps)$.

Finally, for an arbitrary point $t\in L$, we have $\|st\|\geq |y(s)-y(t)|=\eps^{-1/2}+|y(t)|$, and $G$ contains an $st$-path that consists of an $st_q$-path from $s$ to the closest point $t_q$ to the right of $t$, followed by an $x$- and $y$-monotone path along $L$ in which the total length of the horizontal edges is bounded by $1/2^k\leq \eps$ (and the length of vertical segment might be arbitrary). We use the lower bound $\|st\|\geq \eps^{-1/2}+|y(t)|$. The vertical segments between $t_q$ and $t$ do not contribute to the error term $\|st\|-(\eps^{-1/2}+|y(t)|)$. The analysis in the proof of Lemma~\ref{lem:shallow}
yields $\|st\|-(\eps^{-1/2}+|y(t)|)\leq O(\sqrt{\eps})\leq O(\eps)\|st\|$.
\end{proof}

Note that the source-stretch analysis assumed that the vertical edges of an $st$-path (along the vertical edges of $L$)
do not accumulate any error. Consequently, the same analysis carries over if we replace the vertical edges of $L$ by $(\frac{\pi}{2}\pm \frac{\sqrt{\eps}}{2})$-angle-bounded paths.
The key observation is that in the proof of Lemma~\ref{lem:stairs}, all nonvertical edges have directions that differ from vertical (i.e., from $\frac{\pi}{2}$) by $\sqrt{\eps}$ or more.

\begin{corollary}\label{cor:stairs}
Let $0<\eps<1$, let $s=(0,\eps^{-1/2})$ be a point on the $y$-axis, and let $L$ be a path between the vertical lines $x=\pm \, \frac12$, obtained from an $x$- and $y$-monotone increasing staircase path with the right endpoint at $(\frac12,0)$ on the $x$-axis,
by replacing the vertical edges with  $y$-monotonically increasing
$(\frac{\pi}{2}\pm \frac{\sqrt{\eps}}{2})$-angle-bounded paths. Then there exists a geometric graph $G$ that contains $L$ and additional edges of weight $O(\eps^{-1/2})$ such that $G$ contains, for every $t\in L$, an $st$-path $P_{st}$ with $\|P_{st}\|\leq (1+O(\eps))\,\|st\|$.
\end{corollary}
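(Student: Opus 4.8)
Proof proposal for Corollary~\ref{cor:stairs}.

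\smallskip

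The plan is to rerun the construction in the proof of Lemma~\ref{lem:stairs} almost verbatim. Writing $\eps=2^{-k}$, I would take the same $2^{k+1}$ sample points on equally spaced vertical lines (now lying on the modified path $L$), the same binary partition into intervals, and the same paths $\gamma_q$: for a right endpoint $t_q$ of its level-$j$ interval, $\gamma_q$ is the single segment of direction $\frac{\pi}{2}+2^{(j-k)/2}$ with $x$-projection $T_q$; for a left endpoint, $\gamma_q$ alternates segments of direction $\alpha_q=\frac{\pi}{2}-2^{(j-k)/2}$ with portions of $L$, with the sole modification that whenever $\gamma_q$ meets one of the near-vertical $(\frac{\pi}{2}\pm\frac{\sqrt{\eps}}{2})$-angle-bounded sub-paths that replaced a vertical edge of the original staircase, it rides that sub-path up to its upper endpoint before resuming direction $\alpha_q$. (Each replacement sub-path has the same endpoints as the vertical edge it replaced, hence zero net horizontal displacement.) Set $G=L\cup\bigcup_q\gamma_q$ together with the segment from $s$ to the origin; the edges of $G$ not lying in $L$ are exactly the $\gamma$-segments of direction $\frac{\pi}{2}\pm2^{(j-k)/2}$ and the segment $s$--origin.

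For the lightness I would show these extra edges weigh $O(\eps^{-1/2})$, reusing the geometric series from Lemma~\ref{lem:shallow}. The segment $s$--origin weighs $\eps^{-1/2}$, and the right-endpoint segments have exactly the weights of the segments $\ell_q$ of Lemma~\ref{lem:shallow}. For a left endpoint $t_q$ at level $j$, the direction-$\alpha_q$ segments of $\gamma_q$ must supply an $x$-displacement of $|T_q|$, corrected only by the net horizontal displacement of the near-vertical $L$-portions that $\gamma_q$ rides. Because each ridden portion has zero net displacement over its full height, this correction is bounded by $\tan(\frac{\sqrt{\eps}}{2})$ times the total amount by which $\gamma_q$ enters such portions above their bases, and that amount is at most $\cot(2^{(j-k)/2})$ times the horizontal extent of $L$ already covered inside the slab $T_q$ --- which is at most $|T_q|=2^{-j}$. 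A short calculation then shows the corrections, summed over all $q$, form the same kind of geometric series as the $\|\ell_q\|$ and total $O(\eps^{-1/2})$. Hence $\|G\|=\|L\|+O(\eps^{-1/2})$. This accounting is the step that needs real care; a bound in terms of the total height of $L$ would be hopelessly weak.

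For the source-stretch I would use the \emph{total vertical travel} of a path (the sum of $|\Delta y|$ over its edges) as the common benchmark. First, for each interval $[t_a,t_b]$ the paths $\gamma_a,\gamma_b$ still cross above the part of $L$ between them: $\gamma_b$ leans strictly left by $2^{(j-k)/2}\ge\sqrt{\eps}$, the direction-$\alpha_q$ segments of $\gamma_a$ lean strictly right by the same amount, the near-vertical $L$-portions of $\gamma_a$ lean by at most $\frac{\sqrt{\eps}}{2}<2^{(j-k)/2}$, and both paths sweep the $x$-interval $T_q$ in opposite senses, so continuity forces a crossing. The usual union of the $k+1$ paths over the intervals containing $t_q$ then contains a $y$-monotone increasing $st_q$-path $P_q$ whose total vertical travel is $\eps^{-1/2}+|y(t_q)|\le\|st_q\|$. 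Its excess over this amount splits into the contribution of the direction-$\frac{\pi}{2}\pm2^{(j-k)/2}$ segments --- which is $\sum_j 2^{-(j+k)/2}=O(\sqrt{\eps})$ exactly as in Lemma~\ref{lem:shallow} --- plus the contribution of the ridden $L$-portions, which by Lemma~\ref{lem:angle2} with $\delta=\frac{\sqrt{\eps}}{2}$ exceed their $y$-projections by at most an $O(\eps)$ fraction of their total length $O(\eps^{-1/2}+|y(t_q)|)$, i.e.\ by $O(\eps)\,\|st_q\|$; since $\|st_q\|\ge\eps^{-1/2}$ absorbs the $O(\sqrt{\eps})$ term, $\|P_q\|\le(1+O(\eps))\|st_q\|$. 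Finally, for an arbitrary $t\in L$ I would take the sample $t_q$ nearest to $t$ on its right and follow $P_q$ by the sub-path of $L$ from $t_q$ down to $t$; this combined path has total vertical travel $\eps^{-1/2}+|y(t)|\le\|st\|$, its horizontal edges total at most $1/2^k\le\eps\le\eps\,\|st\|$ (by the sample spacing, as in Lemma~\ref{lem:stairs}), its ridden $L$-portions again cost only an $O(\eps)$ fraction of their length, and $\|st_q\|\le(1+O(\eps))\|st\|$, so the path has length $(1+O(\eps))\|st\|$. Rescaling the constants yields stretch $1+O(\eps)$, as claimed.
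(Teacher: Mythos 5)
Your proposal is correct and follows essentially the same route as the paper, whose proof of the corollary is just the preceding remark that the construction and analysis of Lemma~\ref{lem:stairs} carry over because the non-$L$ segments of each $\gamma_q$ lean away from vertical by at least $\sqrt{\eps}$ while the replacement paths lean by at most $\sqrt{\eps}/2$, so the ridden portions accumulate only an $O(\eps)$-fraction error. Your explicit treatment of the horizontal drift of the ridden portions in the lightness analysis (bounding the entry heights above the bases by roughly $\cot(2^{(j-k)/2})$ times the horizontal extent covered in the slab, rather than by the height of $L$) is a detail the paper leaves implicit, and it is exactly what is needed to keep the additional weight at $O(\eps^{-1/2})$ rather than $O(\mathrm{height}(L))$.
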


\subsection{Combination of Shallow-Light Trees}
\label{ssec:combination}

The combination of two \textsf{SLT}s yields a light $(1+\eps)$-spanner between points on two staircases.

\begin{lemma}\label{lem:combine2}
Let $R$ be an axis-parallel rectangle of width $1$ and height $2\eps^{-1/2}$; and let $L_1$ (resp., $L_2$) be a staircase path from the lower-left (upper-left) corner of $R$ to a point on the vertical line passing through the right side of $R$, lying below (above) $R$; see Fig.~\ref{fig:combinations}. Then there exists a geometric graph comprised of $L_1\cup L_2$ and additional edges of weight $O(\eps^{-1/2})$ that contains an $ab$-path $P_{ab}$ with $\|P_{ab}\|\leq (1+O(\eps))\,\|ab\|$ for any $a\in L_1$ and any $b\in L_2$.
\end{lemma}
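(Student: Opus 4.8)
The idea is to place a source point $s$ at the center-height of $R$ on the left side—more precisely at the point $(0,0)$ if we set up coordinates so that $R=[0,1]\times[-\eps^{-1/2},\eps^{-1/2}]$, with $L_1$ emanating from $(0,-\eps^{-1/2})$ going down-and-right and $L_2$ emanating from $(0,\eps^{-1/2})$ going up-and-right. Then I would apply Lemma~\ref{lem:stairs} (via its scaled/reflected versions) twice: once to build a shallow-light tree $G_1$ between $s$ and the staircase $L_1$, and once to build a shallow-light tree $G_2$ between $s$ and $L_2$. Each of $G_1,G_2$ consists of the respective staircase plus additional edges of weight $O(\eps^{-1/2})$, and each contains, for every $t\in L_i$, an $st$-path of length $(1+O(\eps))\|st\|$. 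The combined graph $G=G_1\cup G_2$ has the stated weight $\|L_1\|+\|L_2\|+O(\eps^{-1/2})$.

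For the stretch bound, fix $a\in L_1$ and $b\in L_2$. The path $P_{ab}$ is the concatenation of the $sa$-path in $G_1$ and the $sb$-path in $G_2$, so $\|P_{ab}\|\le(1+O(\eps))(\|sa\|+\|sb\|)$. The key geometric point is that $\|sa\|+\|sb\|\le(1+O(\eps))\|ab\|$: since $a$ lies below $R$ and $b$ above it, the segments $sa$ and $sb$ each make an angle of at most $O(\sqrt\eps)$ with the vertical (because the rectangle has width $1$ and half-height $\eps^{-1/2}$, so the horizontal displacement from $s$ to any point of $L_i$ is at most $1$ while the vertical displacement is at least $\eps^{-1/2}$, and the staircases stay within a bounded horizontal window). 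Hence $s$, $a$, $b$ are "nearly collinear": the broken path $a\to s\to b$ is $(\tfrac\pi2\pm O(\sqrt\eps))$-angle-bounded and $y$-monotone, so by Lemma~\ref{lem:angle2} its total length is at most $(1+O(\eps))\|ab\|$. Combining, $\|P_{ab}\|\le(1+O(\eps))\|ab\|$, and rescaling constants gives the claim.

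There is one technical wrinkle that I expect to be the main obstacle: Lemma~\ref{lem:stairs} as stated requires the staircase to run between the vertical lines $x=\pm\frac12$ with right endpoint exactly $(\frac12,0)$ and the source at height exactly $\eps^{-1/2}$ above it, whereas here $L_1$ and $L_2$ each span horizontal width $1$ and their far endpoints are only required to lie on the line through the right side of $R$ (not necessarily at height $0$), and $L_2$ must be handled by a vertical reflection. So I would first apply a horizontal translation by $\tfrac12$ and, for $L_2$, a reflection across the horizontal line through $s$, so that each staircase meets the hypotheses of Lemma~\ref{lem:stairs} (or more precisely Corollary~\ref{cor:stairs}, if one wants to allow the angle-bounded variant). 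The endpoints of $L_1,L_2$ lying strictly below/above $R$ rather than on its boundary only helps: it makes $\|sa\|,\|sb\|$ larger relative to their horizontal extent, so the angle bound $O(\sqrt\eps)$ from the vertical is preserved. One should also double-check that the $y$-monotonicity needed for Lemma~\ref{lem:angle2} holds along $a\to s\to b$—it does, since $a$ is below $s$ and $b$ is above—and that the $O(\eps)$ losses from the two SLTs and from the angle-bounded detour compose additively, which they do since all three factors are $1+O(\eps)$.
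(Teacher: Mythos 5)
Your proposal is essentially the paper's proof: a single source $s$ inside $R$, two shallow-light trees to $L_1$ and $L_2$ built from Lemma~\ref{lem:stairs} (after reflection/translation), total weight $\|L_1\|+\|L_2\|+O(\eps^{-1/2})$, and a stretch bound obtained by comparing $\|as\|+\|sb\|$ with $\|ab\|$; your final step via Lemma~\ref{lem:angle2} (the path $a\to s\to b$ is $y$-monotone and $(\frac{\pi}{2}\pm O(\sqrt{\eps}))$-angle-bounded) is an equivalent substitute for the paper's direct computation $\|as\|+\|bs\|\leq \diam(R)+h_a+h_b\leq (1+\frac{\eps}{8})2\eps^{-1/2}+h_a+h_b$ versus $\|ab\|\geq \hght(R)+h_a+h_b$. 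One correction: place $s$ at the center of $R$ (as the paper does), not at the midpoint of its left side --- with your placement, the reflection and translation that bring $L_1$ (or $L_2$) into the position required by Lemma~\ref{lem:stairs} send $s$ to a point at height $\eps^{-1/2}$ above an \emph{endpoint} of the staircase rather than to $(0,\eps^{-1/2})$, so the lemma cannot be invoked verbatim; one would instead have to reprove it for an off-center source. Moving $s$ to the center of $R$ removes this mismatch and leaves every other step of your argument, including the angle bound $O(\sqrt{\eps})$ from vertical and the additive composition of the $1+O(\eps)$ factors, unchanged.
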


\begin{figure}[htbp]
 \centering
 \includegraphics[width=0.65\textwidth]{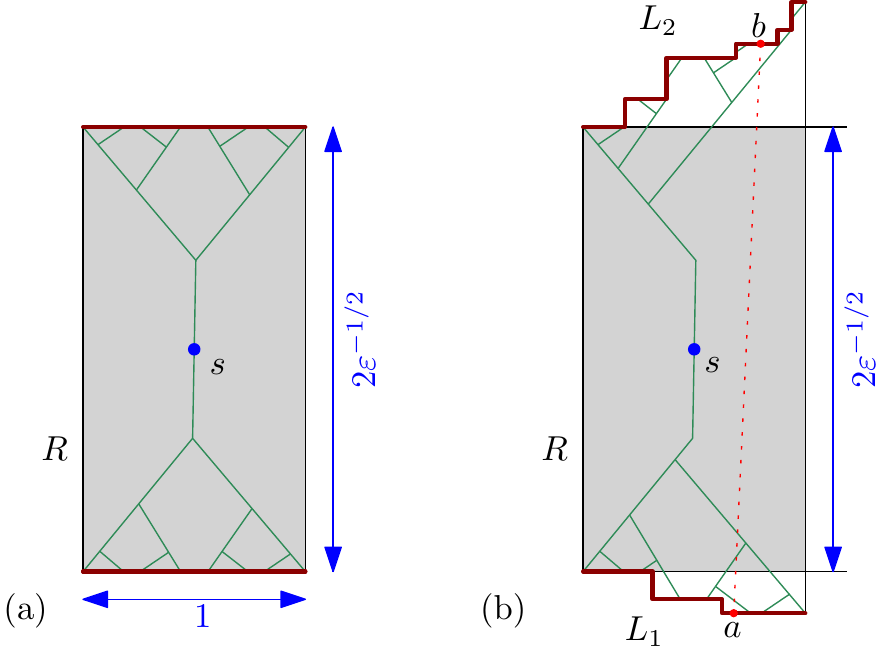}
 \caption{(a) A combination of two \textsf{SLT}s between the two horizontal sides of $R$.
 (b) A combination of two \textsf{SLT}s between two staircases above and below $R$, respectively.}
    \label{fig:combinations}
\end{figure}

\begin{proof}
Let $s$ be the center of the rectangle $R$. Let $G$ be the geometric graph formed by the \textsf{SLT}s from the source $s$ to $L_1$ and $L_2$, resp., using Lemma~\ref{lem:stairs}. By construction, $\|G\|=\|L_1\|+\|L_2\|+O(\eps^{-1/2})$.
It remains to show that $G$ has the desired spanning ratio.
Let $a\in L_1$ and $b\in L_2$. Let $h_a$ be the distance of $a$ from bottom side of $R$, and $h_b$ the distance of $b$ from the top side of $R$. By Lemma~\ref{lem:shallow}, the two \textsf{SLT}s jointly contain an $ab$-path $P_{ab}$ of length
$\|P_{ab}\|\leq (1+O(\eps))\,(\|as\|+\|bs\|)$.

On the one hand, $s$ is the center of $R$, and so $\|as\|+\|bs\|\leq \diam(R)+h_a+h_b\leq (1+\frac{\eps}{8})2{\eps}^{-1/2}+h_a+h_b$.
On the other hand, $\|ab\|\geq \mathrm{height}(R)+h_a+h_b=2{\eps}^{-1/2}+h_a+h_b$.
Overall, $\|P_{ab}\|\leq (1+O(\eps))(1+\frac{\eps}{8})\,\|ab\| \leq (1+O(\eps))\|ab\|$.
\end{proof}

\section{Reduction to Directional Spanners in Histograms}
\label{sec:redux}

In this section, we present our general strategy for the proof of Theorem~\ref{thm:UB},
and reduce the construction of a light $(1+\eps)$-spanner for a point set $S$ in the plane to a special case of \emph{directional} spanners for
a point set on the boundary of faces in a (modified) \emph{window partition}.

\subparagraph{Directional $(1+\eps)$-Spanners.}
Our strategy to construct a $(1+\eps)$-spanner for a point set $S$ is to partition the interval of directions $[0,\pi)$ into $O(\eps^{-1/2})$ intervals, each of length $O(\eps^{1/2})$. For each interval $D\subset [0,\pi)$, we construct a geometric graph that serves point pairs $a,b\in S$ with $\mathrm{dir}(ab)\in D$. Then the union of these graphs over all $O(\eps^{-1/2})$ intervals will serve all point pairs $ab\in S$. The following definition formalizes this idea.

\begin{definition}
Let $S$ be a finite point set in $\mathbb{R}^2$, and let $D\subset [0,\pi)$ be a set of directions. A geometric graph $G$ is a \emph{directional $(1+\eps)$-spanner} for $S$ and $D$ if $G$ contains an $ab$-path of weight at most $(1+\eps)\|ab\|$ for every $a,b\in S$ with $\mathrm{dir}(ab)\in D$.
\end{definition}

In Section~\ref{sec:hist}, we modify the standard window partition algorithm and partition a bounding box of $S$ into \emph{fuzzy staircases} and \emph{tame histograms} (defined below). We also construct directional spanners for point pairs $a,b\in S$, where $ab$ is a chord of a face in this partition. A line segment $ab$ is a \emph{chord} of a simple polygon $P$ if $a,b\in \partial P$, and $ab\subset P$.
The \emph{perimeter} of a simple polygon $P$, denoted $\per(P)$, is the total weight of the edges of $P$; and the \emph{horizontal perimeter}, denoted $\hper(P)$, is the total weight of the horizontal edges of $P$.

\begin{restatable}{lemma}{tilinglemma}\label{lem:tiling}
We can subdivide a simple rectilinear polygon $P$ into a collection $\mathcal{F}$ of fuzzy staircases and tame histograms of total perimeter
$\sum_{F\in \mathcal{F}} \per(F) \leq O(\eps^{-1/2}\per(P))$
and total horizontal perimeter
$\sum_{F\in \mathcal{F}} \hper(F) \leq O(\per(P))$.
\end{restatable}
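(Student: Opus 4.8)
The starting point is the classical window partition, which subdivides a rectilinear simple polygon $P$ into histograms so that every axis-parallel segment inside $P$ stabs at most three histograms; hence $\sum \per(F) = O(\per(P))$ already in the unmodified version. The issue is that this bound alone is not enough: when we later replace vertical edges by near-vertical $(\tfrac{\pi}{2}\pm\tfrac{\sqrt\eps}{2})$-angle-bounded paths (as in Corollary~\ref{cor:stairs}) and build SLT-based spanners on the resulting faces, the SLT machinery costs $O(\eps^{-1/2})$ per ``unit'' of structure, so we can afford a factor $\eps^{-1/2}$ blow-up on the \emph{total} perimeter but only an $O(1)$ blow-up on the \emph{horizontal} perimeter (horizontal extent is what the SLT height $\eps^{-1/2}$ trades against). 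The plan is therefore to run a modified window partition that is careful about how tall a histogram is allowed to be relative to its horizontal base, and to classify each resulting face as either a \emph{tame histogram} (one whose height is $O(\eps^{-1/2})$ times its horizontal base, so that a single SLT-combination à la Lemma~\ref{lem:combine2} covers its chords) or a \emph{fuzzy staircase} (a face that is essentially one long monotone staircase strip, onto which Lemma~\ref{lem:stairs}/Corollary~\ref{cor:stairs} applies directly).

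Concretely, I would proceed as follows. First, fix a horizontal ``slab'' height $h = \Theta(\eps^{-1/2})\cdot(\text{local horizontal scale})$ — more precisely, run the window partition but whenever a histogram produced by the standard recursion has height exceeding $\eps^{-1/2}$ times its base edge length, cut it by horizontal chords into $O(1)$-aspect-ratio pieces, or recognize it as a fuzzy staircase and peel it off. Second, charge the perimeter. Vertical edges of the new pieces are charged two ways: vertical edges inherited from $\partial P$ or from standard window walls are charged by the stabbing property, giving $O(\per(P))$ for those; the \emph{new} horizontal cuts we introduced each have length at most the width of the histogram being cut, and there are at most $O(\eps^{-1/2})$ of them stacked over any given vertical line (because each consumes height $\ge \eps^{-1/2}\cdot(\text{base})$ out of a histogram whose total height is what it is), so the new horizontal edges contribute $O(\eps^{-1/2})$ times the horizontal perimeter already present — this is exactly where the $O(\eps^{-1/2}\per(P))$ in the total-perimeter bound comes from, while the horizontal perimeter itself only grows by $O(1)$ since each horizontal cut is stabbed $O(1)$ times (standard window-partition stabbing, applied to the refined partition). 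Third, verify the two face types: a tame histogram has bounded aspect ratio by construction and a fuzzy staircase, by construction, is a strip bounded by one horizontal base and one $x$- and $y$-monotone staircase (after the peeling step removes any ``overhang''), so both are of the form handled in Section~\ref{sec:SLT}.

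The main obstacle I anticipate is the charging argument for the newly inserted horizontal chords, specifically making precise the claim that over any vertical line only $O(\eps^{-1/2})$ new horizontal cuts are stacked while simultaneously the new \emph{horizontal} perimeter stays $O(\per(P))$. Naively, if a very tall thin histogram of base $b$ and height $H$ is sliced into $H/(\eps^{-1/2}b)$ pieces, the new horizontal perimeter is $\approx (H/(\eps^{-1/2}b))\cdot b = H\eps^{1/2}$, which we need to bound by $O(\per(P))$ — this forces us to argue that such tall thin histograms are precisely the ones we should instead treat as fuzzy staircases (where no horizontal slicing happens at all), rather than slice them. So the delicate point is the \emph{dichotomy}: decide, for each histogram of the standard partition, whether to slice it horizontally (only legitimate when its height is not much larger than $\eps^{-1/2}$ times its \emph{total} horizontal perimeter, not just one base edge) or to declare it a fuzzy staircase, and to show this dichotomy is always available. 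I would handle this by observing that a histogram whose monotone boundary path has small horizontal perimeter but large height is automatically ``staircase-like'' in the required sense, so Corollary~\ref{cor:stairs} applies to it without any cutting; once that case is carved out, the remaining histograms all admit the height $\le \eps^{-1/2}\,\hper$ slicing, and the two perimeter sums follow by the stabbing property applied twice (once to the input polygon's walls for the $O(\per(P))$ horizontal bound, once with the extra $\eps^{-1/2}$ factor absorbed into the vertical/total bound).
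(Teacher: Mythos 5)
There is a genuine gap, and it sits exactly at the point you flagged as the ``delicate dichotomy.'' A histogram whose height is much larger than its horizontal perimeter is \emph{not} automatically staircase-like: its $x$-monotone boundary path can oscillate up and down arbitrarily many times, so it is far from $y$-monotone, Corollary~\ref{cor:stairs} does not apply to it, and declaring it a ``fuzzy staircase'' does not make it one (a fuzzy staircase is bounded by an $x$- \emph{and} $y$-monotone staircase whose vertical edges are $\Lambda$-paths, plus a horizontal segment and one edge of slope $\pm\,\Lambda\eps^{-1/2}$). Decomposing a histogram into genuine staircase pieces costs $\Theta(\per(H)\log n)$ in the worst case, which is precisely the obstruction this lemma must circumvent, and your construction contains no mechanism for doing so because it never introduces the steep subdivision paths of slope $\pm\,\Lambda\eps^{-1/2}$ that the ``fuzzy''/``tame'' definitions are built around. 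Likewise, your other face type is not a tame histogram in the sense required: tameness is not bounded aspect ratio, it is the $\Lambda$-path boundary structure together with the chord condition $\|L_{ab}\|\leq 2\|ab\|$ for every horizontal chord $ab$. An aspect-ratio-$\eps^{-1/2}$ histogram with deep narrow notches violates that condition, and the condition is exactly what later powers Lemma~\ref{lem:fat} and hence Lemma~\ref{lem:hist}; so a partition into bounded-aspect-ratio histograms, however light, does not prove the statement.

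The charging argument is also internally inconsistent and has the mechanism backwards: you attribute the $O(\eps^{-1/2}\per(P))$ total-perimeter bound to the newly inserted \emph{horizontal} cuts while simultaneously claiming the horizontal perimeter grows only by $O(1)$ --- but new horizontal cuts \emph{are} horizontal perimeter, so both cannot hold; and your own estimate in the last paragraph shows those cuts total only $O(\eps^{1/2}\per(P))$, i.e., no $\eps^{-1/2}$ blow-up occurs in your partition at all, which is a symptom of the faces not being of the required types. In the paper the $\eps^{-1/2}$ factor in $\sum_{F}\per(F)$ comes from near-vertical subdivision paths: each unit of horizontal length charged pays for $\Theta(\eps^{-1/2})$ of steep edge of slope $\pm\,\Lambda\eps^{-1/2}$, which contributes almost nothing to $\hper$, and this is how the two bounds coexist. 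Concretely, the proof runs a modified window partition (Lemma~\ref{lem:window}) producing $x$-monotone $\Lambda$-histograms and $y$-monotone fuzzy histograms with total perimeter $O(\per(P))$; it then splits each fuzzy histogram recursively into fuzzy staircases by steep paths charged against horizontal edges (Lemma~\ref{lem:fuzzy}), which is where the $\eps^{-1/2}$ loss legitimately arises, and cuts each $\Lambda$-histogram into tame histograms by a top-down sweep that inserts a horizontal chord $ab$ exactly when the boundary path between $a$ and $b$ reaches weight $2\|ab\|$ (Lemma~\ref{lem:tame}). Your proposal would need analogues of all three of these steps; as written it supplies none of them.
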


\begin{restatable}{lemma}{histlemma}\label{lem:hist}
Let $F$ be a fuzzy staircase or a tame histogram, $S\subset \partial F$ a finite point set, $\eps>0$, and $D=[\frac{\pi-\sqrt{\eps}}{2},\frac{\pi+\sqrt{\eps}}{2}]$ an interval of nearly vertical directions. Then there exists a geometric graph of weight $O(\per(F)+\eps^{-1/2}\, \hper(F))$ such that for all $a,b\in S$, if $ab$ is a chord of $F$ and $\mathrm{dir}(ab)\in D$, then $G$ contains an $ab$-path of weight at most $(1+O(\eps))\|ab\|$.
\end{restatable}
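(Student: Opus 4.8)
The plan is to build the graph $G$ separately for the two kinds of faces produced by Lemma~\ref{lem:tiling} --- tame histograms and fuzzy staircases --- and, in each case, to add to a copy of $\partial F$ only a constant number of shallow-light trees rooted at carefully chosen \emph{source} points, using the machinery of Section~\ref{sec:SLT}. The common normalization is to rescale $F$ so that its relevant horizontal extent is $1$; then each SLT has height $\Theta(\eps^{-1/2})$ and weight $O(\eps^{-1/2})$ (Lemma~\ref{lem:shallow}, Lemma~\ref{lem:stairs}, Corollary~\ref{cor:stairs}), and undoing the scaling replaces $O(\eps^{-1/2})$ by $O(\eps^{-1/2}\,\hper(F))$, while the copies of $\partial F$ account for $O(\per(F))$. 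The definitions of fuzzy staircase and tame histogram (Section~\ref{sec:hist}) are tailored precisely so that (a) every nearly-vertical chord $ab$ of $F$ connects two boundary portions sitting in exactly the ``source-to-staircase'' configuration required by those lemmas, and (b) $\|ab\|$ is at least a fixed constant times the rescaled unit, so that all $O(\sqrt\eps)$-order detours contribute only an $O(\eps)$ relative error; since Lemma~\ref{lem:hist} only asks for stretch $1+O(\eps)$, this is enough.

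\emph{Fuzzy staircase.} Here $F$ is the thin region between two $(\frac{\pi}{2}\pm\frac{\sqrt\eps}{2})$-angle-bounded staircase paths $L_1,L_2$ arranged as in Lemma~\ref{lem:combine2} and Fig.~\ref{fig:combinations}(b), with a defining rectangle $R$ of width $w$ and height $2\eps^{-1/2}w$. Rescaling by $1/w$ and taking $s$ to be the center of $R$, I build an SLT from $s$ to $L_1$ and one to $L_2$ via Corollary~\ref{cor:stairs} --- it is the corollary rather than Lemma~\ref{lem:stairs} that is needed, since the ``vertical'' edges of the $L_i$ are themselves near-vertical angle-bounded paths. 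Exactly the computation in the proof of Lemma~\ref{lem:combine2} then shows that for $a\in L_1$ and $b\in L_2$ the two SLTs jointly contain an $ab$-path of length $(1+O(\eps))(\|as\|+\|bs\|)\le(1+O(\eps))\|ab\|$, using $\diam(R)\le(1+\frac{\eps}{8})\hght(R)$ together with the fact that $\|ab\|$ is at least $\hght(R)$ plus the two overshoots of $a$ and $b$ beyond $R$. Finally one checks that a chord with $\dir(ab)\in D$ cannot have both endpoints on the same $L_i$: since $F$ is thin, any long nearly-vertical chord must cross from $L_1$ to $L_2$. The added weight is $\|L_1\|+\|L_2\|+O(\eps^{-1/2}w)=O(\per(F)+\eps^{-1/2}\hper(F))$.

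\emph{Tame histogram.} Assume w.l.o.g.\ that the base $B$ is horizontal and at the bottom, of width $w$, and let $\Gamma$ be the monotone chain; tameness guarantees $\hght(F)=O(\eps^{-1/2}w)$ and (in the variant I expect is used) that $\Gamma$ is actually a staircase, i.e., also $y$-monotone. Besides $\partial F$ I add: (i) an SLT from a source $s$ at height $\Theta(\eps^{-1/2}w)$ over the midpoint of $B$ down to $B$ (Lemma~\ref{lem:shallow}, rescaled); and (ii) an SLT from $s$ up to $\Gamma$ (Corollary~\ref{cor:stairs}), after cutting $\Gamma$ at its local extrema into monotone staircase pieces and treating each --- their $x$-projections partition $[0,w]$, so the total added weight stays $O(\eps^{-1/2}w)$. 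A nearly-vertical chord $ab$ is served in one of two ways. If $a,b\in\Gamma$: since $\Gamma$ is a staircase and $ab$ is nearly vertical, the sub-arc of $\Gamma$ between $a$ and $b$ is $x$- and $y$-monotone, so its length is at most $(y(b)-y(a))+(x(b)-x(a))\le(1+O(\sqrt\eps))\|ab\|$, and $\Gamma$ itself is the desired path. If $a\in B$ (hence $b\in\Gamma$, since $B\cup\Gamma=\partial F$): route $a\to s\to b$ through the two SLTs, of length $(1+O(\eps))(\|as\|+\|bs\|)$; because $a$ and $b$ lie almost on one vertical line with $s$ separating them in the $y$-direction, $\|as\|+\|bs\|\le(1+O(\eps))\|ab\|$, the horizontal offset of $s$ (at most $\frac12 w$, which is $O(\sqrt\eps)\cdot\hght(F)$ in rescaled units) being absorbed. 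Again the added weight is $O(\eps^{-1/2}w)=O(\eps^{-1/2}\hper(F))$, so $\|G\|=O(\per(F)+\eps^{-1/2}\hper(F))$.

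\emph{Main obstacle.} The delicate point is the chord joining two points of the monotone chain of a histogram. One must either arrange, through the very definition of ``tame histogram'', that the chain is an honest staircase --- so that such a chord is served for free by the arc of $\Gamma$ --- or else handle each ``bump'' of $\Gamma$ by a further, suitably reflected, application of the SLT combination; in either case one has to verify that the near-verticality of $ab$ confines its endpoints to a region (a thin vertical strip, or a neighbourhood of a single peak of $\Gamma$) in which $\|ab\|$ dominates all $O(\sqrt\eps)$ corrections. Pinning down the definitions of fuzzy staircase and tame histogram so that Lemma~\ref{lem:tiling}'s decomposition yields only such faces within the stated perimeter budget, and so that every nearly-vertical chord of each face lands in one of the handled configurations, is where the real work lies; once the configuration is fixed, the weight and stretch bookkeeping is a rescaled rerun of the arguments of Section~\ref{sec:SLT}.
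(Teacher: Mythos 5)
There is a genuine gap, and it sits exactly where the paper's real work is: chords that are \emph{short} relative to the face, in particular chords with both endpoints on the staircase/chain part of $\partial F$. Your construction adds, per face, a constant number of \textsf{SLT}s rooted at a single source at height $\Theta(\eps^{-1/2})$ times the horizontal scale. The \textsf{SLT} stretch analysis (Lemma~\ref{lem:shallow}, Lemma~\ref{lem:stairs}) only gives $1+O(\eps)$ when $\|ab\|\geq \eps^{-1/2}$ times the width of the construction, because the additive error is $\Theta(\sqrt{\eps})$ at unit width. In a tame histogram, tameness forces $\hght(F)\leq \frac12\|pq\|$, so a nearly vertical chord from the base to the chain has length at most $\frac12\|pq\|$, far below $\eps^{-1/2}\|pq\|$; routing it through one full-width \textsf{SLT} pair incurs relative error $\Theta(\sqrt{\eps})$, not $O(\eps)$. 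The paper's Lemma~\ref{lem:dir2} avoids this with a multi-scale tiling of $H$ by squares (two shifted tilings per geometric height level $\ell_i$), a gadget of Lemma~\ref{lem:combine5} per occupied square $Q$ on a rectangle $R(Q)$ of aspect ratio $2\eps^{-1/2}$, and a packing/charging argument (proximity graph, dilated squares $2Q$) to keep $\sum_Q \wdth(Q)=O(\per(H))$. Your single-source version has no analogue of any of this. The same objection applies to your fuzzy-staircase case: your claim that a nearly vertical chord ``must cross from $L_1$ to $L_2$'' does not match the paper's definition (a fuzzy staircase is bounded by one staircase-like $pr$-path $L$ plus a horizontal segment $pq$ and one steep segment $qr$), and most nearly vertical chords join a near-vertical $\Lambda$-piece of $L$ to a horizontal edge of $L$, at arbitrarily small scale.

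Your fallback for same-chain pairs --- walk along the chain $\Gamma$ --- fails quantitatively: for an $x$- and $y$-monotone arc the detour is bounded by $|x(a)-x(b)|\leq \sqrt{\eps}\,|y(a)-y(b)|$, so the stretch is $1+\Theta(\sqrt{\eps})$, which you yourself write as $(1+O(\sqrt{\eps}))\|ab\|$; this is not $1+O(\eps)$, and a $1+\sqrt{\eps}$ loss inside each of the $\Theta(\eps^{-1/2})$ directional classes would destroy the final $O(\eps^{-1})$ lightness bound (it would be equivalent to a $(1+\eps')$-spanner of lightness $\eps'^{-2}$). Also, the chain of a tame histogram is \emph{not} $y$-monotone in the paper's definition (it is $x$-monotone with ascending and descending $\Lambda$-paths, constrained only by conditions (i) and (ii)), so even that weaker bound needs Lemma~\ref{lem:fat}. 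The paper handles same-chain pairs with the recursive shadow-polygon decomposition of Lemmas~\ref{lem:staircase} and~\ref{lem:tame-staircase}: a geometric series of \textsf{SLT}s at all scales rooted at the bottom vertices $s_V$, whose weights telescope via Equation~\eqref{eq:width0} (and its tame analogue via Lemma~\ref{lem:width}), plus Cases~1--3 of Lemma~\ref{lem:dir1} for base-to-chain and base-to-steep-side pairs in fuzzy staircases. This multi-scale recursion, together with the square-tiling and charging in Lemma~\ref{lem:dir2}, is the content of Lemma~\ref{lem:hist}; your proposal acknowledges the obstacle in its last paragraph but does not supply the mechanism that resolves it.
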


We prove Lemmas~\ref{lem:tiling} and~\ref{lem:hist} in Section~\ref{sec:hist}.
In  the remainder of this section, we show that these lemmas imply Lemma~\ref{lem:dir}, which in turn implies Theorem~\ref{thm:UB}.

\begin{lemma}\label{lem:dir}
Let $S\subset \mathbb{R}^2$ be a finite point set, $\eps>0$, and $D\subset [0,\pi)$ an interval of length $\sqrt{\eps}$. Then there exists a \emph{directional $(1+\eps)$-spanner} for $S$ and $D$ of weight $O(\eps^{-1/2}\, \|\MST\|)$.
\end{lemma}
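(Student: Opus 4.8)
\noindent\textbf{Proof plan for Lemma~\ref{lem:dir}.}
The plan is to reduce the statement to Lemmas~\ref{lem:tiling} and~\ref{lem:hist}, using a rectilinear spanning structure of $S$ as the backbone. First I would rotate the plane so that $D=[\frac{\pi-\sqrt\eps}{2},\frac{\pi+\sqrt\eps}{2}]$, the interval of nearly vertical directions occurring in Lemma~\ref{lem:hist}; a rotation does not change $\|\MST(S)\|$, and the bound $|x_p-x_q|+|y_p-y_q|\le\sqrt2\,\|pq\|$ holds for every pair in any orthonormal frame. Hence, replacing each edge of a Euclidean MST of $S$ by an axis-parallel two-link path yields a connected rectilinear network $T_R\supseteq S$ with $\|T_R\|\le\sqrt2\,\|\MST(S)\|$. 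From $T_R$ I would build a simple rectilinear polygon $P$ with $S\subseteq\partial P$, perimeter $\per(P)=O(\|\MST(S)\|)$, and such that $ab$ is a chord of $P$ for every $a,b\in S$ with $\dir(ab)\in D$: double the edges of $T_R$, take an Euler tour of the resulting Eulerian multigraph (a closed rectilinear walk of length $2\|T_R\|$), perturb it into a simple polygon, and close off the remaining pockets with nearly vertical segments; since a nearly vertical segment is no longer than the boundary detour it replaces, $\per(P)$ stays $O(\|T_R\|)=O(\|\MST(S)\|)$. (This step is where the construction is delicate.)

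With $P$ in hand, apply Lemma~\ref{lem:tiling} to subdivide $P$ into a family $\mathcal F$ of fuzzy staircases and tame histograms with $\sum_{F\in\mathcal F}\per(F)=O(\eps^{-1/2}\per(P))$ and $\sum_{F\in\mathcal F}\hper(F)=O(\per(P))$. For each $F\in\mathcal F$ apply Lemma~\ref{lem:hist} with point set $S\cap\partial F$ and the interval $D$ to obtain a geometric graph $G_F$ of weight $O(\per(F)+\eps^{-1/2}\hper(F))$ that contains a $(1+O(\eps))$-stretch $ab$-path whenever $ab$ is a chord of $F$ with $\dir(ab)\in D$. Let $G$ be the union of $\partial P$, the internal window paths of the subdivision, and all the $G_F$. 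Summing weights,
\[
\|G\|=O\Big(\textstyle\sum_{F}\per(F)\Big)+O\Big(\eps^{-1/2}\textstyle\sum_{F}\hper(F)\Big)=O(\eps^{-1/2}\per(P))=O(\eps^{-1/2}\,\|\MST(S)\|),
\]
which is the claimed bound; once the two lemmas are granted, this bookkeeping is routine.

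It remains to establish the stretch. Given $a,b\in S$ with $\dir(ab)\in D$, the construction of $P$ makes $ab$ a chord of $P$. If $ab$ happens to be a chord of a single face of $\mathcal F$, Lemma~\ref{lem:hist} immediately supplies the required $ab$-path. Otherwise $ab$ meets a sequence of faces $F_1,\dots,F_m$, and I would show that $m=O(1)$ via a stabbing property of the modified window partition for nearly vertical segments (the analogue of the classical fact that an axis-parallel segment meets at most three histograms), and that the crossing points of $ab$ with the internal window paths --- which are themselves nearly vertical $(\frac{\pi}{2}\pm\frac{\sqrt\eps}{2})$-angle-bounded paths of slope $\pm\eps^{-1/2}$ --- can be used to splice the per-face paths into an $ab$-path of total weight $(1+O(\eps))\|ab\|$.

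I expect the main obstacle to be precisely this last point, intertwined with the matching requirement in the polygon construction: one must build $P$ and choose its modified window partition so that every nearly vertical chord between points of $S$ is actually served and so that the per-face paths concatenate with only a $(1+O(\eps))$ loss. This is exactly why vertical windows have to be replaced by nearly vertical angle-bounded paths with slopes $\pm\eps^{-1/2}$: their flexibility is what keeps both the subdivision weight (Lemma~\ref{lem:tiling}) and the per-face spanner weights (Lemma~\ref{lem:hist}) within the $O(\eps^{-1/2}\,\|\MST(S)\|)$ budget, while still letting a nearly vertical chord be routed across the few faces it meets. The angle-bounded path estimate of Lemma~\ref{lem:angle2} and the \textsf{SLT} combination of Lemma~\ref{lem:combine2} are the tools I would use to control the error along such spliced paths.
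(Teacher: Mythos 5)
Your high-level skeleton (rotate so $D$ is nearly vertical, build a rectilinear backbone of weight $O(\|\MST(S)\|)$, tile with Lemma~\ref{lem:tiling}, apply Lemma~\ref{lem:hist} per face, take the union) matches the paper, and your weight bookkeeping is fine. But there is a genuine gap exactly where you flag the construction as ``delicate,'' and the paper avoids it by a different, simpler move that you are missing. First, you try to build a single simple polygon $P$ with $S\subset\partial P$, $\per(P)=O(\|\MST(S)\|)$, and the property that \emph{every} nearly vertical segment $ab$ with $a,b\in S$ is a chord of $P$; the Euler-tour-plus-pocket-closing sketch does not establish this, and the property is much stronger than needed. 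The paper never puts $ab$ inside one polygon: it takes $G_0$ to be the boundary of the bounding square of $S$ together with a rectilinear MST of $S$, and applies Lemma~\ref{lem:tiling} to \emph{all faces} of this planar subdivision (each face is a rectilinear simple polygon, and the total face perimeter is $O(\|\MST(S)\|)$ since each edge bounds at most two faces). The segment $ab$ is then simply cut by the subdivision into collinear pieces, each of which is a chord of some face.

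Second, your stretch argument relies on two unsupported steps: an $O(1)$ bound on the number of faces a nearly vertical chord stabs in the modified window partition (no such bound is claimed or needed in the paper, and the faces of $\mathcal F$ inside one face of $G_0$ can be numerous), and a splicing step whose splice points --- the intersections of $ab$ with the face boundaries --- are not in the point sets $S\cap\partial F$ to which you apply Lemma~\ref{lem:hist}, so the per-face lemma does not even serve the sub-chords of $ab$. The paper's key move is to apply Lemma~\ref{lem:hist} to $S_F$ defined as \emph{all intersection points of $\partial F$ with the edges of the complete graph $K(S)$}. Then every piece $v_{i-1}v_i$ of $ab$ is a chord of some face with endpoints in $S_F$ and with $\dir(v_{i-1}v_i)=\dir(ab)\in D$, so each piece gets a path of weight at most $(1+\eps)\|v_{i-1}v_i\|$; since the pieces are collinear, their lengths sum to exactly $\|ab\|$, and the concatenation has weight at most $(1+\eps)\|ab\|$ no matter how many faces are crossed. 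With this choice of $S_F$ and the collinear-concatenation argument, both of your problematic steps (the chord-containing polygon and the stabbing bound) become unnecessary.
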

\begin{proof}

We may assume, by applying a suitable rotation, that $D=[\frac{\pi-\sqrt{\eps}}{2},\frac{\pi+\sqrt{\eps}}{2}]$, that is, an interval of nearly vertical directions. We construct a directional $(1+\eps)$-spanner for $S$ and $D$ of weight $O(\eps^{-1/2}\cdot \|\MST(S)\|)$.

Assume w.l.o.g.\ that the unit square $U=[0,1]^2$ is the minimum axis-parallel bounding square of $S$. In particular, $S$ has two points on two opposite sides of $U$, and so $1\leq \diam(S)\leq \|\MST(S)\|$. Our initial graph $G_0$ is composed of the boundary of $U$ and a \emph{rectilinear MST} of $S$, where $\|G_0\|=O(\|\MST(S)\|)$.
Since each edge of $G_0$ is on the boundary of at most two faces, the total perimeter of all faces of $G_0$ is also $O(\|\MST(S)\|)$.
Lemma~\ref{lem:tiling} yields subdivisions of the faces of $G_0$ into a collection $\mathcal{F}$ of fuzzy staircases and tame histograms of total perimeter $\sum_{F\in \mathcal{F}}\per(F)=O(\eps^{-1/2}\|MST(S)\|)$
and horizontal perimeter $\sum_{F\in \mathcal{F}}\hper(F)=O(\|MST(S)\|)$,

Let $K(S)$ be the complete graph induced by $S$. For each face $F\in \mathcal{F}$, let $S_F$ be the set of all intersection points between the boundary $\partial F$ and the edges of $K(S)$. For each face $F$, Lemma~\ref{lem:hist} yields a geometric graph $G_F$ of weight $O(\per(F)+\eps^{-1/2}\hper(F))$ with respect to the finite point set $S_F\subset \partial F$.

We can now put the pieces back together.
Let $G$ be the union of $G_0$ and the graphs $G_F$ for all $F\in \mathcal{F}$.
The weight of $G$ is bounded by
$\|G\|=\|G_0\|+\sum_{F\in \mathcal{F}} \|G_F\|
= O(\|\MST(S)\|+\sum_{F\in \mathcal{F}} (\per(F)+\eps^{-1/2}\hper(F)))
= O(\eps^{-1/2}\|\MST(S)\|)$.

Let $a,b\in S$. The edges of $G_0$ subdivide the line segment $ab$ into a path $v_0v_1\ldots  v_m$ of collinear segments, each of which is a chord of some face in $\mathcal{F}$. For $i=1,\ldots , m$, graph $G$ contains a $v_{i-1}v_i$-path of weight at most $(1+\eps)\|v_{i-1}v_i\|$. The concatenation of these paths
is an $ab$-path of length at most $\sum_{i=1}^m (1+\eps)\|v_{i-1}v_i\| =(1+\eps)\|ab\|$, as required.
\end{proof}

\begin{proof}[Proof of Theorem~\ref{thm:UB}]
Let $S$ be a finite set of points in the plane. Let $\eps>0$ be given.
For $k=\lceil \pi\eps^{-1/2}\rceil$, we partition the space of directions as $[0,\pi)=\bigcup_{i=1}^{k} D_i$, into $k$ intervals of equal length.
By Lemma~\ref{lem:dir}, there exists a directional $(1+\eps)$-spanner of weight $O(\eps^{-1/2}\|\MST(S)\|)$ for $S$ and $D_i$ for all $i$.
Let $G=\bigcup_{i=1}^{k} G_i$. For every point pair $s,t\in S$,
we have $\mathrm{dir}(st)\in D_i$ for some $i\in \{1,\ldots,k\}$, and $G_i\subset G$ contains an $st$-path of weight at most $(1+\eps)\|st\|$. Consequently, $G$ is a Euclidean Steiner $(1+\eps)$-spanner for $S$.
The weight of $G$ is
$\|G\|\leq \sum_{i=1}^{k} \|G_i\|
\leq {\lceil\pi\eps^{-1/2}\rceil}\cdot O(\eps^{-1/2}\|\MST(S)\|)
\leq O(\eps^{-1}\|\MST(S)\|)$, as required.
\end{proof}

\section{Construction of Directional Spanners for Staircases}
\label{sec:staircases}

In this section, we handle the special case where the points are on a $x$- and $y$-monotone rectilinear path $L$, which is called a \emph{staircase path}.
Our recursive construction uses a type of polygons that we define now.
A \emph{shadow polygon} is bounded by a staircase path $L$ and a single line segment of
slope $\eps^{-1/2}$; see Fig.~\ref{fig:shadow}(a) for examples.

\begin{figure}[htbp]
 \centering
 \includegraphics[width=0.9\textwidth]{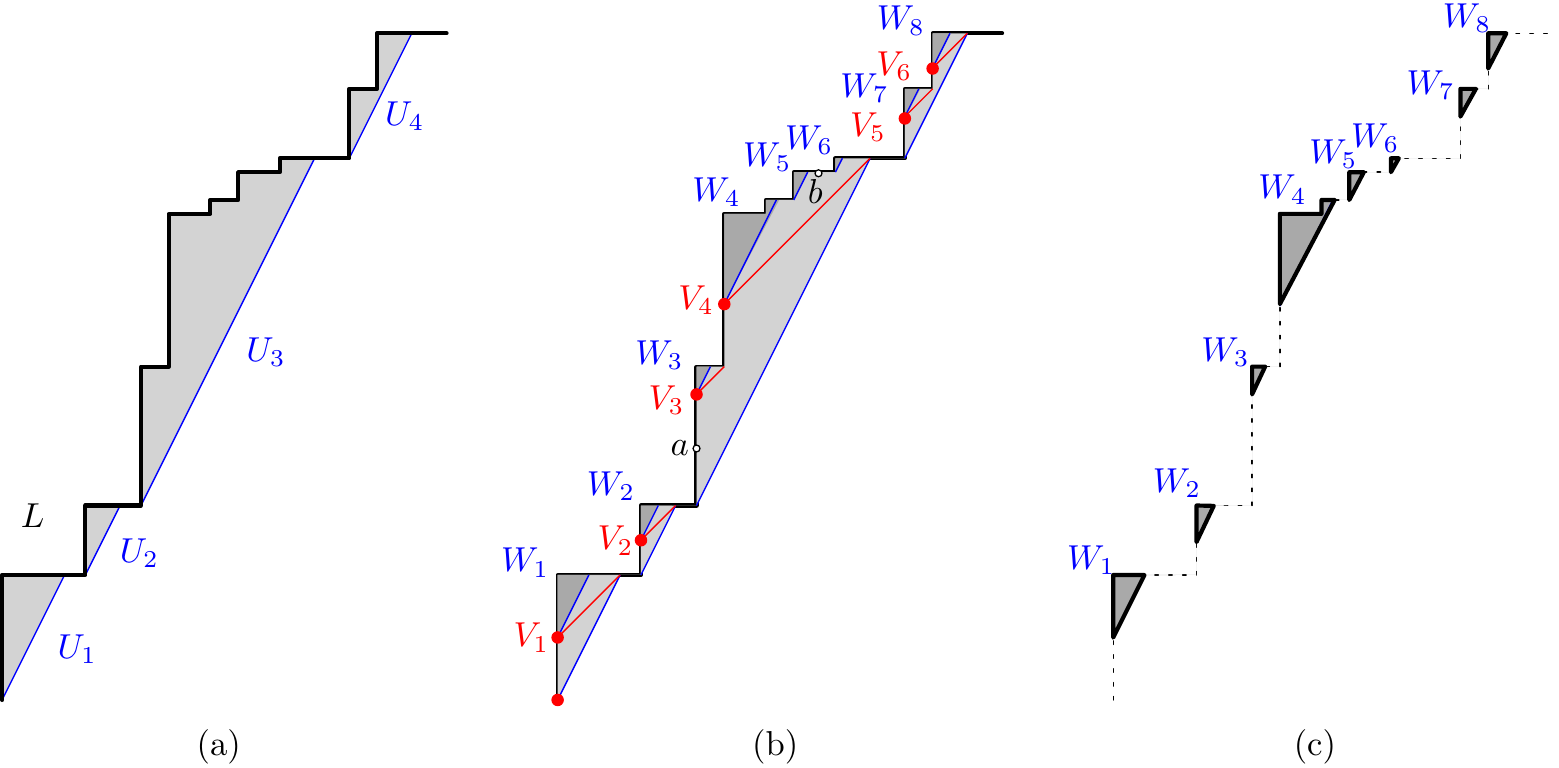}
 \caption{(a) A staircase path $L$; the shadow of vertical edges of $L$ is shaded light gray.
 (b) The shadow of the horizontal edges in the subpolygons is shaded dark gray.
 (c) Recursive subproblems generated in the proof of Lemma~\ref{lem:staircase}.}
    \label{fig:shadow}
\end{figure}

\begin{lemma}\label{lem:staircase}
Let $L$ be an $x$- and $y$-monotonically increasing staircase path, and let $S\subset L$ be a finite point set.
Then there exists a geometric graph $G$ comprised of $L$ and additional edges of weight $O(\eps^{-1/2}\mathrm{width}(L))$ such that $G$ contains a path $P_{ab}$ of weight $\|P_{ab}\|\leq (1+O(\eps))\|ab\|$ for any $a,b\in L$ where $\slope(ab)\geq \eps^{-1/2}$ and the line segment $ab$ lies below $L$.
\end{lemma}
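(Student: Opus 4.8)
The plan is to prove Lemma~\ref{lem:staircase} by recursion on the structure of the staircase path $L$, using a divide-and-conquer scheme guided by the shadow polygons introduced above. The key geometric fact is that any segment $ab$ below $L$ with $\slope(ab)\geq \eps^{-1/2}$ is ``nearly vertical'' in a scaled sense, so after rescaling the $x$-axis by a factor of $\eps^{-1/2}$ it becomes a segment of slope $\geq 1$; this is exactly the regime where the generalized shallow-light trees of Section~\ref{sec:SLT} (Lemma~\ref{lem:stairs} and Corollary~\ref{cor:stairs}) apply. First I would set up the recursion: pick a ``splitting'' horizontal level of $L$ — say, a horizontal edge of $L$ whose removal splits the width of $L$ roughly in half, or more carefully, a level that balances the total width on the two sides. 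This horizontal edge, together with a segment of slope $\eps^{-1/2}$ dropped from one of its endpoints, cuts off a shadow polygon; the segment of slope $\eps^{-1/2}$ plays the role of the ``source line'' for an SLT. Points of $S$ are partitioned into those strictly above the cut and those strictly below it (with the usual care for points lying on the cut).

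The second step is to handle the two types of pairs $a,b\in L$ with $\slope(ab)\geq\eps^{-1/2}$ and $ab$ below $L$: either both $a$ and $b$ lie on the same side of the cut, in which case we recurse; or they lie on opposite sides, in which case the segment $ab$ crosses the splitting segment of slope $\eps^{-1/2}$, and we want to route the $ab$-path through a point $s$ on that segment. For the ``crossing'' case I would invoke a rotated/scaled version of Lemma~\ref{lem:combine2} (or directly combine two applications of Lemma~\ref{lem:stairs}): after an affine transformation sending the slope-$\eps^{-1/2}$ direction to vertical and rescaling so that the relevant box has width $1$ and height $\Theta(\eps^{-1/2})$, the two sub-staircases of $L$ above and below the cut become staircase paths flanking a rectangle, and Lemma~\ref{lem:combine2} yields an $ab$-path of stretch $1+O(\eps)$ using $O(\eps^{-1/2})$ extra weight, expressed in the rescaled metric — which corresponds to $O(\eps^{-1/2}\cdot(\text{local width}))$ in the original metric, since one unit of rescaled width equals $\eps^{-1/2}$ units of original width is the wrong way; rather one unit of rescaled horizontal extent corresponds to one unit of original horizontal extent and $\eps^{-1/2}$ becomes the aspect ratio, so the added weight is $O(\eps^{-1/2}\cdot w)$ where $w$ is the width of the current subproblem. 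Because the recursion halves the width at each level and the two children's widths sum to the parent's width, the total added weight telescopes to $O(\eps^{-1/2}\,\wdth(L))$, as claimed. The condition ``$ab$ lies below $L$'' is what guarantees that the straight segment $ab$ is a genuine chord of the region below $L$ and in particular crosses the splitting segment when $a,b$ are on opposite sides, so the SLT routing through $s$ stays valid; and the slope condition is what guarantees the rescaled staircase/rectangle configuration has the bounded aspect ratio needed for Lemma~\ref{lem:combine2}.

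The third step is the weight bookkeeping and the stretch bookkeeping. For the weight: at each recursion node we add the graph $L$ itself (which we only pay for once, globally, since $L$ is shared — it is part of the ``$G$ comprised of $L$ and additional edges'') plus $O(\eps^{-1/2}w)$ of SLT edges; summing $O(\eps^{-1/2}w)$ over a recursion tree in which widths at each level sum to $\wdth(L)$ and the depth is $O(\log(\wdth(L)/w_{\min}))$ gives $O(\eps^{-1/2}\wdth(L)\log(\cdot))$ — so to get the clean bound without the logarithmic factor I would instead charge the SLT weight of each subproblem against the \emph{horizontal} perimeter (equivalently the width) of that subproblem's own shadow polygon and observe that the shadow polygons at a fixed recursion level are interior-disjoint and their widths sum to at most $\wdth(L)$, while across \emph{all} levels each unit of width is charged $O(1)$ times because the shadow of a horizontal edge of $L$ is ``used up'' once it is cut off — this is precisely the content of Fig.~\ref{fig:shadow}(b), where the shadows of the horizontal edges in the subpolygons are shown to be disjoint. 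For the stretch: a pair $a,b$ is resolved either by a single combined-SLT step (stretch $1+O(\eps)$) or is passed down to exactly one child and resolved there, so every pair gets exactly one $1+O(\eps)$ path; no concatenation of many SLT subpaths occurs, so no blow-up in the stretch factor. The main obstacle I anticipate is making the ``charge each shadow polygon's SLT weight to a disjoint region of width at most $\wdth(L)$ only $O(1)$ times'' argument fully rigorous — i.e., choosing the splitting level so that the recursion both halves the width (for a good depth bound, needed if one prefers the cruder charging) \emph{and} produces shadow polygons whose horizontal shadows are genuinely disjoint across the whole recursion tree; this is where the careful geometry of shadow polygons (slope-$\eps^{-1/2}$ boundary segments nest nicely under $x$- and $y$-monotone staircases) does the real work, and it is the piece that the figure is signalling.
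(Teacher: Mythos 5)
Your high-level plan (shadow polygons, SLTs, recursion on width) is in the right spirit, but it has a genuine gap exactly where you flag it: the weight analysis. With a balanced split at a horizontal edge, the children's widths sum to the parent's width, so the total width at \emph{every} recursion level is $\wdth(L)$; each level then costs $\Theta(\eps^{-1/2}\wdth(L))$ and you pay a $\log$ factor over the recursion depth. Your proposed repair --- charging each subproblem's SLT weight to its own shadow and arguing each unit of width is charged $O(1)$ times --- is not established and is not how the disjointness of shadows helps: shadows at one level are disjoint, but nothing in a balanced-split recursion prevents the same horizontal extent of $L$ from supporting subproblems at every level. The paper's mechanism is different and is the missing idea: it never balances a split at all. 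Instead it alternates two shadow constructions with \emph{different} slope thresholds --- shadows of slope $\geq \eps^{-1/2}$ taken from the vertical edges (the polygons $U$, resp.\ $W$), then shadows of slope $\geq \frac12\eps^{-1/2}$ taken from the horizontal edges (the polygons $V$). Because each $W$ has aspect ratio $\hght(W)/\wdth(W)=\eps^{-1/2}$ while each $V$ has aspect ratio $\frac12\eps^{-1/2}$, and the heights of the $W$'s are dominated by the heights of the $V$'s, one gets $\sum_{W}\wdth(W)\leq \frac12\sum_{U}\wdth(U)$ (Equation~\eqref{eq:width0}); the total width of the subproblems decays geometrically per level, and the $O(\eps^{-1/2}\wdth(U))$ per-level cost sums to $O(\eps^{-1/2}\wdth(L))$ with no logarithmic loss. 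Without this slope-alternation (or some substitute producing geometric decay of total width), your recursion does not give the claimed bound.

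A second, smaller gap is in how you serve pairs that straddle the cut. A single application of Lemma~\ref{lem:combine2} at the splitting level only serves pairs whose vertical separation is at least $\eps^{-1/2}$ times the width of the rectangle you erect there; a pair $a,b$ crossing the cut but lying close to it fails the aspect-ratio hypothesis and is served neither at this node nor in any child. The paper handles all scales by building, for each polygon $V$ with bottom vertex $s_V$, a dyadic sequence of SLTs from $s_V$ to the portions of $L$ between the horizontal lines $h_i$ and $h_{i+1}$ at heights $\hght(V)/2^i$, and then routing $a$ to $s_V$ along a $y$-monotone path of slope at least $\frac12\eps^{-1/2}$, whose near-vertical direction makes its stretch $1+O(\eps)$ by Lemma~\ref{lem:angle2}. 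You would need an analogous multi-scale family of SLTs at each cut (with the corresponding weight charged into the same geometric-decay budget) for your crossing case to go through.
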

\begin{proof}
If $a,b\in L$ and $ab$ lies below $L$, then either both $a$ and $b$ are in the same edge of $L$ (hence $L$ contains a straight-line path $ab$), or one point in $\{a, b\}$ is on a vertical edge of $L$ and the other is on a horizontal edge of $L$. We may assume w.l.o.g.\ that $a$ is on a vertical edge and $b$ is on a horizontal edge of $L$.

Let $A$ be the set of all points $p$ such that there exists $a\in L$ on some vertical edge of $L$ such that $\mathrm{slope}(ap)\geq \eps^{-1/2}$ and $ap$ is below $L$; see Fig.~\ref{fig:shadow}(a).
The set $A$ is not necessarily connected, the connected components of $A$ are shadow polygons for disjoint subpaths of $L$. Let $\mathcal{U}$ be the set of these shadow polygons. Note that for every pair $a,b\in L$, if $\mathrm{slope}(ab)\geq \eps^{-1/2}$ and $ab$ lies below the path $L$, then $ab$ lies in some polygon in $\mathcal{U}$.
For each polygon $U\in \mathcal{U}$, we construct a geometric graph $G(U)$ of weight $O(\eps^{-1/2}\mathrm{width}(U))$
such that $G(U)\cup L$ is a directional spanner for the point pairs in $S\cap U$. Then $L$ together with $\bigcup_{U\in \mathcal{U}} G(U)$ is a directional spanner for all possible $ab$ pairs. Since the shadow polygons in $\mathcal{U}$
are adjacent to disjoint portions of $L$, we have $\sum_{U\in \mathcal{U}} \mathrm{width}(U)\leq \mathrm{width}(L)$, and so $\sum_{U\in \mathcal{U}}\|G(U)\|=O(\eps^{-1/2}\mathrm{width}(L))$, as required.

\smallskip\noindent\textbf{Recursive Construction.}
For all $U\in \mathcal{U}$, we construct $G(U)$ recursively as follows. Assume that $|S\cap U|\geq 2$.
Let $B(U)$ be the set of all points $p\in U$ for which there exists a point $b$ on some horizontal edge of $U$ such that $bp\subset U$ and $\mathrm{slope}(ab)\geq \frac12 \eps^{-1/2}$; see Fig.~\ref{fig:shadow}(b).
The set $B(U)$ may be disconnected, each component is a simple polygon bounded by a contiguous portion of $L$ and a line segment of slope $\frac12 \eps^{-1/2}$. Denote by $\mathcal{V}$ the set of connected components of $B(U)$.

For every $V\in \mathcal{V}$, let $C(V)$ be the set of all points $p\in V$ for which there exists a point $a$ on some vertical edge of $V$ such that $ap\subset V$ and $\mathrm{slope}(ap)\geq \eps^{-1/2}$; see Fig.~\ref{fig:shadow}(b). Again, the set $C(V)$ may be disconnected, each component is a shadow polygon.
Denote by $\mathcal{W}$ the set of all connected components of $C(V)$ for all $V\in \mathcal{V}$.

Since $\hght(W)/\wdth(W)=\eps^{-1/2}$ for all $W\in \mathcal{W}$ and
$\hght(V)/\wdth(V)=\frac12\eps^{-1/2}$ for all $V\in \mathcal{V}$, we have
\begin{align}
\sum_{W\in \mathcal{W}}\mathrm{width}(W)
&= \sqrt{\eps}\cdot \sum_{W\in \mathcal{W}}\mathrm{height}(W)
\leq \sqrt{\eps}\cdot \sum_{V\in \mathcal{V}}\mathrm{height}(V)\nonumber\\
&= \frac12\, \sum_{V\in \mathcal{V}}\mathrm{width}(V)
\leq \frac12\, \sum_{U\in \mathcal{U}}\mathrm{width}(U). \label{eq:width0}
\end{align}

For every polygon $V\in \mathcal{V}$, let $s_V$ be the bottom vertex of $V$.
We construct a sequence of shallow-light trees from source $s_V$ as follows.
For every nonnegative integer $i\geq 0$, let $h_i$ be a horizontal line at distance $\mathrm{height}(V)/2^i$ above $s_V$. If there is any point in $S$ between $h_i$ and $h_{i+1}$,
then we construct an \textsf{SLT} from $s_V$ to the portion of $L$ between $h_i$ and $h_{i+1}$.
By Lemma~\ref{lem:staircase}, the total weight of these trees is $O(\eps^{-1/2}\mathrm{width}(V))$.
Over all $V\in \mathcal{V}$, the weight of these \textsf{SLT}s is $\sum_{V\in \mathcal{V}} O(\eps^{-1/2}\mathrm{width}(V))
=O(\eps^{-1/2}\mathrm{width}(U))$.
For all $V\in \mathcal{V}$, we also add the boundary $\partial V$ to our spanner,
at a cost of $\sum_{V\in \mathcal{V}} \per(V) =\sum_{V\in \mathcal{V}} O(\eps^{-1/2}\mathrm{width}(V))=O(\eps^{-1/2}\mathrm{width}(U))$.
This completes the description of one iteration.
Recurse on all $W\in \mathcal{W}$ that contain any point in $S$.

\smallskip\noindent\emph{Lightness analysis.}
Each iteration of the algorithm, for a shadow polygon $U$, constructs \textsf{SLT}s of total weight $O(\eps^{-1/2}\mathrm{width}(U))$, and produces subproblems whose combined width is at most $\frac12\wdth(U)$ by Equation~\eqref{eq:width0}.
Consequently, summation over all levels of the recursion yields
$\|G(U)\|=O(\eps^{-1/2}\mathrm{width}(U) \cdot \sum_{i\geq 0}2^{-i})=O(\eps^{-1/2}\mathrm{width}(U))$, as required.

\smallskip\noindent\emph{Stretch analysis.}
Now consider a point pair $a,b\in S$ such that $\mathrm{slope}(ab)\geq \eps^{-1/2}$,
$a$ is in a vertical edge of $L$, and $b$ is in a horizontal edge of $L$. Assume that $U$ is the smallest shadow polygon in the recursive algorithm above that contains both $a$ and $b$. Then $b\in V$ for some $V\in \mathcal{V}$, and $a$ is at or below vertex $s_V$ of $V$. Now we can find an $ab$-path $P_{ab}$ as follows:
First construct a $y$-monotonically increasing path from $a$ to $V_S$ along vertical edges of $L$ and along edges of some polygons in $\mathcal{V}$; all these edges have slope larger than $\frac{1}{2}\eps^{-1/2}$. Then from $s_V$, follow an \textsf{SLT} to $b$.
All edges of $P_{ab}$ from $a$ to $s_V$ have slope at least $\frac12\eps^{-1/2}$, and so their directions differ from vertical by at most $\mathrm{arctan}(2\eps^{1/2})\leq 3\eps^{1/2}$, using the Taylor expansion of $\tan(x)$ near $0$.
By Lemma~\ref{lem:angle2} the stretch factor of the paths from $a$ to $s_V$ and the path $as_Vb$ are each at most $1+O(\eps)$.
By Lemma~\ref{lem:staircase}, the \textsf{SLT} contains a path from $s_V$ to $b$ with stretch factor $1+O(\eps)$. Overall, $\|P_{ab}\|\leq (1+O(\eps))\|ab\|$.
\end{proof}

In Section~\ref{ssec:6.4}, we show that Lemma~\ref{lem:staircase} continues to hold if we replace the vertical edges of the staircase $L$ with angle-bounded paths. Furthermore, the horizontal edges can also be replaced by $x$-monotone paths of approximately the same length. Specifically, Lemma~\ref{lem:tame-staircase} below generalizes this result to so-called tame paths (defined in Section~\ref{sec:hist}).


\section{Construction of Directional Spanners in Histograms}
\label{sec:hist}

We would like to partition a simple rectilinear polygon $P$ into a collection $\mathcal{F}$ of simple polygons (faces), and then design a directional $(1+\eps)$-spanner for each face $F\in \mathcal{F}$ such that the total weight of these spanners is under control. Lemma~\ref{lem:staircase} tells us that we can handle staircase polygons efficiently.
The standard window partition~\cite{Link00,Suri90} would partition $P$ into histograms as indicated in Fig.~\ref{fig:window}(a).

\begin{figure}[htbp]
 \centering
 \includegraphics[width=0.95\textwidth]{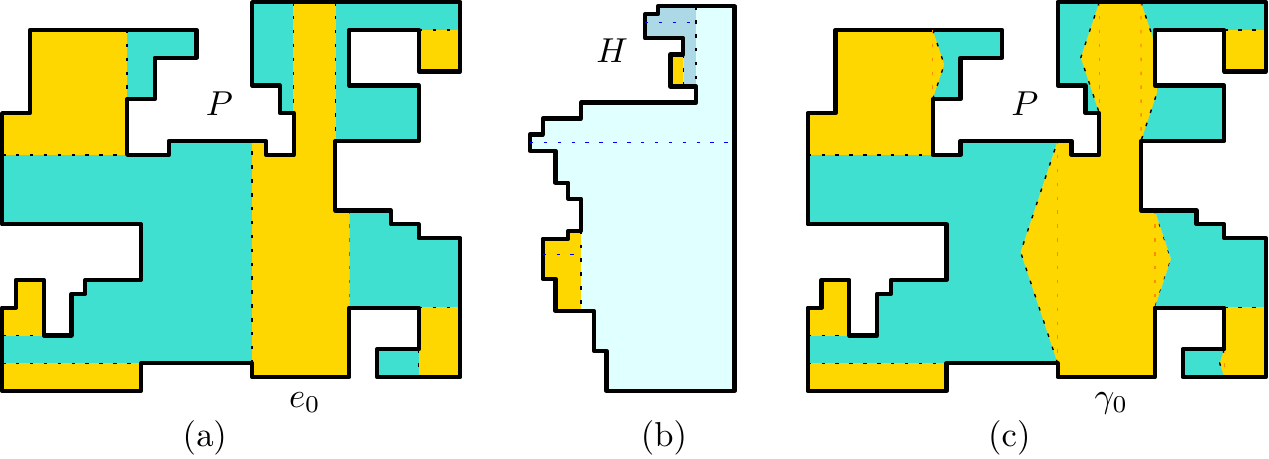}
 \caption{(a) A standard window partition of a rectilinear polygon $P$ into histograms, starting from a horizontal edge $e_0$. (b) A decomposition of a $y$-monotone histogram into staircase polygons.
 (c) The modified window partition of a rectilinear polygon $P$ into $x$-monotone $\Lambda$-histograms and $y$-monotone fuzzy histograms.}
    \label{fig:window}
\end{figure}

However, the worst-case weight of a standard decomposition of a histogram $H$ into staircases is  $\Theta(\per(H)\log n)$, where $n$ is the number of vertices of $H$. We cannot afford a $\log n$ factor (or any function of the cardinality $|S|$). To overcome this technical difficulty, we replace the vertical edges by nearly vertical $\delta$-angle-bounded paths (cf.~Lemma~\ref{lem:angle2}). By setting $\delta=\Theta(\eps^{-1/2})$, such paths provide enough flexibility to keep the weight of the subdivision under control; and our results on \textsf{SLT}s carry over to such ``modified'' staircases with only a constant-factor increase in the total weight.

We introduce some terminology; refer to Fig.~\ref{fig:Delta}. Let $\Lambda\geq 8$ be a constant.
\begin{itemize}\itemsep 0pt
\item A \emph{$\Lambda$-path} is a $y$-monotone path in which every edge is vertical, or has slope $\pm\,\Lambda\eps^{-1/2}$.
\item A \emph{$\Lambda$-histogram} is a simple polygon obtained from a histogram by replacing vertical edges with some $\Lambda$-paths.
    A $\Lambda$-histogram is \emph{$x$-monotone} (resp., \emph{$y$-monotone}) if it is obtained from an $x$-monotone (resp., $y$-monotone) histogram.
\item A \emph{fuzzy staircase} is a simple polygon bounded by a path $pqr$, where $pq$ is horizontal and $\slope(qr)=\pm\,\Lambda\eps^{-1/2}$, and a $pr$-path obtained from an $x$- and $y$-monotone staircase by replacing vertical edges with some $\Lambda$-paths.
\item A \emph{fuzzy histogram} is a simple polygon bounded by a $y$-monotone rectilinear path $L$ and a path
$\gamma$ of one or two edges of slopes $\pm\,\Lambda\eps^{-1/2}$; if the latter path has two edges, then its interior vertex is a reflex vertex of the polygon.
\end{itemize}

\begin{figure}[htbp]
 \centering
 \includegraphics[width=0.75\textwidth]{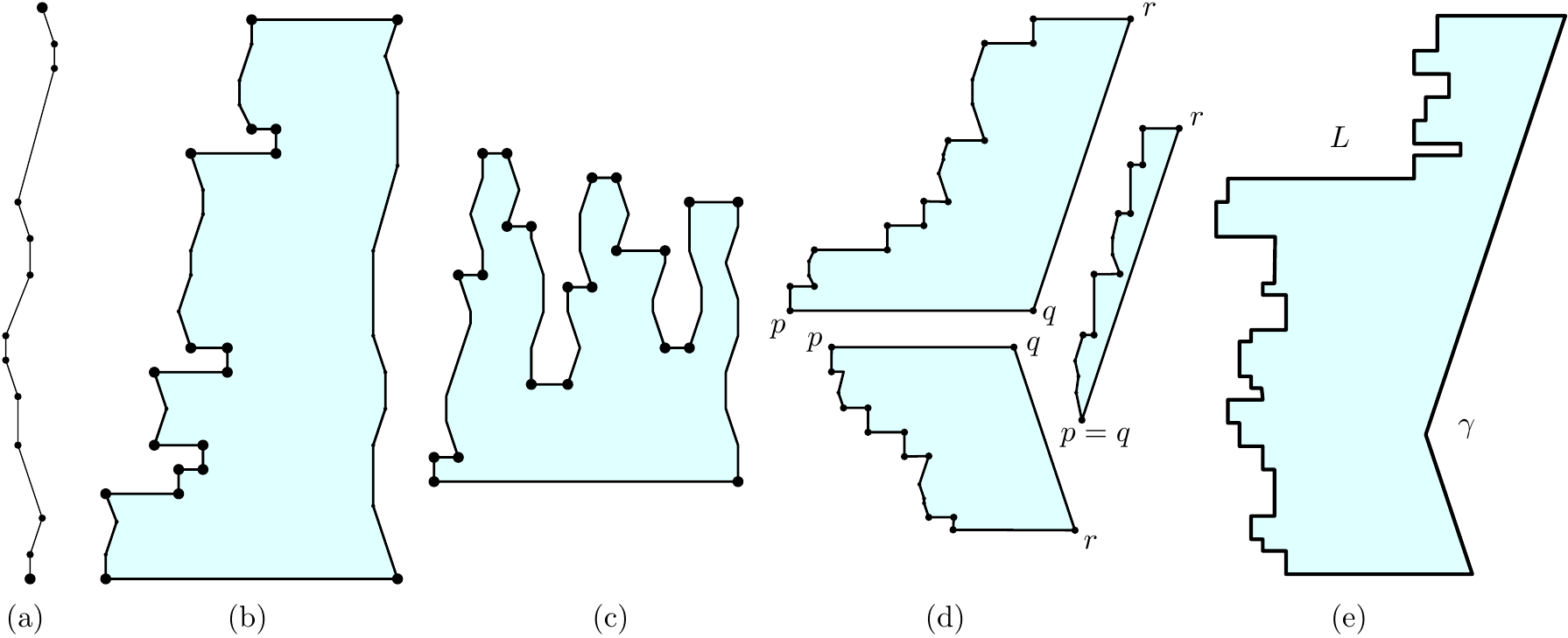}
 \caption{(a) A $\Lambda$-path.
 (b) A $y$-monotone $\Lambda$-histogram.
 (c) An $x$-monotone $\Lambda$-histogram.
 (d) Three fuzzy staircases. (e) A fuzzy histogram}
    \label{fig:Delta}
\end{figure}

\subsection{Fuzzy Window Decomposition}
\label{ssec:6.1}

Let $R$ be a rectilinear simple polygon. By modifying the standard window-partition,
we show how to partition $R$ into $y$-monotone fuzzy histograms and $x$-monotone $\Lambda$-histograms.

\begin{lemma}\label{lem:window}
Every rectilinear simple polygon $P$ can be subdivided into a collection $\mathcal{H}$ of
$x$-monotone $\Lambda$-histograms and $y$-monotone fuzzy histograms such that
$\sum_{H\in \mathcal{H}}\per(H)=O(\per(P))$.
Furthermore, in the $x$-monotone $\Lambda$-histograms, there is no chord
between interior points of two ascending (resp., two descending) $\Lambda$-paths.
\end{lemma}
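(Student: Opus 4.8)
The plan is to mimic the standard recursive window partition (Fig.~\ref{fig:window}(a)) but to replace every vertical cut by a nearly vertical $\Lambda$-path, so that the faces produced are $x$-monotone $\Lambda$-histograms and $y$-monotone fuzzy histograms instead of ordinary histograms. I would maintain a collection of subpolygons of $P$, each carrying a distinguished \emph{gate} on its boundary; the gates alternate between horizontal segments and $\Lambda$-paths of one or two edges of slope $\pm\,\Lambda\eps^{-1/2}$. Initially the only subpolygon is $P$, with an arbitrary horizontal edge $e_0$ of $P$ as its gate (after a reflection we may assume $P$ lies above $e_0$). As in the standard construction, the recursion terminates because each generated subpolygon has strictly smaller complexity than its parent.

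Given a subpolygon $Q$ with a \emph{horizontal} gate $g$, I would form the maximal $x$-monotone histogram $H\subseteq Q$ with base $g$ by shooting vertical rays from $g$ into $Q$ until they hit $\partial Q$. Its ceiling is an $x$-monotone rectilinear staircase whose horizontal edges lie on $\partial P$ and whose vertical edges are either on $\partial P$ or are chords of $Q$ (the \emph{windows}). I would replace each window $w$ by a $\Lambda$-path $\gamma_w$ with the same two endpoints, using one or two edges of slope $\pm\,\Lambda\eps^{-1/2}$ and bulging slightly into $H$, so that the interior vertex of $\gamma_w$ (when it has two edges) is a reflex vertex of the subpolygon lying beyond $w$, as the definition of a fuzzy histogram demands. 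The face bounded by $g$ and the modified ceiling is then an $x$-monotone $\Lambda$-histogram, which I add to $\mathcal H$; beyond each $\gamma_w$ lies a fresh subpolygon whose gate is $\gamma_w$. Symmetrically, from a subpolygon with a $\Lambda$-path gate $\gamma$ I would shoot \emph{horizontal} rays from $\gamma$ into the subpolygon; because $\gamma$ is nearly vertical this carves off a $y$-monotone fuzzy histogram bounded by $\gamma$ and a $y$-monotone rectilinear path $L$, whose vertical edges lie on $\partial P$ and whose horizontal edges are either on $\partial P$ or are (horizontal) windows; I add this fuzzy histogram to $\mathcal H$ and recurse on the subpolygons beyond its windows.

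For the weight bound, every edge of a face either lies on $\partial P$ --- contributing at most $2\,\per(P)$ in total --- or is one of the slanted windows, each bordering exactly two faces. A $\Lambda$-path is $y$-monotone and each of its edges deviates from the vertical by only $\arctan(\Lambda^{-1}\sqrt\eps)=O(\sqrt\eps)$, so $\|\gamma_w\|\le \sec\!\big(O(\sqrt\eps)\big)\,|w| = (1+O(\eps))\,|w|$. Hence it suffices to bound $\sum_w|w|$, the total length of the unslanted windows, which is $O(\per(P))$ exactly as for the standard window partition --- this follows from the fact that every axis-parallel segment inside $P$ stabs at most three faces, a property that survives the slanting since each bulge is tiny. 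Therefore $\sum_{H\in\mathcal H}\per(H)\le 2\,\per(P)+2(1+O(\eps))\sum_w|w| = O(\per(P))$.

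Finally, for the ``furthermore'' claim I would choose, for each window $w$ of an $x$-monotone histogram, which edge of $\gamma_w$ comes first --- equivalently, on which side the reflex vertex bulges --- according to whether the ceiling steps up or down across $w$, so that the $\Lambda$-paths met consecutively along one ceiling are ``ascending'' and ``descending'' in the pattern forced by the staircase's turns. One then argues that a chord of the $\Lambda$-histogram joining interior points of two $\Lambda$-paths of the same type would have to leave the region below the ceiling, so every such chord runs between an ascending and a descending $\Lambda$-path. I expect the hard part to be exactly this last step together with the simplicity guarantees of the second paragraph: showing that $\gamma_w$ can always be routed with one or two edges while staying inside $H$ and avoiding the other $\gamma_{w'}$ (handling, by a local adjustment, the rare case where the width available for the bulge is too small), and that the chosen orientations really do forbid a chord between two same-type $\Lambda$-paths. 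This is a finite but fiddly case analysis on the local shape of the staircase ceiling, and it is where all the care in the definitions of $\Lambda$-path and fuzzy histogram gets used.
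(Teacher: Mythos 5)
Your high-level plan is the same as the paper's: recurse on gated subpolygons, alternating a vertical-visibility phase from a horizontal gate with a horizontal-visibility phase from a nearly vertical gate, and slant the vertical windows into $\Lambda$-paths so that the second-phase faces are fuzzy histograms. The genuine gap is in how you slant. You keep the endpoints of each window $w$ and bulge a two-edge wedge \emph{into} $H$, claiming this makes the interior vertex of $\gamma_w$ a reflex vertex of the subpolygon beyond $w$. It does the opposite: the subpolygon beyond $w$ then \emph{contains} the thin wedge, so at the apex its interior angle is the small wedge angle, i.e.\ a sharp convex vertex. This violates the definition of a fuzzy histogram, and the orientation is not cosmetic: Lemma~\ref{lem:fuzzy} uses the apex as the vertex $d$ from which the horizontal chord $cd$ is drawn and charges the vertical projections of the gate edges to the rectilinear path $L$, which requires the wedge to poke into the fuzzy histogram (reflex apex), not away from it. If you instead bulge into $C_0$ to restore reflexivity, the wedge of horizontal extent about $\|w\|\sqrt{\eps}/(2\Lambda)$ must fit inside $C_0$ next to $w$, which fails whenever $C_0$ is shallow there or two bulges face each other across a thin feature; this is not a rare corner case that a ``local adjustment'' dispatches --- it is where the construction has to do real work. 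The paper sidesteps both problems by \emph{not} preserving the window's endpoints: it grows $H$ past each window by the region $H(C_0)$ of points of $C_0$ reachable from $w$ by line segments of both slopes $+\Lambda\eps^{-1/2}$ and $-\Lambda\eps^{-1/2}$. This double-cone visibility region clips itself to $C_0$ automatically, and every remaining component of $R\setminus H$ then has a gate of one or two slanted edges whose interior vertex is reflex for that component, with no case analysis.

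Two further points are asserted rather than proved in your write-up. For the weight bound you appeal to the stabbing-number-three property of the standard window partition ``surviving the slanting''; since your subpolygons are no longer rectilinear after the bulges, this needs an argument (the paper instead charges the axis-parallel projection of each new gate to the portion of $\partial P$ on the histogram that is cut off and never refined again, losing only a factor $1+\eps/(2\Lambda^2)$ on the slanted gates, which yields $\sum_{H\in\mathcal{H}}\per(H)=O(\per(P))$ directly). And you explicitly defer the ``furthermore'' no-chord property; with the paper's outward-poking gates it follows from the sidedness of the interior along an $x$-monotone ceiling (the interior lies on one fixed side of every ascending $\Lambda$-path, so a straight chord cannot have both endpoints in the interiors of two ascending ones), whereas with inward-poking, possibly colliding bulges it is not clear this argument survives. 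As it stands, the proposal would need the visibility-based slanting (or an equivalent device) to be repaired before the remaining steps can be carried out.
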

\begin{proof}
We describe a recursion on instances $(R,\gamma)$ of the following two types:
\begin{itemize}\itemsep 0pt
\item $R$ is a rectilinear simple polygon in which $\gamma$ is a horizontal side;
\item $R$ is a fuzzy histogram in which $\gamma$ is the path of one or two edges of slopes $\pm\,\Lambda\eps^{-1/2}$;
       if $\gamma$ has two edges, then its interior vertex is a reflex vertex of $R$.
\end{itemize}
Initially, let $R=P$ and $\gamma$ an arbitrary horizontal edge of $P$.
In one iteration, consider an instance $(R,\gamma)$; see Fig.~\ref{fig:window}(c) for examples.

First let $(R,\gamma)$ be an instance, where $\gamma$ is a horizontal edge of $R$.
We define a histogram $H_0\subset R$ as the set of all points that can be connected to a point in $\gamma$ by a vertical line segment in $R$. Let $\mathcal{C}_0$ be the collection of connected components $C_0$ of $R\setminus H_0$.
Each component $C_0\in \mathcal{C}_0$ is a simple rectilinear polygon that has a unique vertical edge (\emph{window}) $w$ along the boundary of $H_0$. Let $H(C_0)$ denote the set of points in $P$ that can be connected to some point in $w$ by line segments of
slope both $\Lambda\eps^{-1/2}$ and $-\Lambda\eps^{-1/2}$.
Now we can define $H=H_0\cup \left(\bigcup_{C_0\in \mathcal{C}_0} H(C_0)\right)$, and note that $H$ is an $x$-monotone $\Lambda$-histogram. Let $\mathcal{C}$ be the collection of the connected components of $R\setminus H$. Note that each $C\in \mathcal{C}$ is a simple polygon in which all edges are axis-parallel except for the common boundary $\gamma(C)$ with $H$, which consists of one or two line segments of slopes $\pm\,\Lambda\eps^{-1/2}$. Furthermore, if $\gamma$ has two edges, then its interior vertex is a reflex vertex of $C$. Consequently, we can recurse on the instances
$(C,\gamma(C))$ for all $C\in \mathcal{C}$.

Next let $(R,\gamma)$ be an instance, where $\gamma$ is a path of one or two edges of slopes $\pm\, \Lambda\eps^{-1/2}$. We define a fuzzy histogram $H\subset R$ as the set of all points that can be connected to a point in $\gamma$ by a horizontal line segment in $H$.
Let $\mathcal{C}$ be the collection of the connected components of $R\setminus H$. Each component $C\in \mathcal{C}$ is a rectilinear simple polygon that has a unique horizontal edge $e(C)$ along the boundary of $H$. We then recurse on the instances $(C,e(C))$ for all $C\in \mathcal{C}$.

\smallskip\noindent\emph{Weight analysis.}
Consider the input rectilinear polygon $P$, and the subdivision created by the above algorithm. Each iteration of the recursion handles an instance $(R,\gamma)$, and creates a $x$-monotone $\Lambda$-histogram or a $y$-monotone fuzzy histogram $H$, which is not partitioned further. The cost of creating $H$ equals to the weight of the subdivision curves on boundary between $H$ and $R\setminus H$. Each component of $\partial H\cap \partial(R\setminus H)$ is the curve $\gamma'$ of a subproblem $(R',\gamma')$ at the next level of the recursion. Since the connected components of $R\setminus H$ are disjoint, the regions $R'$ at the next level are also disjoint.

At the next level of the recursion, in each instance $(R',\gamma')$, we charge the weight of $\gamma'$ to the common boundary of $R'$ and the original polygon $P$ as follows. When $\gamma'$ is a horizontal side of $R'$, then we charge the vertical projection of $\gamma$ to the boundary of $R'$. The projection consists of one or more horizontal line segments of total weight $\|\gamma'\|$. The projection lies on the boundary of the $x$-monotone $\Lambda$-histogram $H'$ that is not partitioned further. When $\gamma$ is a path of one or two edges of slope
$\pm\, \Lambda\eps^{-1/2}$, we charge the horizontal projection of $\gamma$ to the boundary of $R'$. The projection consists of one or more vertical line segments of total weight at least $\|\gamma'\|/(1+\eps/2\Lambda^2)$. The projection is on the boundary of the $y$-monotone fuzzy historgam $H$' that is not partitioned further. In both cases, the portion of the boundary of $P$ that we charge is on the boundary of the histogram $H'$, and will not be charged in recursive  subproblems. Overall every line segment $ab$ on the boundary of the input polygon $P$ receives a charge of at most $\|ab\|/(1+\eps/2\Lambda^2)=O(\|ab\|)$, hence the total weight of all subdivision edges is bounded by $O(\per(P))$.
\end{proof}

\old{
The recursion on $(R,\gamma)$ has two types.
In the first type of the recursion, $\gamma$ is a horizontal side
of $R$. We consider the histogram $H_0$ defined by a subset of $R$, that is reachable from $\gamma$ by vertical line segments.
Each of these vertical line segments creates exactly one connected component in $\mathcal{C}_0$, unless the line segment belongs to the boundary of $R$ and in that case it creates none. We call these segments windows. Now, consider each component $C_0\in \mathcal{C}_0$ and its corresponding window $w$.
We replace them by $\Lambda$-paths, and these paths together with
$H_0$, give us a $x$-monotone $\Lambda$-histogram $H$.
Note, $R\setminus H$ creates a collection of simple polygons $\mathcal{C}$ made of horizontal and vertical sides except one  side that is comprised of one or two line segments of slopes $\pm\, \Lambda\eps^{-1/2}$. This brings us to the second type of the recursion, where $\gamma$ is a path of one or two line segments of slopes
$\pm\, \Lambda\eps^{-1/2}$. For each such polygon
$C\in\mathcal{C}$, we obtain a fuzzy histogram by connecting all points of the current polygon that are reachable from $\gamma$ by horizontal segments.
Clearly, the fuzzy histograms create only horizontal windows,
and do not create any vertical window. This means that any horizontal chord defined by two points $a,b\in R$, can intersect at most two vertical windows created due to the first type of the recursion; which are eventually replaced by $\Lambda$-paths. Otherwise, it will contradict the construction of $x$-monotone $\Lambda$-histogram $H$ as well as the fuzzy histograms.
From the similar arguments, it follows that every vertical chord can intersect at most two horizontal windows. Thereby, every vertical chord intersects maximum three faces of the subdivision.

Next, we argue the perimeter of the subdivision.
Due to the first type of the recursion, we create a set of vertical segments which may or may not be part of input polygon boundary. If they are part of the boundary then we are done.
But, for the ones that are not (the windows), we can charge them against the vertical boundary. One can simply visualize the boundary as a vertical line segment starting from the topmost point to the bottommost point of the input rectangle, and the windows are a set of intervals on it. A slight caveat is that we could end up charging the same part of the boundary again and again. What comes to rescue is that we know that
no horizontal chord can intersect more than two windows. This means every time there is a horizontal line that intersects more than two windows, we get a vertical boundary from the input rectangle to whom we can charge. Hence, the total sum of these intervals length is some constant times the length of the vertical boundary of the input rectangle. However, the vertical segments are replaced by $\Lambda$-paths. This does not create much trouble though, as due to Lemma~\ref{lem:angle2}, we know that that length of an $\delta$-angle bounded is at most $(1+\delta^2/2)$ times its original length.
Then, in the second type of the recursion we create the fuzzy histograms. These histograms create a set of horizontal segments (windows). By symmetric argument it is possible show that the sum of horizontal segments is at most some constant times the length of the horizontal boundary. Therefore, by summing these costs we obtain the total perimeter of the subdivision $O(per(P))$.
}

\subsection{From $y$-Monotone Histograms to Fuzzy Staircases}
\label{ssec:ymonotne}

In this section, we show how to construct a directional $(1+\eps)$-spanner for points on the boundary of an $y$-monotone histogram $H$, using the boundary of $H$ and additional edges of weight $O(\eps^{-1/2}\,\per(H))$.

\begin{figure}[htbp]
 \centering
 \includegraphics[width=0.9\textwidth]{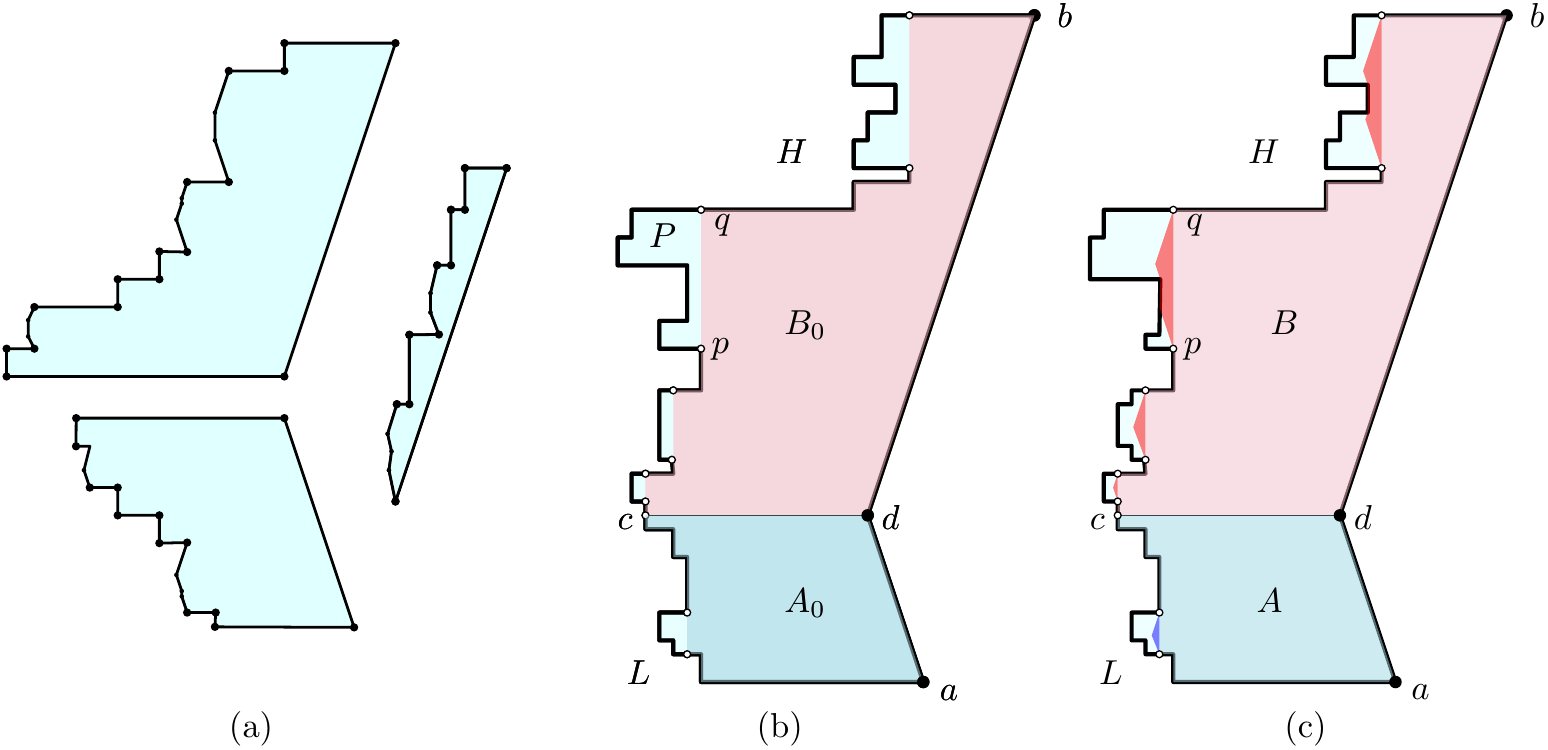}
 \caption{(a) Three $\Lambda$-staircases.
 (b) A fuzzy histogram $H$, bounded by $L$ and a path $adb$.
 (c) The subdivision of $H$ into two $\Lambda$-staircases $A$ and $B$, and fuzzy histograms.}
    \label{fig:fuzzy}
\end{figure}

We recursively decompose a fuzzy histogram into fuzzy staircases.
\begin{lemma}\label{lem:fuzzy}
We can subdivide a fuzzy histogram into a family $\mathcal{F}$ of fuzzy staircases
using subdivision edges of total weight $O(\eps^{-1/2}\,\per(H))$ such that
$\sum_{F\in \mathcal{F}} \hper(F)=O(\per(H))$.
\end{lemma}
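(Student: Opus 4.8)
The plan is to mimic the standard recursive decomposition of a $y$-monotone histogram into staircases, but to cut along $\Lambda$-paths instead of vertical segments, so that the recursion terminates after $O(1)$ levels of charging rather than $\Theta(\log n)$ levels. Recall that a fuzzy histogram $H$ is bounded by a $y$-monotone rectilinear path $L$ (the ``complicated'' side) and a path $\gamma$ of one or two edges of slope $\pm\,\Lambda\eps^{-1/2}$ (the ``base''). First I would reduce to the case where $\gamma$ is a single segment: if $\gamma=adb$ has a reflex interior vertex $d$, split $H$ along the horizontal line through $d$ into an upper and a lower fuzzy histogram, each with a one-edge base; this costs at most $\hper(H)$ in horizontal subdivision weight and $O(1)$ factor in perimeter, and it is charged to the horizontal extent of $H$. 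So assume $\gamma$ is one segment, say of slope $+\Lambda\eps^{-1/2}$, spanning $H$ from its bottom vertex $p$ to a vertex $r$.

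Next, here is one iteration of the recursion on such a one-base fuzzy histogram $H$ with base $\gamma=pr$. Starting from $\gamma$, take the maximal fuzzy staircase $A\subseteq H$ ``visible'' from $\gamma$ in the following sense: $A$ is bounded by $\gamma$, by the portion of $L$ reachable from $\gamma$, and — at each place where $L$ turns away and creates a pocket — by a new cut segment of slope $\pm\Lambda\eps^{-1/2}$ issued from the reflex vertex of $L$ at that pocket. Concretely, walk along $L$ from $p$ to $r$; each reflex vertex $v$ of $L$ (a vertex where $L$ turns so as to hide part of $L$ behind it) spawns a pocket, and we cut that pocket off $H$ by the segment through $v$ of slope $\pm\Lambda\eps^{-1/2}$ chosen to separate the pocket; the region between $\gamma$ and the remaining visible part of $L$, together with these cuts, is $A$, which is indeed a fuzzy staircase (its complicated side is a $\Lambda$-path-refined staircase, its base is $\gamma$, and its ``far'' boundary consists of the new cut segments). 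Each pocket $C$ that we cut off is itself a fuzzy histogram whose base is the cut segment of slope $\pm\Lambda\eps^{-1/2}$ — so we recurse on each $C$. (If the base of a recursive instance is a single slope-$\pm\Lambda\eps^{-1/2}$ segment we are in exactly the same situation; reflex interior vertices only appear transiently and are handled by the horizontal split above, or can be avoided by choosing the cut slope sign consistently with the side of $L$ on which the pocket lies.)

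For the weight analysis, the key point is the charging scheme, analogous to the window-partition argument in Lemma~\ref{lem:window} and to the $O(1)$-stabbing property of window partitions. The cut segments created when we carve $A$ out of $H$ each issue from a distinct reflex vertex of $L$, hence each has a horizontal projection that lies ``under'' a horizontal edge of $L$; since each cut has slope $\pm\Lambda\eps^{-1/2}$, its length is at most $O(\eps^{-1/2})$ times its horizontal projection. Moreover — and this is where the slope $\Lambda\eps^{-1/2}$ rather than vertical matters — distinct pockets at the current level have disjoint horizontal projections onto $L$, and a horizontal segment inside $H$ can meet at most $O(1)$ of the cut segments at all levels combined (this is the fuzzy analogue of the three-histograms-stabbed property, and it follows because a pocket at a deeper level of recursion is geometrically ``behind'' the cut of its parent, so a horizontal line crossing $k$ nested cuts forces a horizontal stretch of $L$ of comparable length between them). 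Therefore the total horizontal projection of all cut segments over the whole recursion is $O(\hper(L))=O(\per(H))$, giving $\sum_{F}\hper(F)=O(\per(H))$ for the horizontal parts of the fuzzy staircases, and $\sum_F (\text{weight of cut segments}) = O(\eps^{-1/2})\cdot O(\per(H)) = O(\eps^{-1/2}\per(H))$ for the slanted subdivision edges. Adding the boundary $\partial H$ itself once (weight $\per(H)$) and the horizontal splitting segments from the first reduction (total $O(\hper(H))$) keeps both bounds as claimed.

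The main obstacle I expect is making the ``bounded stabbing'' claim precise for the slanted $\Lambda$-path cuts: in the classical window partition the cuts are vertical and one uses that each horizontal segment stabs $\le 3$ histograms, but here the cuts are tilted and the pockets are themselves fuzzy, so I need a clean invariant — e.g.\ that every recursive instance is a fuzzy histogram whose base segment's horizontal projection is contained in the horizontal projection of a single horizontal edge of the original $L$, and that these projections are nested or disjoint across the recursion tree — from which the geometric-series bound $\sum_{\text{level }i} (\text{cut horizontal extent at level }i) \le O(1)\cdot \hper(L)$ follows. The rest (that $A$ is genuinely a fuzzy staircase in the sense of the definition in Section~\ref{sec:hist}, and that the one-base reduction only costs $O(\hper(H))$) is routine.
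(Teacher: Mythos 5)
Your construction is close in spirit to the paper's (recursively peel fuzzy staircases off the slanted base and recurse on the pockets, which are again fuzzy histograms), but the weight analysis --- which is the entire content of the lemma --- rests on the stabbing claim that you yourself flag, and both that claim and the inferences drawn from it are problematic. First, same-level pockets need not have disjoint horizontal projections: since a fuzzy histogram is bounded by two $y$-monotone paths, same-level pockets have disjoint ranges of $y$-coordinates, but their $x$-projections can all coincide (take $L$ comb-shaped: long rightward edge, short rise, long leftward edge, repeated), so cut widths cannot be charged with multiplicity $O(1)$ to ``the horizontal edge of $L$ under which the cut lies''; moreover a single cut of slope $\pm\,\Lambda\eps^{-1/2}$ issued at a reflex vertex can have horizontal projection much longer than the horizontal edge of $L$ ending at that vertex (tall, thin pocket), so even the per-cut charge is not bounded by the targeted edge. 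Second, even if every horizontal segment met only $O(1)$ cuts over all levels, that would bound the total \emph{height} of the (nearly vertical) cuts, not their total horizontal projection, so the step ``total horizontal projection $=O(\hper(L))$'' is a non sequitur; and the horizontal chords you insert to split two-edge bases (one per recursive instance) are not controlled by this argument at all --- charging each to the horizontal extent of its own instance lets the same horizontal edges of $L$ be charged at every level, i.e.\ a $\Theta(\mathrm{depth})$ factor, which is exactly the $\log$-type loss the lemma must avoid. The $O(1)$-stabbing claim itself is unsubstantiated: nested pockets can share a common horizontal line without forcing any comparable horizontal stretch of $L$ between consecutive cuts.

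The missing idea is the paper's vertical-visibility charging. In each iteration the paper first takes $A_0$ (resp.\ $B_0$) to be the points below (above) the chord $cd$ that see $cd\cup da$ (resp.\ $cd\cup db$) along a vertical segment inside $H$, and only then expands across each vertical window into its pocket by double visibility along slopes $\pm\,\Lambda\eps^{-1/2}$, which turns the window into a $\Lambda$-path and makes $A,B$ genuine fuzzy staircases (this is also what guarantees the $x$-monotonicity of their $L$-side, which your ``maximal staircase visible from $\gamma$'' does not obviously have). The payoff of vertical visibility is in the accounting: the vertical projections of $cd$ and of the base $ad\cup db$ land on horizontal edges of $L$ that lie on the boundary of the staircases $A,B$ finished in this very iteration, and the slanted pocket boundaries (the bases of the children) are likewise charged to the horizontal edges of $L$ consumed when the corresponding child is processed. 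Hence every horizontal edge of $L$ is charged $O(1)$ times in total, at rate $O(1)$ for horizontal subdivision edges and $O(\Lambda\eps^{-1/2})$ for slanted ones, which yields both $\sum_{F}\hper(F)=O(\per(H))$ and total subdivision weight $O(\eps^{-1/2}\per(H))$ with no stabbing or depth bound needed. Without this (or an equivalent amortization) your proposal does not establish the lemma.
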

\begin{proof}
Let $H$ be a fuzzy histogram, bounded by an $y$-monotone increasing rectilinear $ab$-path $L$,
and an $ab$-path $\gamma$ of one or two edges of slope $\pm\,\Lambda\eps^{-1/2}$; see Fig.~\ref{fig:fuzzy}(c).
We subdivide $H$ recursively. Specifically, we subdivide $H$ into
two fuzzy staircases and a collection of fuzzy histograms,
and recurse on the remaining fuzzy histograms.

\smallskip\noindent\emph{One iteration.}
Let $d$ be the leftmost vertex of $\gamma$, which subdivides $\gamma$ into subpaths
$\gamma_1$ and $\gamma_2$ above and below $d$, respectively.
Let $c\in L$ such that $y(c)=y(d)$. Note that $cd$ is a chord of $H$.
We define two initial fuzzy staircases:
Let $A_0$ be the set of points in $H$ below the line $cd$ that
can connected to some point in $cd\cup da$ by a vertical line in $H$;
and let $B_0$ be the set of points in $H$ above the line $cd$ that
can be connected to some point in $cd\cup db$ by a vertical line in $H$.

We expand $A_0$ and $B_0$ into the connected components of $H\setminus (A_0\cup B_0)$, that we call \emph{pockets}, as follows; refer to Fig.~\ref{fig:fuzzy}(d).
Let $\mathcal{P}$ be the set of all connected components of $H\setminus (A_0\cup B_0)$,
and let $P\in \mathcal{P}$. Then $P$ is a $y$-monotone histogram bounded by
a rectilinear $y$-monotone path $L(P)\subset L$, and a vertical (window)
$w$ on the boundary of $A_0$ or $B_0$.
Assume w.l.o.g.\ that $w\subset \partial A_0$.
Let $A(P)$ be the set of points $p\in P$ such that
$P$ contains two line segments, of slopes $\Lambda\eps^{-1/2}$ and $-\Lambda\eps^{-1/2}$
between $p$ and some points in $w$.
Now let
$A=A_0\cup\left(\bigcup_{P\in \mathcal{P}} A(P)\right)$;
and $B=B_0\cup\left(\bigcup_{P\in \mathcal{P}} B(P)\right)$ is defined analogously.
This expansion step, replaces every window $w$ by a $\Lambda$-path,
consequently $A$ and $B$ are fuzzy staircases.
Furthermore, each connected component of $H\setminus (A\cup B)$ is a
fuzzy histogram. This completes the description of one iteration.

\smallskip\noindent\emph{Lightness analysis.}
Let $\mathcal{F}$ be the set of fuzzy histograms produced by the algorithm.
Every fuzzy histogram $H_i$ in the recursion is adjacent to some subpath $L_i$ of the initial $y$-monotone rectilinear path $L$. We can measure the progress of the algorithm in terms of the total weight of the horizontal edges of $L_i$.
In one iteration, we add a horizontal subdivision edge $cd$, and subdivision paths of slopes $\pm\,\Lambda\eps^{-1/2}$ on the boundary between $A\cup B$ and $H\setminus (A\cup B)$. The vertical projection of $cd$, $ad$, and $db$ to $L$ are on the common boundary of $L$ with the fuzzy staircases $A$ or $B$. We can charge $\|cd\|$ and $\frac{\sqrt{\eps}}{\Lambda}\, (\|ad\|+\|db\|)$ to these portions of $L$. Similarly, the weight of all edges of slope $\pm\,\Lambda\eps^{-1/2}$ on the left boundary of $A$ and $B$ are charged to the horizontal edges of $L$ in the next level of the recursion. The weight of all subdivision edges is at most $\Lambda\eps^{-1/2}$ times the weight of the horizontal edges of $L$, which is $\hper(H)$. Overall, the total weight of all subdivision edges is $O(\eps^{-1/2}\,\per(H))$.
Since we directly charge the new horizontal edges (of the subdivision) to horizontal edges of $L$, we have $\sum_{F\in \mathcal{F}} \hper(F)=O(\per(H))$.
\end{proof}

\subsection{From $x$-Monotone $\Lambda$-Histograms to Tame Histograms}
\label{ssec:xmonotne}

Let $H$ be a $x$-monotone $\Lambda$-histogram produced by our modified window partition (Lemma~\ref{lem:window}). We first use horizontal lines to subdivide $H$ into smaller pieces that are easier to handle. We begin by defining the type of polygons that we want to obtain.
\begin{itemize}
\item A \emph{tame histogram} is a simple polygon $H$ bounded by a horizontal line segment $pq$ and an $pq$-path $L$ that consists of ascending or descending $\Lambda$-paths and $x$-monotone increasing horizontal edges
    with the following properties:
    (i) there is no chord between interior points of any two ascending (resp., two descending) $\Lambda$-paths; and
    (ii) for every horizontal chord $ab$, with $a,b\in L$, the subpath $L_{ab}$ of $L$ between $a$ and $b$
    satisfies $\|L_{ab}\|\leq 2\|ab\|$.
\item A \emph{tame path} is a subpath of the $pq$-path $L$ of a tame histogram.
\end{itemize}

\subparagraph{Remark.}
Importantly, when we replace a vertical line segment by a $\Lambda$-path (as in the definition of $\Lambda$-histograms), then both endpoints of such a $\Lambda$-path have the same $x$-coordinate. This property is no longer required for the $\Lambda$-paths in tame histogram. Properties (i) and (ii) in the definition of tame histogram help prevent pathological polygons, where an alternating sequence of $y$-monotonically increasing and decreasing $\Lambda$-path could accumulate unbounded perimeter. As we shall see, we construct tame histograms from $x$-monotone $\Lambda$-histograms, and properties (i) and (ii) will be easy to establish.

\subparagraph{Subdivision of $x$-Monotone $\Lambda$-Histograms.}
We partition an $x$-monotone $\Lambda$-histogram, obtained in Lemma~\ref{lem:window}, into tame histograms. This can be done by an easy sweepline algorithm. Dumitrescu and T\'oth~\cite{DumitrescuT09} used similar methods to partition an (ordinary) histogram into histograms of constant geometric dilation.

\begin{lemma}\label{lem:tame}
Let $H$ be an $x$-monotone $\Lambda$-histogram that does not have any chords between interior points of any two ascending (resp., two descending) $\Lambda$-paths.
Then we can subdivide $H$ into a collection $\mathcal{T}$ of
tame histograms such that $\sum_{T\in \mathcal{T}}\per(T)=O(\per(H))$.
\end{lemma}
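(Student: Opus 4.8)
The plan is to subdivide $H$ by a horizontal sweep that stops at the $y$-coordinates of the ``extremal'' features of the $\Lambda$-paths on the boundary path $L$, in the spirit of Dumitrescu and T\'oth~\cite{DumitrescuT09}, and to show that the resulting pieces are tame histograms whose total perimeter is $O(\per(H))$. Concretely, let $pq$ be the horizontal base of $H$ and $L$ the $x$-monotone path above it. Each maximal $\Lambda$-path $\pi$ on $L$ is $y$-monotone; being ascending or descending it has a well-defined bottom and top endpoint. The sweepline will be cut at the $y$-coordinates of the endpoints of the $\Lambda$-paths of $L$ (and only there). Between two consecutive critical $y$-values the polygon $H$ decomposes into a collection of ``slabs,'' each bounded below and above by horizontal chords, on the sides by portions of $\Lambda$-paths, and possibly containing pieces of $L$ along its top — these slabs are the candidate tame histograms, after we attach to each slab the part of $L$ that lies directly above it (between its left and right boundary $\Lambda$-pieces) so that the piece is in fact a histogram with a horizontal base.

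\textbf{Verifying the tame properties.} First I would argue property (i): in a single slab the only $\Lambda$-pieces appearing on its boundary are portions of $\Lambda$-paths of $H$, and since $H$ itself has no chord between interior points of two ascending (resp.\ descending) $\Lambda$-paths, neither does any slab — chords of a slab are chords of $H$. So property (i) is inherited for free. Property (ii) is the substance: I must show that for every horizontal chord $ab$ of a slab with $a,b$ on its $pq$-path, the subpath $L_{ab}$ has $\|L_{ab}\|\le 2\|ab\|$. Here the choice of cut $y$-values is what makes this work. Inside one slab, each $\Lambda$-piece on the boundary is monotone \emph{across the whole vertical extent of the slab} — no $\Lambda$-path turns around inside a slab, because turning points are exactly the critical $y$-values at which we cut. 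Together with the slope bound $|\slope|\ge \Lambda\eps^{-1/2}\ge 8$ on the non-horizontal, non-vertical edges, and with $x$-monotonicity of $L$, a horizontal chord $ab$ and the subpath $L_{ab}$ above it enclose a region whose vertical extent $\Delta y$ is controlled: the horizontal span $\|ab\|$ of $L_{ab}$ is at least the horizontal span, while each $\Lambda$-subpath contributes length at most $\sqrt{1+\eps/\Lambda^2}\le 1+\eps/\Lambda^2$ times its horizontal span (Lemma~\ref{lem:angle2}) plus the vertical drop inside the slab; since inside a slab a single $\Lambda$-path's total vertical excursion is at most its height, which is at most its horizontal span times $\Lambda\eps^{-1/2}$ — wait, that is the wrong direction. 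The right estimate: the vertical extent traversed by $L_{ab}$ is bounded because the slab has bounded height relative to the horizontal span of the $\Lambda$-pieces bounding it; one gets $\|L_{ab}\| \le \hper(L_{ab}) + \vper(L_{ab}) \le \|ab\| + (\text{total vertical excursion})$, and the vertical excursion of a monotone $\Lambda$-piece inside the slab is at most $(\eps^{1/2}/\Lambda)\cdot$(its horizontal span)$\,\le (\eps^{1/2}/\Lambda)\|ab\|$, so $\|L_{ab}\|\le (1+\eps^{1/2}/\Lambda)\|ab\|\le 2\|ab\|$ since $\Lambda\ge 8$. (The careful accounting of how many $\Lambda$-pieces a single $L_{ab}$ can meet, and that between them $L_{ab}$ only moves horizontally — which follows from $x$-monotonicity — is the bookkeeping I would write out.)

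\textbf{The perimeter bound.} For $\sum_{T\in\mathcal{T}}\per(T)=O(\per(H))$ I would charge the new subdivision edges (the horizontal chords introduced at critical $y$-values) to $\partial H$. The standard argument: a horizontal sweepline at a critical level $y_0$ meets $H$ in a set of maximal horizontal segments; each such segment is a chord $cd$ whose endpoints $c,d$ lie on $\partial H$, and at least one endpoint is a vertex (the critical vertex) of a $\Lambda$-path of $L$ or of $\partial H$; distinct chords correspond to distinct critical vertices, and each critical vertex is used by $O(1)$ chords. Hence $\sum\|cd\|\le O(1)\cdot\sum(\text{horizontal widths at critical vertices})\le O(\per(H))$ because the horizontal width at a critical level is at most the horizontal extent of $H$ there, and summing over the $O(|V(H)|)$ critical levels telescopes along the $x$-monotone boundary path $L$ to $O(\hper(L))\le O(\per(H))$. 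The vertical and $\Lambda$-slope edges of each $T$ are pieces of $\partial H$ and are charged to themselves, used by at most two faces. So the total perimeter is $O(\per(H))$.

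\textbf{Main obstacle.} The delicate point I expect to fight is property (ii): making precise that after cutting at exactly the turning-point $y$-coordinates of $\Lambda$-paths, no horizontal chord inside a single slab can ``see'' a long detour of $L$ — i.e.\ ruling out the pathological accumulation warned about in the Remark, where alternating ascending/descending $\Lambda$-paths pile up perimeter. The absence of ascending--ascending and descending--descending chords (property (i), inherited from $H$) is exactly what forbids $L_{ab}$ from oscillating: between two consecutive monotone $\Lambda$-pieces crossed by $L_{ab}$ the path must either run horizontally or change monotonicity direction, and a direction change forces an intervening reflex feature whose $y$-coordinate would have been a cut point — contradiction with $ab$ being a chord inside one slab. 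Pinning down this topological claim cleanly, and getting the constant in $\|L_{ab}\|\le 2\|ab\|$ honestly (which is where $\Lambda\ge 8$ is spent), is the crux; the perimeter accounting is then routine via the sweepline charging scheme.
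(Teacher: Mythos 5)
There is a genuine gap here, and it stems from your cut criterion being combinatorial (cut at every $y$-coordinate of a $\Lambda$-path endpoint) rather than metric. First, your verification of property (ii) inverts the geometry of $\Lambda$-paths: a $\Lambda$-path is nearly \emph{vertical} (its edges are vertical or of slope $\pm\,\Lambda\eps^{-1/2}$), so its horizontal span is at most $(\sqrt{\eps}/\Lambda)$ times its vertical extent, not the other way around; the step ``the vertical excursion of a monotone $\Lambda$-piece inside the slab is at most $(\eps^{1/2}/\Lambda)\cdot$(its horizontal span)'' is false. Concretely, take a thin tall spike of $H$: an ascending $\Lambda$-path of height $h$, a horizontal edge of length $\delta\ll h$, and a descending $\Lambda$-path back down, with no $\Lambda$-path endpoints at intermediate heights (ascending--descending chords are allowed, so the hypothesis of the lemma is satisfied). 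Your sweep places no cut strictly inside the spike, so the spike is a single slab, and a horizontal chord $ab$ near its bottom has $\|ab\|=\delta$ while $\|L_{ab}\|\approx 2h+\delta$; the slab is not tame. Second, the perimeter charging does not telescope: cutting along every maximal horizontal chord at every critical level can cost $\Theta(n\cdot \per(H))$. For example, let $L$ ascend vertically from $p=(0,0)$ to $(0,1)$ and then descend to $q=(1,0)$ as a staircase with $n$ steps of size $1/n$; then $\per(H)=O(1)$, there are $n$ critical levels $y=i/n$, and the chord at level $i/n$ has length about $1-i/n$, so your inserted edges total $\Theta(n)$ --- exactly the cardinality-dependent blow-up this lemma is designed to avoid.

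The paper's proof instead makes the cut metric-driven: sweep top-down and cut only at the moment some horizontal chord $ab$ satisfies $\|L_{ab}\|=2\|ab\|$ exactly, then replace $L_{ab}$ by $ab$ in the evolving polygon and continue. Tameness of every piece is essentially built into this stopping rule, and each inserted chord $ab$ is charged to the nonhorizontal edges of the retired subpath $L_{ab}$: since $\|L_{ab}\|=2\|ab\|$ and nonhorizontal edges are vertical or of slope $\pm\,\Lambda\eps^{-1/2}$ with $\Lambda\geq 8$, those edges have weight more than $\frac13\|ab\|$, they lie on $\partial H$, and they are never charged again, which gives $\sum_{T\in\mathcal{T}}\per(T)=O(\per(H))$. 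In your spike example this rule automatically makes many short cuts of total weight $O(h)$, proportional to the steep boundary they retire, and in the staircase example it inserts only $O(\per(H))$ weight; so the detour-ratio stopping rule is the missing idea, not a refinement of the critical-level sweep.
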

\begin{proof}
Let $H$ be a $x$-monotone $\Lambda$-histogram bounded by a horizontal line segment and a path $L$ obtained from a $x$-monotone rectilinear path by replacing the vertical edges with $\Lambda$-paths. Assume that $H$ does not have any chords between interior points of any two ascending (resp., two descending) $\Lambda$-paths.

We describe a sweepline algorithm that recursively subdivides $H$ with horizontal lines; see Fig.~\ref{fig:tame}(a). Initially, set $\mathcal{T}=\emptyset$. Sweep $H$ top-down with a horizontal line $\ell$, and incrementally update $L$, $H$, and $\mathcal{T}$ as follows. Whenever, the sweepline $\ell$ contains a chord $ab$ of $H$ such that the subpath $L_{ab}$ of $L$ between $a$ and $b$ has weight $2\|ab\|$, then we add the simple polygon bounded by $ab$ and $L_{ab}$ into $\mathcal{T}$, and replace $L_{ab}$ with the line segment $ab$ in both $L$ and $H$. When the sweepline $\ell$ reaches the base of $H$, we add $H$ to $\mathcal{T}$, and return $\mathcal{T}$.

First note that each polygon added into $\mathcal{T}$ is a tame histogram. By construction, $\sum_{T\in \mathcal{T}}\per(T)$ is proportional to the sum of $\per(H)$ and the total weight of all horizontal chords $ab$ inserted by the algorithm.
At the time when we create a tame histogram bounded by $ab$ and $L_{ab}$, we can charge the weight $\|ab\|$ of the chord $ab$ to the nonhorizontal edges of the path $L_{ab}$. Since the algorithm inserts only horizontal edges, all nonhorizontal edges along this path lie on the boundary of the input polygon. Furthermore, since the nonhorizontal edges are part of $\Lambda$-paths of the input polygon, they are vertical or have slope $\pm\,\Lambda \eps^{-1/2}$, where $\Lambda\geq 8$. Consequently, the total weight of nonhorizontal edges of $L_{ab}$ is more than $\frac13 \|ab\|$. Overall, the total weight of the edges inserted by the algorithm is $O(\per(H))$, as required.
\end{proof}

The combination of Lemmas~\ref{lem:window}, \ref{lem:fuzzy}, and~\ref{lem:tame} imply Lemma~\ref{lem:tiling}.

\tilinglemma*

\begin{proof}[Proof of Lemma~\ref{lem:tiling}.]
Let $P$ be a rectilinear simple polygon. By Lemma~\ref{lem:window}, we can partition $P$ into a collection
$\mathcal{F}$ of $x$-monotone $\Lambda$-histograms and $y$-monotone fuzzy histograms such that
$\sum_{F\in \mathcal{F}}\per(F)=O(\per(P))$.

Then by Lemma~\ref{lem:fuzzy}, we can partition each $y$-monotone fuzzy histogram $F\in \mathcal{F}$ into
a collection $\mathcal{T}(F)$ of fuzzy staircases of total perimeter $O(\eps^{-1/2}\per(F))$, and total horizontal perimeter $O(\per(F))$.
By Lemma~\ref{lem:tame}, we can partition each $x$-monotone $\Lambda$-histogram $H\in \mathcal{F}$
into a collection $\mathcal{T}(F)$ of tame histograms of total perimeter $O(\per(H))$.
This implies that their horizontal perimeter is also bounded by $O(\per(H))$.

Overall, we obtain a subdivision of $R$ into a collection $\bigcup_{F\in \mathcal{F}} \mathcal{T}(F)$ of fuzzy staircases and tame histograms of total perimeter $O(\eps^{-1/2}\per(P))$ and total horizontal perimeter $O(\per(P))$, as required.
\end{proof}

\begin{figure}[htbp]
 \centering
 \includegraphics[width=0.65\textwidth]{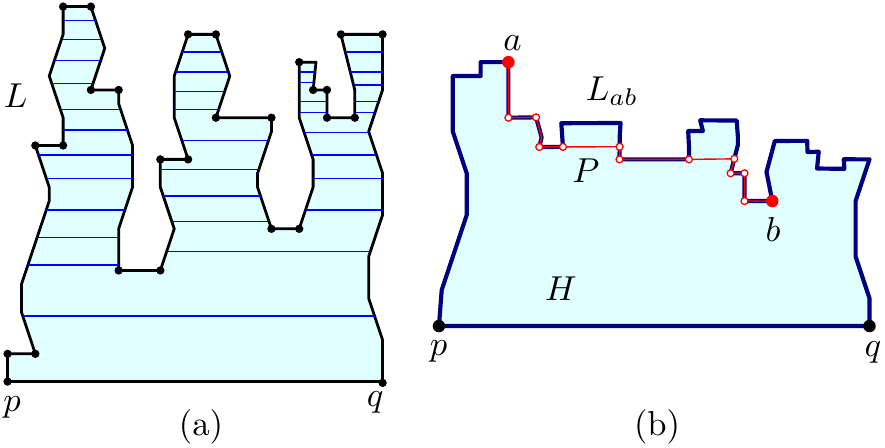}
 \caption{(a) A subdivision of an $x$-monotone $\Lambda$-histogram into tame histograms.
 (b) The $ab$-path $P$ constructed in the proof of Lemma~\ref{lem:fat}.}
    \label{fig:tame}
\end{figure}

\subsection{Directional Spanners for Tame Histograms}
\label{ssec:6.4}

In this section we prove Lemma~\ref{lem:hist} for tame histograms.
Given a tame histogram $T$ and a finite set of points $S\subset \partial T$, we construct a directional spanner for $S$ with respect to the interval $D=[\frac{\pi-\sqrt{\eps}}{2},\frac{\pi+\sqrt{\eps}}{2}]$ of nearly-vertical directions. In the discussion below, we typically use $\slope(ab)$, rather than $\dir(ab)$. Note that whenever $a,b\in S$ and $\dir(ab)\in D$, then $|\slope(ab)|\geq \eps^{-1/2}$, due to the Taylor estimate $\tan(x)\geq x+x^3/3$ for $x=\dir(ab)-\frac{\pi}{2}$ near 0.

The next lemma (Lemma~\ref{lem:fat}) establishes a key property of tame histograms: the weight of a subpath between two points in the $pq$-path can be bounded in terms of the $L_1$-distance between the two endpoints and an error term dominated by the distance between their $x$-coordinates.

\begin{lemma}\label{lem:fat}
Let $H$ be a tame histogram bounded by a horizontal segment $pq$ and a $pq$-path $L$.
Let $a,b\in L$ such that $b$ is the bottom-most point in $L_{ab}$.
Then $\|L_{ab}\|\leq 2|x(a)-x(b)|+(1+2\,\sqrt{\eps})|y(a)-y(b)|$.
\end{lemma}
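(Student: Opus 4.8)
The plan is to compare $L_{ab}$ with the two axis-parallel building blocks of a tame path — its horizontal edges and its $\Lambda$-path edges — and to use property~(ii) of tame histograms to ``flatten'' the non-monotone parts. Write $D_x=|x(a)-x(b)|$ and $D_y=|y(a)-y(b)|$, and note $y(b)\le y(a)$ since $b$ is the bottom-most point of $L_{ab}$; we may assume $x(a)\le x(b)$ (the horizontal edges of a tame path are directed, so orientation must be tracked). Each edge $e$ of $L_{ab}$ is either a horizontal edge (with $\Delta y(e)=0$, $\Delta x(e)\ge 0$, and $\|e\|=\Delta x(e)$) or a near-vertical edge of a $\Lambda$-path (vertical, or of slope $\pm\Lambda\eps^{-1/2}$), for which $|\Delta x(e)|\le \tfrac{\sqrt{\eps}}{\Lambda}\,|\Delta y(e)|$ and $\|e\|\le \bigl(1+\tfrac{\eps}{2\Lambda^{2}}\bigr)\,|\Delta y(e)|$ by the Taylor estimate $\sec x\le 1+x^{2}$ behind Lemma~\ref{lem:angle2}.

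First I would isolate a $y$-monotone ``spine''. Traverse $L_{ab}$ from $a$ to $b$ and let $\mu$ be the running minimum of the $y$-coordinate; since $b$ is bottom-most, $\mu$ starts at $y(a)$, is non-increasing, and attains $y(b)$ only at $b$. The subset of $L_{ab}$ where the current $y$ equals $\mu$ is the \emph{spine} $M$: a $y$-monotone descending path from $y(a)$ to $y(b)$, each of whose edges is horizontal (rightward) or descending near-vertical. The complement splits into maximal \emph{bumps} $B_1,\dots,B_\ell$, each lying above its common base level $y(P_i)=y(P_i')$, where $P_i,P_i'\in L$ are the two endpoints of $B_i$; thus $L_{B_i}$ is an arch over the horizontal segment $\overline{P_iP_i'}$. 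The crucial point is that $\overline{P_iP_i'}$ is a chord of the tame histogram (it lies below $L$, because $L_{B_i}$ arches above it, and above the base $pq$), so property~(ii) gives $\|L_{B_i}\|\le 2\|\overline{P_iP_i'}\|=2\,|x(P_i)-x(P_i')|$. Replacing each $L_{B_i}$ by $\overline{P_iP_i'}$ converts $L_{ab}$ into a $y$-monotone $a$--$b$ path $M'$, and $\|L_{ab}\|\le \|M'\|+\sum_i|x(P_i)-x(P_i')|$. The near-vertical edges of $M'$ are exactly those of the spine, with total $y$-variation $D_y$, so their total length is at most $\bigl(1+\tfrac{\eps}{2\Lambda^{2}}\bigr)D_y$; hence, writing $W_0$ for the total length of the spine's horizontal edges, $\|M'\|\le W_0+\sum_i|x(P_i)-x(P_i')|+\bigl(1+\tfrac{\eps}{2\Lambda^{2}}\bigr)D_y$. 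Finally, since every horizontal edge of $M'$ points rightward and each of its near-vertical edges displaces $x$ by at most $\tfrac{\sqrt{\eps}}{\Lambda}|\Delta y|$, the total horizontal travel obeys $W_0+\sum_i|x(P_i)-x(P_i')|\le D_x+\tfrac{\sqrt{\eps}}{\Lambda}D_y$.

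Combining these estimates,
\[
\|L_{ab}\|\;\le\; W_0+2\sum_i|x(P_i)-x(P_i')|+\Bigl(1+\tfrac{\eps}{2\Lambda^{2}}\Bigr)D_y
\;\le\; 2\Bigl(W_0+\sum_i|x(P_i)-x(P_i')|\Bigr)+\Bigl(1+\tfrac{\eps}{2\Lambda^{2}}\Bigr)D_y
\;\le\; 2D_x+\Bigl(\tfrac{2\sqrt{\eps}}{\Lambda}+1+\tfrac{\eps}{2\Lambda^{2}}\Bigr)D_y,
\]
and for $\Lambda\ge 8$ and $\eps<1$ the coefficient of $D_y$ is below $1+2\sqrt{\eps}$, which is the claim. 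The step I expect to be the main obstacle is the horizontal bookkeeping in the last sentence of the middle paragraph: this is exactly where the tame-histogram hypotheses must be used, because a bump whose net horizontal displacement pointed ``backward'' would inflate the horizontal travel without contributing to $D_x$ and would break the bound. Ruling such bumps out — or absorbing their excursions into the $D_y$ term — should follow from property~(i) together with the particular shape of the $\Lambda$-paths that tame histograms inherit from the modified window partition of Section~\ref{ssec:6.1}. A second, more routine, point is to verify carefully that each $\overline{P_iP_i'}$ is genuinely a chord of the tame histogram, so that property~(ii) applies; this uses the non-overhanging structure of $L$.
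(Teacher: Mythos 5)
Your proposal is, in substance, the paper's own proof: the paper also shortcuts the non-monotone excursions of $L_{ab}$ by horizontal chords, bounds the resulting $y$-monotone $\Lambda$-staircase path (nonhorizontal weight at most $(1+\eps)|y(a)-y(b)|$, horizontal weight at most $|x(a)-x(b)|+\sqrt{\eps}\,|y(a)-y(b)|$), and then restores $L_{ab}$ by replacing each chord $cd$ with $L_{cd}$ at a cost of at most $\|cd\|$ per chord, using property (ii). The only real difference is how the shortcut path is defined: you take the running-minimum ``spine'' and chord each bump at its base, whereas the paper builds the path by a greedy traversal (descend $\Lambda$-paths, go right along horizontal edges, and at each local minimum shoot a horizontal chord rightward, which exists because $b$ is bottommost); that formulation makes every horizontal piece point rightward by construction, so the horizontal bookkeeping you flag as the main obstacle never arises there. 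Your flagged loose end can also be closed directly in your formulation, and by property (ii) rather than property (i): if a bump had base chord $P_iP_i'$ of length $w>0$ with net leftward displacement, then property (ii) gives $\|L_{B_i}\|\leq 2w$, while the horizontal edges of the bump move only rightward and its $\Lambda$-path edges drift horizontally by at most $\frac{\sqrt{\eps}}{\Lambda}$ per unit length, so the net displacement satisfies $-w\geq -\frac{\sqrt{\eps}}{\Lambda}\cdot 2w$, forcing $w=0$ since $\frac{2\sqrt{\eps}}{\Lambda}<1$; hence backward bumps cannot occur. (The remaining geometric point you raise, that the base segment of a bump is genuinely a chord, is handled at the same level of rigor in the paper, which simply asserts the existence of the rightward chord from each local minimum and that reassembling the chords' subpaths yields exactly $L_{ab}$.)
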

\begin{proof}
Assume w.l.o.g.\ that $x(a)\leq x(b)$ and $y(a)\geq y(b)$; refer to Fig.~\ref{fig:tame}(b). We construct an $ab$-path $P_{ab}$ in $H$ that consists of portions of $L_{ab}$ and horizontal chords of $H$: Initially, we set $P$ to be the one-vertex path $P=(a)$, and then incrementally append new edges to the endpoint $c$ of $P$ until it reaches $b$. Initially, the endpoint of $P$ is $c=a$. If $c$ is in a $\Lambda$-path $\gamma$ along $L_{ab}$, but not the bottom endpoint of $\gamma$, then we extend $P$ to the bottom endpoint of $\gamma$. Else if $c$ is in a horizontal edge $e$ of $L_{ab}$, but not the right endpoint of $e$, then we extend $P$ to the right endpoint of $e$. Else $c$ is the bottom point of a $\Lambda$-path $\gamma$ and the right endpoint of a horizontal edge $e$ along $L_{ab}$, and then we extend $P$ with a horizontal chord $cd$. Such a chord exists since $b$ is the bottommost point in $L_{ab}$. The algorithm terminates with $c=b$, since in each iteration either $y(c)$ decreases, or $y(c)$ does not change but $x(c)$ increases.

Since $P_{ab}$ is a $\Lambda$-staircase path from $a$ to $b$, the total weight of
its nonhorizontal edges is at most $(1+\eps)|y(a)-y(b)|$; and the total weight of its horizontal edges is at most $|x(a)-x(b)|+\sqrt{\eps}\,|y(a)-y(b)|$.
If we replace every horizontal chords $cd$ along $P$ with the corresponding subpath $L_{cd}$ of $L$, the resulting path is precisely $L_{ab}$. Since $H$ is a tame histogram, each chord $cd$ is replaced by a path of length at most $2\|cd\|$.
Since these chords are disjoint horizontal line segments along $P_{ab}$,
the weight increase is bounded by
$\|L_{ab}\|-\|P_{ab}\|\leq |x(a)-x(b)|+\sqrt{\eps}\,|y(a)-y(b)|$.
Consequently,
\begin{align*}
\|L_{ab}\|
&=(\|L_{ab}\|-\|P_{ab}\|)+\|P_{ab}\| \\
&\leq 2|x(a)-x(b)|+(1+\sqrt{\eps}+\eps)|y(a)-y(b)|\\
&\leq 2|x(a)-x(b)|+(1+2\,\sqrt{\eps})|y(a)-y(b)|,
\end{align*}
as claimed.
\end{proof}

In Lemma~\ref{lem:combine5} and~\ref{lem:combine6} below,
we use \textsf{SLT}s to construct directional $(1+\eps)$-spanners in a tame histogram
(i) between the base and a portion of the path $L$ within a square; and
(ii) between a source $s$ and a portion of the path $L$ from $p$ to $q$.

\begin{figure}[htbp]
 \centering
 \includegraphics[width=0.7\textwidth]{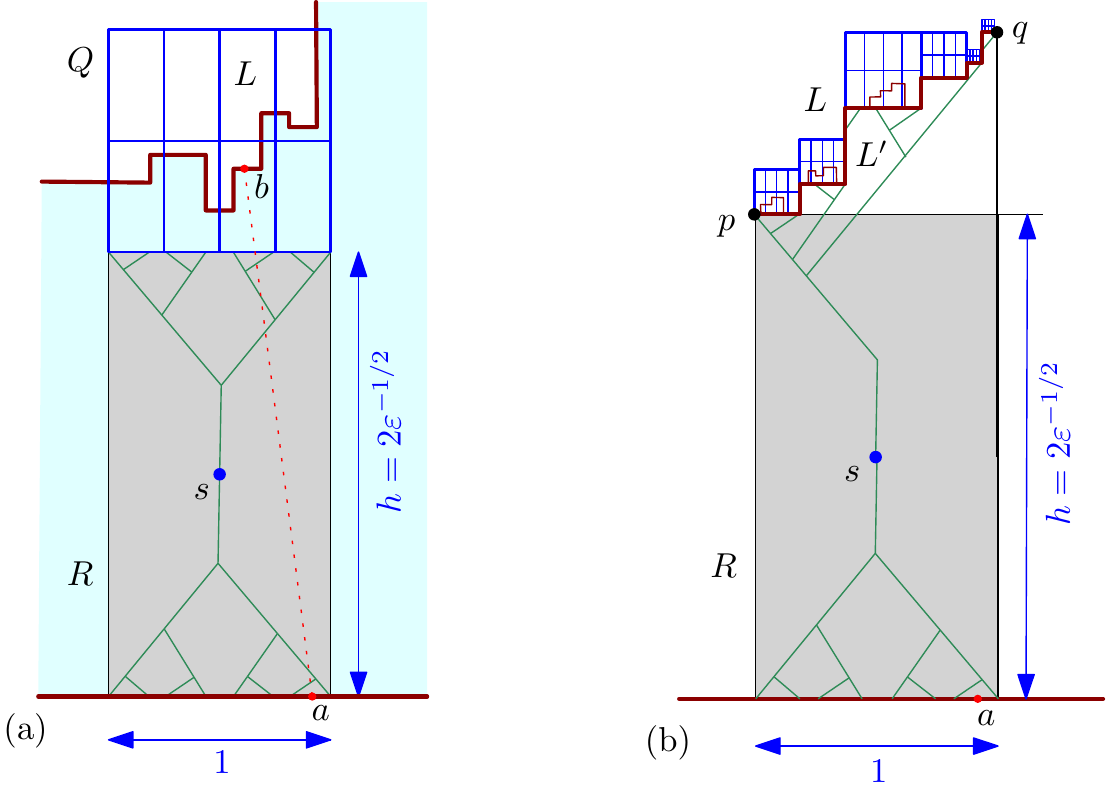}
 \caption{(a) The boundary of a tame histogram in a square $Q$ above rectangle $R$.
 (b) An adaptation of an \textsf{SLT} to tame histograms.}
    \label{fig:combinations2}
\end{figure}

\begin{lemma}\label{lem:combine5}
Let $R$ be an axis-parallel rectangle of width $1$ and height $2\eps^{-1/2}$.
Let $Q$ be a unit square adjacent to the top side of $R$, and let $L$ be a tame path in $Q$; see Fig.~\ref{fig:combinations2}(a).
Then there exists a graph $G$ comprised of $L$ and additional edges of weight $O(\eps^{-1/2})$ that contains
an $ab$-path $P_{ab}$ with $\|P_{ab}\|\leq (1+\eps)\,\|ab\|$ for any $a\in L$ and any point $b$ in the bottom side of $R$.
\end{lemma}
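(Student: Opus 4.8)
The plan is to reduce this to the combination-of-\textsf{SLT}s idea of Lemma~\ref{lem:combine2}, but with one \textsf{SLT} replaced by a straight-down path in the rectangle $R$ and the other \textsf{SLT} replaced by a \emph{generalized} \textsf{SLT} to the tame path $L$. First I would place the source $s$ at the center of the bottom side of $R$ — or more conveniently at the midpoint between $R$ and $Q$, say the common midpoint of the shared horizontal side, adjusted upward by $\eps^{-1/2}$ so that $s$ sits at height roughly $\eps^{-1/2}$ above the base of $R$; since $R$ has height $2\eps^{-1/2}$ and $Q$ has height $1$, this $s$ is within distance $O(\eps^{-1/2})$ of both the base of $R$ and every point of $L\subset Q$. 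The graph $G$ will consist of three pieces: (a) the boundary path $L$ itself (free, it is given); (b) a ``fan'' of near-vertical edges from $s$ down to the base of $R$, which is just a single \textsf{SLT} as in Lemma~\ref{lem:shallow} scaled to height $\eps^{-1/2}$ and width $1$, of weight $O(\eps^{-1/2})$; and (c) a generalized \textsf{SLT} from $s$ up to the tame path $L$.

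For piece (c) I would invoke Corollary~\ref{cor:stairs}, or rather the analogue of Lemma~\ref{lem:stairs} for tame paths: a tame path is an $x$- and $y$-monotone-ish path whose ``vertical'' edges have been replaced by $\Lambda$-paths, i.e.\ by $(\frac\pi2\pm\frac{\sqrt\eps}{2})$-angle-bounded paths (after checking that slope $\pm\Lambda\eps^{-1/2}$ indeed corresponds to direction within $\frac{\sqrt\eps}{2}$ of vertical for the chosen constant $\Lambda$, using $\arctan(\Lambda^{-1}\sqrt\eps)\le\Lambda^{-1}\sqrt\eps\le\frac{\sqrt\eps}{2}$ since $\Lambda\ge 8$). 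The key point, already flagged in the remark after Lemma~\ref{lem:stairs}, is that the source-stretch analysis assigns zero error to the near-vertical segments and only incurs error on segments whose direction differs from vertical by $\ge\sqrt\eps$; so the generalized \textsf{SLT} from $s$ to $L$ has weight $\|L\|+O(\eps^{-1/2})$ and contains, for each $b\in L$, an $sb$-path of length $(1+O(\eps))\|sb\|$. One subtlety is that a tame path need not be globally monotone in the way a staircase is — by property (i) and (ii) of tame histograms its total weight is $O(1)$ since $L\subset Q$ and every horizontal chord $ab$ satisfies $\|L_{ab}\|\le 2\|ab\|$ — so I would want to either cut $L$ into $O(1)$ monotone subpaths and apply the generalized \textsf{SLT} to each, or observe that Lemma~\ref{lem:stairs}'s construction only used monotonicity to guarantee the crossing of the two paths $\gamma_a,\gamma_b$ over each interval, which still holds here.

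Finally, the stretch argument mirrors Lemma~\ref{lem:combine2}: given $a$ in the bottom side of $R$ and $b\in L\subset Q$, the concatenated path goes $a\to s$ (via piece (b), stretch $1+O(\eps)$ relative to $\|as\|$) then $s\to b$ (via piece (c), stretch $1+O(\eps)$ relative to $\|sb\|$), so $\|P_{ab}\|\le(1+O(\eps))(\|as\|+\|sb\|)$. Since $s$ lies near the midline, $\|as\|+\|sb\|\le \diam(R\cup Q)+(\text{small slack}) \le (1+O(\eps))\,\mathrm{height}(R\cup Q)$ while $\|ab\|\ge\mathrm{height}(R)+(\text{dist of }b\text{ above }R)$; the gap between $a$ and $b$ in the $y$-direction is at least $2\eps^{-1/2}$ (the height of $R$), which dominates the horizontal spread of at most $1=\eps^{1/2}\cdot\eps^{-1/2}$, so the ratio is $1+O(\eps)$, and rescaling constants gives $1+\eps$. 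The main obstacle I anticipate is the bookkeeping in piece (c): verifying that a tame path — which, unlike a genuine staircase, allows $\Lambda$-paths whose two endpoints need not share an $x$-coordinate and allows interleaved ascending/descending $\Lambda$-paths — still admits the generalized \textsf{SLT} with the zero-error-on-near-vertical-edges property, and that its total weight is genuinely $O(1)+O(\eps^{-1/2})$ rather than blowing up; properties (i) and (ii) of tame histograms are exactly what should make this go through, but confirming the error analysis carries over edge-by-edge is where the real work lies.
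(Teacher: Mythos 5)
Your overall architecture (a source near the midline, one \textsf{SLT} down to the base of $R$, and some structure reaching up to $L$) is in the spirit of the paper's combination-of-\textsf{SLT}s, and your final stretch bookkeeping at the combination level is fine since $\|ab\|\geq 2\eps^{-1/2}$ absorbs additive $O(\sqrt{\eps})$ errors. But piece (c) --- the ``generalized \textsf{SLT} directly to the tame path $L$'' --- is a genuine gap, and the two fixes you offer do not close it. First, a tame path cannot in general be cut into $O(1)$ monotone subpaths: tameness (properties (i) and (ii)) controls the \emph{weight} of $L$ relative to horizontal chords and forbids chords between two ascending (resp.\ descending) $\Lambda$-paths, but it places no bound on the number of ascending/descending alternations, which can be as large as $|S|$; an $O(1)$ decomposition is exactly the kind of combinatorial control the construction must avoid depending on. Second, the claim that the crossing/zero-error analysis of Lemma~\ref{lem:stairs} ``still holds'' for a non-monotone $L$ is unsubstantiated and is precisely where the argument breaks: the paths $\gamma_q$ there follow the near-vertical edges of $L$ $y$-monotonically toward the source, and the final ``walk along $L$ to the nearest attachment point'' uses $x$- and $y$-monotonicity so that all vertical movement counts toward $\|st\|$. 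For a tame path with pockets, that walk can wiggle up and down, and property (ii) only gives a factor-$2$ bound on such a detour relative to a horizontal chord --- a constant-factor loss, not $1+O(\eps)$, unless the horizontal spread of the detour is confined to $O(\sqrt{\eps})$.

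That confinement is the paper's extra idea, which your proposal lacks. The paper never builds an \textsf{SLT} to $L$ at all: it places $s$ at the center of $R$, takes two \textsf{SLT}s from $s$ to the (straight) top and bottom sides of $R$, subdivides $Q$ into grid rectangles of width $\sqrt{\eps}$ and height $\tfrac12$ (aspect ratio $2\eps^{-1/2}$), and in each cell $r$ meeting $L$ drops a vertical segment from a bottom-most point of $L\cap r$ to the bottom of $r$. A query point $a\in L$ first walks along $L$ \emph{within its own cell} to that bottom-most point; Lemma~\ref{lem:fat} bounds this subpath by $2|x\text{-spread}|+(1+2\sqrt{\eps})|y\text{-drop}|\leq 3\sqrt{\eps}+(y\text{-drop})$, so the factor-$2$ loss applies only to a horizontal window of width $\sqrt{\eps}$ and becomes an $O(\sqrt{\eps})$ additive term, then the path descends through the grid and the two \textsf{SLT}s. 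Note that even the paper's Lemma~\ref{lem:combine6}, which does combine an \textsf{SLT} with a tame path, first straightens $L$ to a monotone $\Lambda$-staircase $L'$ by shortcutting maximal horizontal chords and handles the pockets $L\setminus L'$ with the same grid-plus-vertical-drop device --- corroborating that a direct generalized-\textsf{SLT} argument to a tame path, as you sketch it, does not go through without this localization step.
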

\begin{proof}
Let $s$ be the center of the rectangle $R$. We construct a geometric graph $G$ as follows. Let $G$ contain the bottom side of $R$, the path $L$, and the two \textsf{SLT}s from $s$ to the bottom and top sides of $R$, respectively. The graph $G$ should also contain a subdivision of $Q$ into rectangles of aspect ratio $2\eps^{-1/2}$; see Fig.\ref{fig:combinations2}(a). Specifically, we subdivide $Q$ into two congruent horizontal strips, and then subdivide each horizontal strip into rectangles
$r$ of $\wdth(r)=\sqrt{\eps}$ and $\hght(r)=\frac12$. Finally, in each rectangle $r\subset Q$ of this subdivision, if $r$ intersects $L$, then let $G$ contain a vertical line segment from a bottom-most point in $L\cap r$ to the bottom side of $r$.

\smallskip\noindent\emph{Lightness analysis.}
The weight of two \textsf{SLT}s is $O(\eps^{-1/2})$ by Lemma~\ref{lem:shallow}. Since $Q$ has unit width, the weight of the grid in $Q$ is $O(\eps^{-1/2})$, and the vertical edges in the rectangles in $r\subset Q$ are bounded above by the weight of the grid. The overall weight of $G$ is $\|L\|+O(\eps^{-1/2})$.

\smallskip\noindent\emph{Stretch-factor analysis.}
Let $a\in L$ and let $b$ be a point in the bottom side of $R$. We may assume that $a\in r$, for a rectangle $r\subset Q$ in the subdivision of $Q$. We construct an $ab$-path $P_{ab}$ as follows: Start from $a$, follow $L$ to a bottom-most point in $L\cap r$, and then use a vertical line segment to reach the bottom side of $r$. Then follow a shortest path within the grid in $Q$ to the top side of $R$, and finally use the two \textsf{SLT}s to reach $b$. For easy reference, we label some of intermediate vertices along $P_{ab}$: let $v_1$ be the bottom-most point in $L\cap r$, let $v_2$ be the bottom endpoint of the vertical segment in $r$, and $v_3$ the the first point where $P_{ab}$ reaches the top side of $R$. Note that the $y$-coordinates of these points monotonically decrease, that is, $y(a)\geq y(v_1)\geq y(v_2)\geq y(v_3)\geq y(b)$. Clearly, we have $\|ab\|\geq y(a)-y(b)=(y(a)-y(v_2))\geq \eps^{-1/2}$.

We now estimate the length of each portion of $P_{ab}$ between $a$, $v_1$, $v_2$, $v_3$, and $b$. By Lemma~\ref{lem:fat},
we have
\begin{align*}
\|P_{av_1}\|
&\leq 2\, |x(a)-x(v_1)| + (1+2\,\sqrt{\eps}) |y(a)-y(v_1)|\\
&\leq 2\,\wdth(r)+2\,\sqrt{\eps}\,\hght(r)+|y(a)-y(v_1)|\\
&\leq 3\,\sqrt{\eps}+(y(a)-y(v_1)).
\end{align*}
As $v_1v_2$ is a vertical line segment, then $\|P_{v_1 v_2}\| = y(v_1)-y(v_2)$.
Since the aspect ratio of the grid cell $r$ is $2\eps^{-1/2}$,
the length of the path $P_{v_2 v_3}$ is bound by $\|P_{v_2 v_3}\| \leq (1+\eps)(y(v_2)-y(v_3))$.
Lemma~\ref{lem:combine2} yields $\|P_{v_3 b}\|\leq (1+O(\eps)) (y(v_3)-y(b))$.
Putting the pieces together, we obtain
\begin{align*}
\|P_{ab}\|
&=\|P_{a v_1}\| + \|P_{v_1 v_2}\| + \|P_{v_2 v_3}\| + \|P_{v_3 b}\|\\
&\leq 3\,\sqrt{\eps}+(1+O(\eps))
    \Big((y(a)-y(v_1)) + (y(v_1)-y(v_2))+ (y(v_2)-y(v_3))+ (y(v_3)-y(b))\Big)\\
&\leq (1+O(\eps))(y(a)-y(b)) + 3\,\sqrt{\eps}\\
&\leq (1+O(\eps))\|ab\|,
\end{align*}
as required.
\end{proof}

\begin{lemma}\label{lem:combine6}
Let $R$ be an axis-parallel rectangle of width $1$, height $2\eps^{-1/2}$; and let $p$ be the upper-left corner of $R$, and let $q$ be a point above $R$ on vertical lines passing through the  right sides of $R$.
Let $L$ be a tame $pq$-path that lies above the line segment $pq$; see Fig.~\ref{fig:combinations2}(b).
Then there exists a geometric graph $G$ comprised of $L$ and additional edges of weight $O(\eps^{-1/2})$ that contains an $st$-path $P_{st}$ with $\|P_{st}\|\leq (1+O(\eps))\,\|st\|$ for any $s$ in the bottom side of $R$ and any $t\in L$.
\end{lemma}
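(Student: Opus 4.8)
The plan is to follow the template of Lemma~\ref{lem:stairs} and Corollary~\ref{cor:stairs}, accommodating the source exactly as in Lemma~\ref{lem:combine2}. Choose coordinates so that $R=[0,1]\times[0,2\eps^{-1/2}]$, the candidate sources $s$ lie on the bottom side $[0,1]\times\{0\}$, $p=(0,2\eps^{-1/2})$, $q=(1,y_q)$ with $y_q>2\eps^{-1/2}$, and $o=(\frac12,\eps^{-1/2})$ is the center of $R$. Since $L$ lies above the segment $pq$, every point of $L$ has $y$-coordinate at least $2\eps^{-1/2}$, so $s$, $o$, and $L$ are stacked in increasing order of $y$, and $\|st\|\ge\mathrm{height}(R)=2\eps^{-1/2}$ for every such $s$ and every $t\in L$. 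The graph $G$ will consist of (a) the path $L$; (b) the bottom side of $R$; (c) the shallow-light tree from $o$ to the bottom side of $R$ supplied by Lemma~\ref{lem:shallow}, of weight $O(\eps^{-1/2})$; and (d) a generalized shallow-light tree $\Gamma$ from $o$ up to $L$, constructed next.

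To construct $\Gamma$, assume $\eps=2^{-k}$, sample $L$ at $2^{k+1}$ points $t_1,\dots,t_{2^{k+1}}$ with equally spaced $x$-coordinates (possible because $L$ is $x$-monotone across a unit-width interval, so consecutive samples are at $x$-distance $<\eps$), and take the binary partition of the index set as in Lemma~\ref{lem:stairs}. For the endpoint $t_q$ of a binary-tree interval at level $j$, attach a path $\gamma_q$ directed from $t_q$ down toward $o$: away from $L$ it is a straight segment of direction $\frac{\pi}{2}\pm 2^{(j-k)/2}$ whose $x$-projection equals that of the interval, and whenever this segment would pass on the wrong side of $L$ it follows $L$ instead and resumes the fixed direction afterward---exactly as $\gamma_q$ navigated the vertical edges of the staircase in Lemma~\ref{lem:stairs}. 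The total weight of the parts of $\Gamma$ not lying on $L$ is $O(\eps^{-1/2})$ by the same level-by-level summation as in Lemma~\ref{lem:stairs}, so $\|G\|=\|L\|+O(\eps^{-1/2})$.

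For the stretch bound, fix $s$ on the bottom of $R$ and $t\in L$, and let $t_q$ be the sample point closest to $t$ in the $x$-coordinate. Graph $G$ contains an $so$-path of weight at most $(1+\eps)\|so\|$ by (c). The $k+1$ approach paths whose intervals contain $t_q$ jointly contain, by the crossing argument of Lemma~\ref{lem:stairs}, a $y$-monotone $ot_q$-path; its fixed-direction approach segments have total slack $O(\sqrt\eps)$ exactly as in Lemma~\ref{lem:stairs} (their directions differ from vertical by at least $\sqrt\eps$, while the $\Lambda$-edges of $L$ have slope $\pm\Lambda\eps^{-1/2}$ and hence differ from vertical by at most $\sqrt\eps/\Lambda\le\sqrt\eps/8$, so the crossing property is preserved and, by Lemma~\ref{lem:angle2}, the slack of the $\Lambda$-edges it follows is at most $\eps$ times their length), while the genuinely horizontal edges of $L$ it follows contribute their full length. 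Finally the remaining subpath $L_{t_qt}$ spans $x$-distance $<\eps$, so by Lemma~\ref{lem:fat} it has weight $O(\sqrt\eps)$. Combining these with $\|so\|+\|ot_q\|\le(1+O(\eps))\|st_q\|$ (valid because $o$ is the center of $R$, as in Lemma~\ref{lem:combine2}) and $\|st\|\ge 2\eps^{-1/2}$ to absorb all the $O(\sqrt\eps)$ additive terms yields $\|P_{st}\|\le(1+O(\eps))\|st\|$; rescaling the constants gives stretch $1+\eps$.

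The main obstacle is item (d): a tame path is neither $y$-monotone nor free of horizontal edges, so an approach segment of $\Gamma$ can be forced to detour along $L$, and a careless bound on the $x$-extent of such detours would inflate both $\|\Gamma\|$ and the per-pair error by a factor of $\eps^{-1/2}$. Proving that these detours have total $x$-extent $O(\eps^{-1/2})$ across $\Gamma$ and $O(\eps)$ along any single $ot_q$-path is precisely what the tameness conditions (i)--(ii) in the definition of a tame histogram, together with the subpath estimate of Lemma~\ref{lem:fat}, are designed to guarantee; this bookkeeping, rather than any new geometric idea, is the crux of the argument.
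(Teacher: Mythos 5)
There is a genuine gap: your construction tries to run the sampling scheme of Lemma~\ref{lem:stairs} directly on the tame path $L$, and the last paragraph of your write-up concedes that the resulting ``detours along $L$'' are the crux --- but then asserts, without proof, that tameness and Lemma~\ref{lem:fat} handle them. They do not, at least not in the form you use them. Two concrete failure points: (1) When a descending approach path is forced to follow a horizontal edge of $L$, that edge contributes its \emph{full} length to the error; within the $x$-span of a level-$j$ approach path this can be $\Theta(2^{-j})$, and summing over the at most one segment per level used by a single source-to-sample path gives additive error $\Theta(1)$. Against the lower bound $\|st\|\geq 2\eps^{-1/2}$ this yields stretch $1+\Theta(\sqrt{\eps})$, not $1+O(\eps)$, and nothing in your argument bounds this sum. (2) The final connector $L_{t_qt}$ is not $O(\sqrt{\eps})$: Lemma~\ref{lem:fat} gives $\|L_{t_qt}\|\leq 2|x(t_q)-x(t)|+(1+2\sqrt{\eps})|y(t_q)-y(t)|$, and over a long horizontal chord tameness permits a ``bump'' of height $\Theta(1)$ whose ascending side has tiny $x$-extent; the nearest-in-$x$ sample $t_q$ can then sit $\Theta(1)$ above $t$, the $st_q t$-route overshoots in $y$ and comes back down, and this overshoot cannot be charged to $\|st\|$, again leaving a $\Theta(\sqrt{\eps})$ relative error. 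The same bumps (descending $\Lambda$-paths) also threaten the crossing/$y$-monotonicity argument for the approach paths, which you assert but do not verify.

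The paper avoids all of this by \emph{not} sampling $L$ itself: it replaces each maximal horizontal chord $ab$ of $L$ by the chord, obtaining a $\Lambda$-staircase path $L'$ to which Corollary~\ref{cor:stairs} applies verbatim (its near-vertical edges deviate from vertical by at most $\sqrt{\eps}/\Lambda$, so approach paths follow only those edges and no horizontal error accrues). Points of $L\setminus L'$ are served separately: by tameness each such component lies in a square $Q_e$ erected over a horizontal edge $e$ of $L'$, each $Q_e$ is subdivided into rectangles of aspect ratio $2\eps^{-1/2}$ with short vertical connectors exactly as in Lemma~\ref{lem:combine5}, and a point $t$ on a bump descends $y$-monotonically to a point $t'\in e\subset L'$ before entering the \textsf{SLT}s; the total width of the squares is at most $1$, keeping the extra weight $O(\eps^{-1/2})$. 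Your proposal is missing precisely this flattening-plus-local-grid step (or a worked-out substitute for it), so as written it does not establish the $1+O(\eps)$ stretch.
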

\begin{proof}
Assume w.l.o.g.\ that $y(p)\leq y(q)$. For every maximal horizontal chord $ab$ of $L$, replace $L_{ab}$ with $ab$,
and denote by $L'$ the resulting $pq$-path. Then $L'$ is a $\Lambda$-staircase path. For each horizontal edge $e$ of $L'$, let $Q_e$ be an axis-parallel square of side length $\|e\|$ above $e$. Since $L$ is a tame path, each connected component of $L\setminus L'$ lies in a square $Q_e$ for some horizontal edge $e$ of $L'$; see Fig.~\ref{fig:combinations2}(b).

We construct a geometric graph $G$ as follows. Let $G$ contain two \textsf{SLT}s from the center of $R$ to the bottom side of $R$ and to $L'$, resp.,  described in Corollary~\ref{cor:stairs}.
It should also contain a subdivision of each square $Q_e$ into rectangles of aspect ratio $2\eps^{-1/2}$. Finally, in each rectangle $r\subset Q$ of this subdivision, if $r$ intersects $L$, then $G$ contains a vertical line segment from a bottom-most point in $L\cap r$ to the bottom side of $r$. The weight of the \textsf{SLT} is $O(\eps^{-1/2})$ by Corollary~\ref{cor:stairs}. Since the sum of the widths of all squares $Q_e$ is at most one 1, the total weight of the grids in $Q_e$ is also $O(\eps^{-1/2})$, and the vertical edges in the rectangles in $r\subset Q$ are bounded above by the weight of the grid. The overall weight of $G$ is $\|L\|+O(\eps^{-1/2})$.

Let $S$ be a point in the bottom side of $R$, and $t\in L$. If $t\in L'$, then the two \textsf{SLT}s jointly contain a path $P_{st}$ with $\|P_{st}\|\leq (1+O(\eps))\|st\|$ by Lemma~\ref{lem:combine2}. Otherwise, $t\in L\setminus L'$. Since $L$ is a tame path, $t$ lies in a square $Q_e$ for some horizontal $e$ of $L'$. We can construct a path $P_{st}$ as a path from $t$ to a point $t'\in e$ similarly to the proof of Lemma~\ref{lem:combine5}, followed by a path from $t'$ to $s$ in the \textsf{SLT}s.
\end{proof}

We use Lemma~\ref{lem:combine5} to construct a $(1+\eps)$-spanner between the base $pq$ and $pq$-path in a tame histogram.

\begin{lemma}\label{lem:dir2}
Let $H$ be a tame histogram bounded by a horizontal line $pq$ and $pq$-path $L$,
and let $S\subset \partial H$ be a finite point set.
Then there exists a geometric graph $G$ of weight $\|G\|=O(\eps^{-1/2}\,\per(P))$
such that $G$ contains a $ab$-path $P_{ab}$ with $\|P_{ab}\|\leq (1+\eps)\|ab\|$
for all $a\in S\cap L$ and $b\in pq$ such that $ab\subset H$ and $|\slope(ab)|\geq \eps^{-1/2}$.
\end{lemma}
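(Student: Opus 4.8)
The plan is to deduce the lemma from Lemma~\ref{lem:combine5} by covering $H$ with scaled copies of the gadget of that lemma: a rectangle $R$ of aspect ratio $2\eps^{-1/2}$ whose base lies on $pq$, surmounted by a square cap $Q$. The guiding observation is that an admissible pair, i.e.\ $a\in S\cap L$ and $b\in pq$ with $ab\subset H$ and $|\slope(ab)|\ge\eps^{-1/2}$, satisfies $|x(a)-x(b)|\le\sqrt\eps\,(y(a)-y(b))$, so it is essentially vertically aligned at a scale $h:=y(a)-y(b)$. I would therefore fix a dyadic family of scales and, for each scale $h$, partition the $x$-range of $H$ into slabs of width $\Theta(\sqrt\eps\,h)$, taking a constant number of shifted copies of the partition so that no admissible pair at scale $\asymp h$ straddles a slab boundary (and using a slightly flatter gadget, of aspect ratio $\Theta(\eps^{-1/2})$, to also catch the pairs whose slope is only a little above $\eps^{-1/2}$). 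Over each slab $\Sigma$ I erect the gadget $R_\Sigma\cup Q_\Sigma$ of width $\Theta(\sqrt\eps\,h)$, with $R_\Sigma$ based on $pq$ and of height $\asymp h$, so that the cap $Q_\Sigma$ sits at height $\asymp h$; for an admissible pair at scale $\asymp h$ with both $x$-coordinates in $\Sigma$ we then have $a\in Q_\Sigma$ and $b$ on the base of $R_\Sigma$. The graph $G$ is the union of $L$, the boundaries of all these gadgets, and the graphs that Lemma~\ref{lem:combine5} produces inside them.

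The stretch bound is then almost free. Applying Lemma~\ref{lem:combine5} rescaled by the width of $\Sigma$ gives an $ab$-path of weight $(1+O(\eps))\|ab\|$ for every $a\in Q_\Sigma\cap L$ and every $b$ on the base of $R_\Sigma$, and rescaling the constants turns this into $1+\eps$. Two points need checking: that the piece of $\partial H$ inside $Q_\Sigma$, completed along $\partial H$ into a $pq$-path over the base of $Q_\Sigma$, is again a tame path --- which follows from properties~(i)--(ii) of tame histograms together with Lemma~\ref{lem:fat}, since intersecting a tame path with a slab and a horizontal band preserves the slope bounds on its $\Lambda$-paths and does not increase any chord-to-subpath ratio --- and that inside a gadget the walk along $L$ is cheap, which is already guaranteed by the internal grid of Lemma~\ref{lem:combine5}: that walk is confined to one grid cell of width $\Theta(\sqrt\eps)$ times the gadget scale, while $\|ab\|$ is at least the height of $R_\Sigma$, i.e.\ $\Omega(\eps^{-1/2})$ times that scale.

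The weight bound $\|G\|=O(\eps^{-1/2}\,\per(H))$ is the step I expect to be the obstacle. The path $L$ is added once, at cost $\le\per(H)$; in a gadget over a slab of width $w$ the remaining new edges (the gadget boundary and the extra edges of Lemma~\ref{lem:combine5}) weigh $O(\eps^{-1/2}w)$. Over one scale this sums to $O(\eps^{-1/2}\,\wdth(H))$ because the slabs at a fixed scale are interior-disjoint, but the number of scales cannot be bounded in terms of $\per(H)$, so summing naively loses a forbidden logarithmic factor. The fix, and the technical heart of the argument, is to erect a gadget over $\Sigma$ at scale $h$ only when $H$ actually reaches, and $pq\cap\Sigma$ actually sees through the interior of $H$, height $\asymp h$ over $\Sigma$, and then to charge that gadget's $O(\eps^{-1/2}w)$ of new edges to the portion of $\partial H$ lying over $\Sigma$ in the height band $[h/2,h]$ --- a portion forced to have weight $\Omega(\eps^{-1/2}w)$, since reaching into that band compels $\partial H$ to climb a vertical extent $\Omega(h)=\Omega(\eps^{-1/2}w)$ inside $\Sigma$. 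These charged regions are pairwise disjoint, across scales because the dyadic bands $[h/2,h]$ tile an interval, and across slabs at a fixed scale because the slabs have disjoint $x$-ranges; hence the total charge is $O(\per(H))$, which yields the claimed bound. Carrying this out rigorously --- verifying that the visibility condition $ab\subset H$ really does force enough of $\partial H$ into the band, that the $\Lambda$-paths of a tame histogram (slope $\ge\Lambda\eps^{-1/2}$) are steep enough to make the climb dominate, and that the horizontal cut edges dovetail with the $\sum\hper=O(\per)$ bookkeeping inherited from Lemmas~\ref{lem:fuzzy} and~\ref{lem:tame} --- is the remaining work, and it is precisely what turns Lemma~\ref{lem:combine5}, which alone serves only a thin band of heights, into a directional spanner between $pq$ and all of $L$.
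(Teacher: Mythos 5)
Your overall strategy (cover $H$ by scaled copies of the Lemma~\ref{lem:combine5} gadget and sum their costs via a charging argument) is the same as the paper's, but two of your concrete steps do not work as stated. First, dyadic scales cannot give the coverage you claim: the cap $Q_\Sigma$ of the Lemma~\ref{lem:combine5} gadget is a \emph{square} of side equal to the gadget width $w=\Theta(\sqrt{\eps}\,h)$, so it covers only a height band of thickness $\Theta(\sqrt{\eps}\,h)$ at elevation $\asymp h$; an admissible pair has $y(a)-y(b)=y(a)$ anywhere in $[h,2h)$, so for most such pairs $a\notin Q_\Sigma$, and your assertion ``$a\in Q_\Sigma$'' fails. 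You cannot repair this by making the cap taller while still invoking Lemma~\ref{lem:combine5} as a black box, because its detour analysis needs the cap thickness to be $O(\sqrt{\eps})$ times the elevation. The paper instead takes elevations in geometric progression with ratio $1+\Theta(\sqrt{\eps})$ (the lines $\ell_i: y=(2\eps^{-1/2}/(1+2\eps^{-1/2}))^i$), tiles each thin strip by squares in two shifted ways, and keeps only the squares containing a point of $S$; this creates $\Theta(\eps^{-1/2})$ times more candidate levels, which makes the weight bookkeeping \emph{harder}, not easier, than in your dyadic picture.

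Second, and more seriously, your charging claim is false. The fact that $\partial H$ is present (or visible from $pq\cap\Sigma$) at height $\asymp h$ over a slab $\Sigma$ of width $w=\Theta(\sqrt{\eps}\,h)$ does not force $\partial H$ to ``climb $\Omega(h)$ inside $\Sigma$'': the ceiling can cross the slab horizontally inside the band $[h/2,h]$, contributing only $O(w)=O(\sqrt{\eps}\,h)$, or merely clip a corner of $\Sigma\times[h/2,h]$, contributing arbitrarily little; and with a purely visibility-based pruning a tall empty region forces gadgets at every scale over $\Sigma$ with nothing in the band to charge to, which reintroduces the $\log\Delta$ factor you were trying to avoid. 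Note also that the inequality you actually need is only $\sum_{\text{gadgets}} w = O(\per(H))$ (each gadget costs $O(\eps^{-1/2}w)$ against the target $O(\eps^{-1/2}\per(H))$), so demanding $\Omega(\eps^{-1/2}w)$ of boundary per gadget is both stronger than necessary and unattainable; but even the weaker per-gadget claim of $\Omega(w)$ fails for rigid, pairwise disjoint band-times-slab regions because of the corner-clipping issue. The paper's proof of Lemma~\ref{lem:dir2} resolves exactly this point differently: it erects a gadget only for squares $Q$ with $Q\cap S\neq\emptyset$, uses the two shifted tilings so that the point sits well inside some square, observes that $L$ must traverse the annulus $2Q\setminus Q$ and hence $\|L\cap 2Q\|\geq \wdth(Q)$, and controls the overlaps of the doubled squares by showing the proximity graph on $\mathcal{Q}$ has bounded degree and splitting $\mathcal{Q}$ into $O(1)$ independent sets, yielding $\sum_{Q}\wdth(Q)=O(\|L\|)=O(\per(H))$. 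Your stretch analysis (rescaled Lemma~\ref{lem:combine5}, tameness preserved on restriction) is essentially the paper's; the gap is concentrated in the scale selection and in this packing/charging step, which is precisely the part you flagged as ``remaining work'' but whose proposed justification does not hold.
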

\begin{proof}
We construct a collection $\mathcal{Q}$ of squares such that
for every square $Q\in \mathcal{Q}$ is adjacent to a rectangle $R(Q)$ as in
the setting of Lemma~\ref{lem:combine5}; and for every point pair $a,b\in S$,
with $a\in L$ and $b\in pq$, there is a square $Q\in \mathcal{Q}$ such that $a,b\in Q\cup R(Q)$. Let $G(Q)$ be the geometric graph in Lemma~\ref{lem:combine5} for all $Q\in \mathcal{Q}$, and let $G=\bigcup_{Q\in \mathcal{Q}}G(Q)$. Then $G$ has the required stretch factor.
It remains to construct the collection $\mathcal{Q}$ of squares, and
show that $\|G\|=O(\eps^{-1/2}\,\per(P))$.

\smallskip\noindent\emph{Construction of Squares.}
Refer to Fig.~\ref{fig:thames}.
Let $H$ be a tame histogram bounded by a horizontal line $pq$ and $pq$-path $L$.
We may assume w.l.o.g. that $p$ is the origin and $pq$ is on the positive $x$-axis,
and $h=\mathrm{height}(H)$.
Since $H$ is tame, $\|L\|\leq 2\|ab\|$, which implies that $\mathrm{height}(R)<\frac12\,\|pq\|$.
For every nonnegative integer $i\in \mathbb{N}$, let
\[\ell_i:y=\left(\frac{2\eps^{-1/2}}{1+2\eps^{-1/2}}\right)^i.\]
We tile the horizontal strip between two consecutive lines, $\ell_i$ and $\ell_{i+1}$,
by squares in two different ways, such that the midpoint of a square in one tiling is on the
boundary of two squares in the other tiling.

\begin{figure}[htbp]
 \centering
 \includegraphics[width=0.8\textwidth]{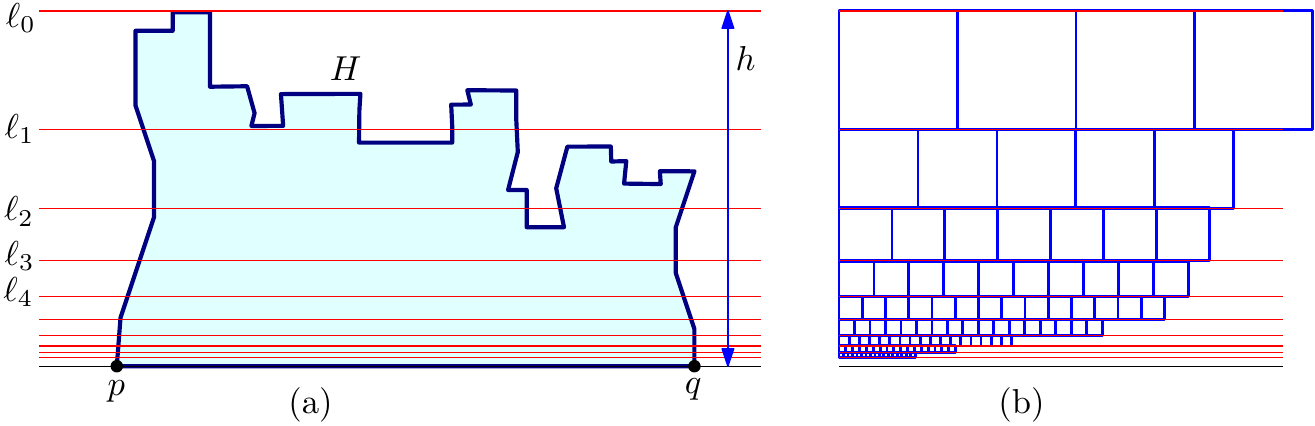}
 \caption{(a) A tame histogram $H$ and horizontal lines $\ell_i$, $i\in \mathbb{N}$ defined in the proof of Lemma~\ref{lem:dir2}.
 (b) Tiling of the horizontal strips between consecutive lines $\ell_i$ and $\ell_{i+1}$.}
    \label{fig:thames}
\end{figure}

Let $\mathcal{Q}$ be the set of squares $Q$ in the tilings defined above
such that $Q\cap S\neq \emptyset$. For each square $Q\in \mathcal{Q}$,
let $R(Q)$ be the rectangle of aspect ratio $2\eps^{-1/2}$ between $Q$ and the $x$-axis,
and $G(Q)$ the geometric graph provided by Lemma~\ref{lem:combine5}.

\smallskip\noindent\emph{Lightness Analysis.}
By Lemma~\ref{lem:combine5}, the graph $G(Q)$ comprised of $L\cap Q$ and additional edges of
length $O(\eps^{-1/2}\mathrm{width}(Q))$. For the desired bound $\|G\|\leq O(\eps^{-1/2}\per(H))$,
it is enough to prove that $\sum_{Q\in \mathcal{Q}} \mathrm{width}(Q) \leq O(\per(H))$.

We define a proximity graph $\widehat{G}$ on the squares in $\mathcal{Q}$.
The vertex set is $V(\widehat{G})=\mathcal{Q}$, and squares $Q_1,Q_2\in \mathcal{Q}$
are adjacent iff $\mathrm{dist}(Q_1,Q_2)\leq \mathrm{width}(Q_1)+\mathrm{width}(Q_2)$.
Since the squares in the horizontal strip between $\ell_i$ and $\ell_{i+1}$ form two tilings,
and the widths of the squares in adjacent horizontal strips differ by a factor close to 1,
the maximum degree in $\widehat{G}$ is $O(1)$. Consequently, $\widehat{G}$ is
$O(1)$-degenerate, and we can partition its vertex set $\mathcal{Q}$ into
$O(1)$ independent sets.

For every $Q\in \mathcal{Q}$, let $2Q$ denote the square obtained by dilating $Q$ from its center by a factor of 2.
Since $L$ contains points in $Q$, but its endpoints are outside of $2Q$, then $L$ traverses the annulus $2Q\setminus Q$ twice, which implies $\|L\cap 2Q\|\geq \mathrm{width}(Q)$. For an independent set $\mathcal{I}\subset \mathcal{Q}$, the squares $\{2Q: Q\in \mathcal{I}\}$ are pairwise disjoint.
It follows that
\begin{equation}\label{eq:ind}
\sum_{Q\in \mathcal{I}} \mathrm{width}(Q)
\leq \sum_{Q\in \mathcal{I}}\|L\cap 2Q\|
\leq \left \| L\cap \left(\bigcup_{Q\in \mathcal{I}} 2Q\right)\right\|
\leq \|L\|.\nonumber
\end{equation}
Summation over $O(1)$ independent sets yields  $\sum_{Q\in \mathcal{Q}} \mathrm{width}(Q) \leq \|L\|\leq O(\per(H))$,
as required.
\end{proof}

In the remainder of this section, we construct a directional $(1+\eps)$-spanner for points on the $x$-monotone $\Lambda$-path of a tame histogram. This is done by an adaptation of Lemma~\ref{lem:staircase}. Even though vertical edges are replaced by $\Lambda$-paths, and horizontal edges by tame paths, the weight analysis remains essentially the same.

The crucial observation in the proof Lemma~\ref{lem:staircase}
(cf.~Equation~\eqref{eq:width0})
was that if $L$ is an $x$- and $y$-monotone staircase $ab$-path,
then $\mathrm{slope}(ab)=\mathrm{height}(L)/\mathrm{width}(L)$.
We show that this equation holds approximately for a tame paths $L$,
where the width and height of $L$ are replaced by the total weight
of horizontal and nonhorizontal edges of $L$, resp., denoted $\hper(L)$ and $\vper(L)$.

\begin{lemma}\label{lem:width}
There exists a constant $\eps_0>0$ such that for all $0<\eps<\eps_0$, the following holds. If $L$ is a tame $pq$-path such that $\frac12\eps^{-1/2}\leq \slope(pq)\leq \eps^{-1/2}$,then
\begin{equation}\label{eq:width2}
\frac{3}{4} \leq \frac{\vper(L)/\hper(L)}{\slope(pq)} \leq \frac{4}{3}.
\end{equation}
\end{lemma}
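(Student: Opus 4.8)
The plan is to control the ratio $\vper(L)/\hper(L)$ by comparing $L$ to its underlying $\Lambda$-staircase $L'$ obtained by replacing every maximal horizontal chord $ab$ of $L$ with the segment $ab$. First I would set up notation: write $\hper(L)=W$ and $\vper(L)=V$, and let $W'=\hper(L')$, $V'=\vper(L')$. Since $L'$ is a genuine $x$- and $y$-monotone $\Lambda$-staircase $pq$-path, I can compute the contribution of each of its edges exactly: a horizontal edge of $L'$ of length $w$ contributes $(w,0)$ to $(\hper,\vper)$, while a $\Lambda$-edge of horizontal extent $w$ contributes either $(0,w\Lambda\eps^{-1/2})$ (vertical edge, where the ``horizontal extent'' is $0$ and it contributes $\hght$ to $\vper$) or $(w, w\sqrt{1+\Lambda^2\eps^{-1}}\cdot{}$appropriate factor$)$ — more carefully, I should track horizontal displacement $\Delta x$ and vertical displacement $\Delta y$ and note that for a $\Lambda$-path with slope $\pm\Lambda\eps^{-1/2}$, $\vper$ of that segment equals $\|{\rm segment}\|$ and $\Delta y = \Lambda\eps^{-1/2}\,\Delta x$, so $\vper = \Delta x\sqrt{1+\Lambda^2/\eps} = \Delta y\sqrt{1+\eps/\Lambda^2}\in[\Delta y,(1+\eps/(2\Lambda^2))\Delta y]$. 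Because $L$ (hence $L'$) is $y$-monotone, summing vertical displacements gives exactly $\hght(L)=\hght(L')=\slope(pq)\cdot\wdth(L')$; and since $L'$ is $x$-monotone too, summing horizontal displacements gives $\wdth(L')$. So on $L'$, $V'$ equals $\hght(L')$ up to a factor $(1+\eps/(2\Lambda^2))$, while $W'$ is between $0$ and $\wdth(L')$; the latter looseness is the point where I need the tameness hypothesis and the slope bound.

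The key quantitative step is to argue $W'=\hper(L')$ is comparable to $\wdth(L')$, more precisely that $W'\le\wdth(L')$ trivially (horizontal edges have disjoint $x$-projections inside the interval $[x(p),x(q)]$) and $W'\ge c\cdot\wdth(L')$ for a constant $c$ close to $1$. For the lower bound: the $x$-extent not covered by horizontal edges of $L'$ is covered by $\Lambda$-edges; a $\Lambda$-edge of $x$-extent $w$ has $y$-extent $\Lambda\eps^{-1/2}w$, and since the total $y$-extent is $\hght(L')=\slope(pq)\,\wdth(L')\le\eps^{-1/2}\wdth(L')$, the total $x$-extent of $\Lambda$-edges is at most $\wdth(L')/\Lambda$ (using $\Lambda\eps^{-1/2}\cdot(\text{their $x$-extent})\le\eps^{-1/2}\wdth(L')$, i.e. their combined $x$-extent $\le\wdth(L')/\Lambda\le\wdth(L')/8$). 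Hence $W'\ge(1-1/\Lambda)\wdth(L')\ge\tfrac78\wdth(L')$. Combined with $V'\in[\hght(L'),(1+\tfrac{\eps}{2\Lambda^2})\hght(L')]$ and $\hght(L')=\slope(pq)\wdth(L')$, this pins $\frac{V'/W'}{\slope(pq)}$ into an interval tightly around $1$, say $[\tfrac78,\tfrac87]$ once $\eps$ is small. Then I would transfer from $L'$ back to $L$: property (ii) of a tame histogram says each chord $cd$ we removed is replaced by a subpath $L_{cd}$ with $\|L_{cd}\|\le 2\|cd\|=2\,\hper(L_{cd})$ — wait, $cd$ is horizontal so $\|cd\|$ is purely horizontal — more usefully, expanding each horizontal chord $cd$ back into $L_{cd}$ only adds edges, and the added nonhorizontal length of $L_{cd}$ is what separates $\vper(L)$ from $\vper(L')$ while the added horizontal length separates $\hper(L)$ from $\hper(L')$. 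Using $\|L_{cd}\|\le 2\|cd\|$ and that $cd$ is horizontal of length equal to that chord's $x$-extent, the nonhorizontal part of $L_{cd}$ has length $\le 2\|cd\|$ and its horizontal part is $\ge\|cd\|$ minus a $\sqrt\eps$-order term (by Lemma~\ref{lem:fat} applied to $L_{cd}$, with the bottom-most point of $L_{cd}$ being an endpoint since $cd$ is a maximal chord — actually $L_{cd}$ lies above $cd$, so its bottom-most point is on $cd$, giving $\|L_{cd}\|\le 2|x(c)-x(d)|+(1+2\sqrt\eps)\cdot 0$; that bound alone isn't enough, so I instead directly bound $\vper(L_{cd})$ and $\hper(L_{cd})$: $\hper(L_{cd})\ge|x(c)-x(d)|$ trivially and $\vper(L_{cd})=\|L_{cd}\|-\hper(L_{cd})\le 2|x(c)-x(d)|-|x(c)-x(d)|=|x(c)-x(d)|$). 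So each expanded chord adds at most as much to $\vper$ as to $\hper$ (up to lower-order terms), which keeps $\frac{\vper(L)/\hper(L)}{\slope(pq)}$ within the claimed $[\tfrac34,\tfrac43]$ band after absorbing the $O(\sqrt\eps)$ slack into $\eps_0$.

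I expect the main obstacle to be the bookkeeping in the last transfer step: on $L'$ the ratio is essentially $1$, but reinserting the chord subpaths $L_{cd}$ perturbs numerator and denominator in opposite directions, and I must verify the perturbation stays within a constant factor uniformly, using both tameness properties (property (i) is what guarantees $L'$ is a legitimate $\Lambda$-staircase with disjoint $x$-projections of same-orientation edges, and property (ii) is what bounds $\|L_{cd}\|$). The slope hypothesis $\tfrac12\eps^{-1/2}\le\slope(pq)\le\eps^{-1/2}$ is used twice: to ensure $\Lambda$-edges (slope $\pm\Lambda\eps^{-1/2}$ with $\Lambda\ge 8$) are genuinely steeper than the overall chord so their combined $x$-extent is a small fraction of $\wdth(L')$, and to keep all error terms of the form $\eps\cdot\slope(pq)^{-1}\cdot(\cdot)$ or $\sqrt\eps\cdot(\cdot)$ negligible. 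Choosing $\eps_0$ small enough (depending only on $\Lambda$, itself a fixed constant $\ge 8$) then yields~\eqref{eq:width2}.
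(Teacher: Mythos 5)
Your overall route is the same as the paper's: collapse every maximal horizontal chord of $L$ to get the $\Lambda$-staircase $L'$, compare $\vper(L')$ and $\hper(L')$ to $\hght(pq)$ and $\wdth(pq)$ using the near-verticality of $\Lambda$-edges together with the hypothesis $\slope(pq)\le\eps^{-1/2}$ (the paper does this by rearranging $L'$ into one $\Lambda$-path followed by one horizontal segment, which is only cosmetically different from your direct displacement count), and then transfer back to $L$ via tameness. The first part of your argument is fine up to lower-order terms (your appeal to $y$-monotonicity of $L$ is an implicit assumption the paper also makes when it treats all nonhorizontal edges as a single $\Lambda$-path in a wedge, so I set it aside).

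The gap is in the transfer step, and it hits exactly the lower bound in \eqref{eq:width2}. The bounds you actually state for a re-inserted subpath $L_{cd}$ are $\hper(L_{cd})\ge\|cd\|$ and $\vper(L_{cd})\le\|cd\|$; together with tameness these only give $\hper(L_{cd})\le\|L_{cd}\|\le 2\|cd\|$, so from what you have written, re-inserting the chords could a priori double the horizontal perimeter, $\hper(L)\le 2\hper(L')$, while leaving $\vper(L)=\vper(L')$. That is a constant-factor perturbation of the denominator, not an ``$O(\sqrt\eps)$ slack'': even with the sharpest constants available from your first part ($\vper(L')\ge\hght(pq)$ and $\hper(L')\le(1+\frac1\Lambda)\wdth(pq)\le\frac98\,\wdth(pq)$ up to $o(1)$, hence $\vper(L')/\hper(L')\ge\frac89\,\slope(pq)$), a doubling of $\hper$ drags the ratio down to about $\frac49\,\slope(pq)$, well below the claimed $\frac34$. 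Your closing claim that ``each expanded chord adds at most as much to $\vper$ as to $\hper$'' is also not what your bounds show (the addition to $\hper$ is $\hper(L_{cd})-\|cd\|$, which can be far smaller than the addition $\vper(L_{cd})$), and it is in any case not the inequality you need. What is needed, and what the paper asserts at this point ($|\hper(L)-\hper(L')|\le\delta\,\hper(L')$ with $\delta=\sqrt\eps/\Lambda$), is $\hper(L_{cd})\le(1+O(\sqrt\eps/\Lambda))\|cd\|$. This is true, but it requires the structure of tame paths rather than just $\|L_{cd}\|\le2\|cd\|$: the horizontal edges of $L_{cd}$ are $x$-monotone increasing, so $\hper(L_{cd})$ equals the total $x$-displacement $\|cd\|$ minus the net $x$-displacement of the nonhorizontal edges, and the latter has absolute value at most $\frac{\sqrt\eps}{\Lambda}\|L_{cd}\|\le\frac{2\sqrt\eps}{\Lambda}\|cd\|$ because every nonhorizontal edge is vertical or of slope $\pm\Lambda\eps^{-1/2}$. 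Adding this one displacement argument (which also repairs your ``trivial'' claim $\hper(L_{cd})\ge\|cd\|$, itself only true up to the same $O(\sqrt\eps/\Lambda)$ correction) closes the gap and recovers the stated constants.
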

\begin{proof}
Assume w.l.o.g. that $x(p)<x(q)$ and $y(p)<y(q)$; refer to Fig.~\ref{fig:shuffle}.
For $a,b\in L$, denote by $L_{ab}$ the subpath of $L$ between $a$ and $b$.
We simplify $L$ in two steps to help the weight analysis.
First, for every maximal horizontal chord $ab$ of $L$, replace $L_{ab}$ with $ab$,
and denote by $L'$ the resulting $pq$-path (Fig.~\ref{fig:shuffle}(b)).
Note that $L'$ is a $\Lambda$-staircase path comprised of horizontal edges and $\Lambda$-paths.
Second, rearrange the order of the edges in $L'$ so that all nonhorizonal edges precede all horizontal edges, and denote by $L''$ the resulting $pq$-path (Fig.~\ref{fig:shuffle}(c)).
Then $L''$ consists of a single $\Lambda$-path $\gamma$ of weight $\vper(L'')$ followed by a horizontal segment of weight $\hper(L'')$.
Since the slope and weight of the edges are unaffected by a rearrangement,
we have $\hper(L'')=\hper(L')$ and $\vper(L'')=\vper(L)$.

\begin{figure}[htbp]
 \centering
 \includegraphics[width=0.75\textwidth]{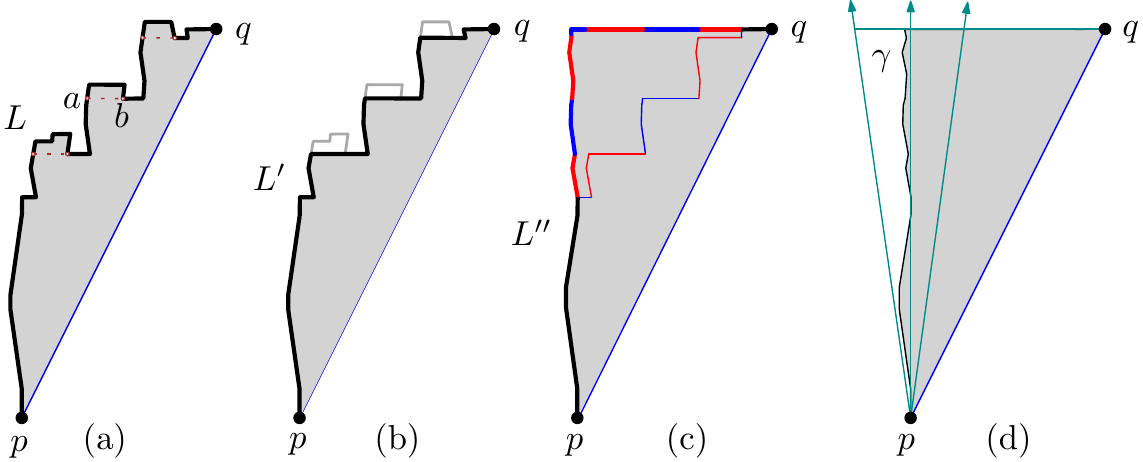}
 \caption{(a) A tame path $L$ from $p$ to $q$.
 (b) A corresponding $\Lambda$-staircase path $L'$ from $p$ to $q$.
 (c) Rearranging the edges of $L'$ produces the path $L''$.
 (d) The $\Lambda$-path of $L''$ lies in an wedge between rays of slopes $\pm\,\Lambda\eps^{-1/2}$.}
    \label{fig:shuffle}
\end{figure}

Recall that every edge of a $\Lambda$-path $\gamma$ is vertical or of slope $\pm\,\Lambda\eps^{-1/2}$. Consequently, $\gamma$ lies in a wedge with apex at $p$, bounded by rays of slopes $\pm\,\Lambda\eps^{-1/2}$ (Fig.~\ref{fig:shuffle}(d)). Then
\begin{align*}
|\hper(L')-\mathrm{width}(pq)|
&=|\hper(L'')-\mathrm{width}(pq)|
 =\mathrm{width}(\gamma)
\leq \frac{\sqrt{\eps}}{\Lambda}\, \mathrm{height}(\gamma)\\
&\leq \frac{\sqrt{\eps}}{\Lambda}\, \mathrm{height}(pq)
\leq \frac{2}{\Lambda}\,\mathrm{width}(pq)
\leq \frac15 \,\mathrm{width}(pq),
\end{align*}
and
\[
|\hper(L')-\mathrm{height}(pq)|
=|\hper(L'')-\mathrm{height}(pq)|
\leq \frac{2\sqrt{\eps}}{\Lambda}\mathrm{height}(pq)
=2\delta\, \mathrm{height}(pq),
\]
where $\delta=\sqrt{\eps}/\Lambda$. Consequently,
\begin{align}
\frac{(1-2\delta) \hght(pq)}{\frac65 \wdth(pq)} \leq
    & \frac{\vper(L')}{\hper(L')} \leq  \frac{(1+2\delta)\hght(pq)}{\frac45 \wdth(pq)}\nonumber\\
\frac56\,(1-2\delta)\,\slope(pq)\leq
    & \frac{\vper(L')}{\hper(L')} \leq  \frac54\,(1+2\delta)\,\slope(pq)\label{eq:shuffle}
\end{align}

Since $L$ is a tame path, it can be reconstructed from $L'$ by replacing some disjoint horizontal segments $ab\subset L'$ (i.e., the maximal horizontal chords of $L$) by paths of weight at most $2\|ab\|$ above $ab$. This implies that
$0\leq \vper(L)-\vper(L')\leq \hper(L')\leq 3\sqrt{\eps}$
and $|\hper(L)-\hper(L')|\leq \delta\,\hper(L')$.
Combined with \eqref{eq:shuffle}, we obtain
 \begin{align}
\frac{\vper(L')}{(1+\delta)\,\vper(L')} \leq
    & \frac{\vper(L)}{\hper(L)} \leq  \frac{(1+3\sqrt{\eps})\hper(L')}{(1-\delta)\vper(L')}\nonumber\\
\frac56\,\frac{1-2\delta}{1+\delta}\,\slope(pq)\leq
    & \frac{\vper(L')}{\hper(L')} \leq  \frac54\,\frac{(1+2\delta)(1+3\sqrt{\eps})}{1-\delta}\,\slope(pq).
\end{align}
Now \eqref{eq:width2} follows if $\eps>0$ is sufficiently small, bounded above by a suitable constant $\eps_0>0$.
\end{proof}

\begin{figure}[htbp]
 \centering
 \includegraphics[width=0.9\textwidth]{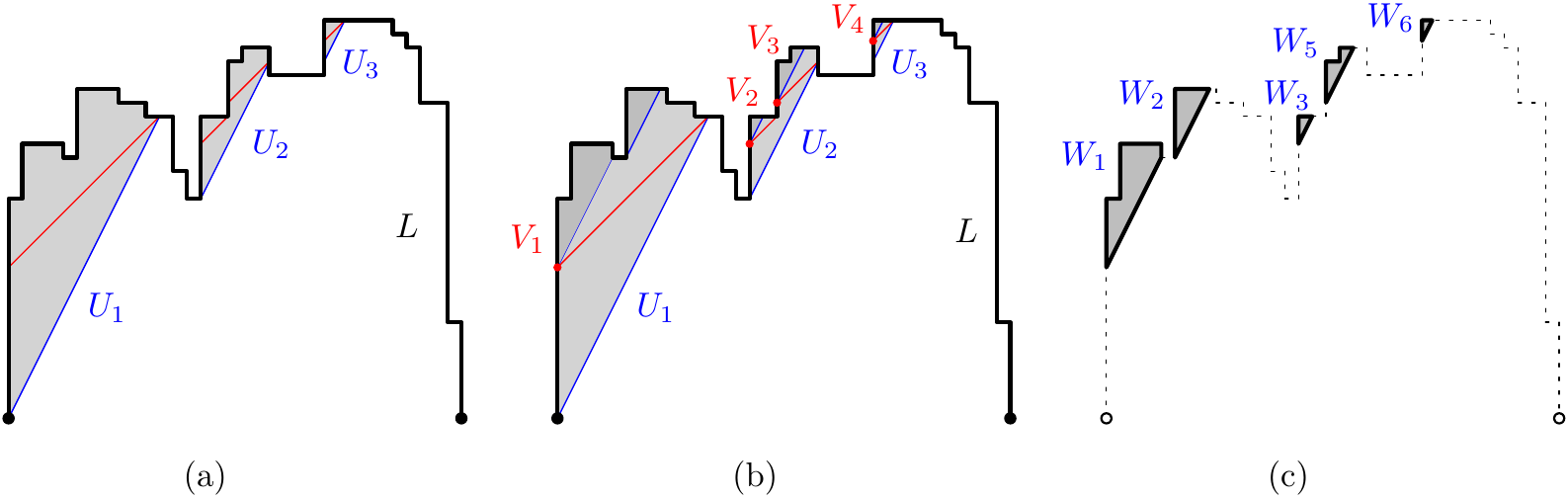}
 \caption{(a) A tame path $L$. The shadow of the ascending $\Lambda$-paths of $L$ is shaded light gray.
 (b) The shadow of the horizontal edges and descending $\Lambda$-paths is shaded dark gray.
 (c) Recursive subproblems generated in the proof of Lemma~\ref{lem:tame-staircase}.}
    \label{fig:tame-shadow}
\end{figure}

As noted above, the following lemma is an adaptation of Lemma~\ref{lem:staircase} to tame paths. Due to Lemma~\ref{lem:width}, the  recursive weight analysis carries over to this case. For clarity, we present the complete proof.

\begin{lemma}\label{lem:tame-staircase}
Let $L$ be a tame path and let $S\subset L$ be a finite point set.
Then there exists a geometric graph $G$ comprised of $L$ and additional edges of weight $O(\eps^{-1/2}\hper(L))$
such that $G$ contains a path $P_{ab}$ of weight $\|P_{ab}\|\leq (1+O(\eps))\|ab\|$ for any $a,b\in L$ such that $|\slope(ab)|\geq \eps^{-1/2}$ and the line segment $ab$ lies below $L$.
\end{lemma}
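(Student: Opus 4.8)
The plan is to mirror the recursive construction in the proof of Lemma~\ref{lem:staircase}, replacing the geometric quantities width and height of staircase polygons by horizontal perimeter and vertical perimeter of tame paths, and invoking Lemma~\ref{lem:width} wherever the identity $\slope(ab)=\hght(L)/\wdth(L)$ was used. First I would reduce, exactly as in Lemma~\ref{lem:staircase}, to the case where $a$ lies on a non-horizontal edge (an edge of a $\Lambda$-path) of $L$ and $b$ lies on a horizontal edge of $L$, with $\slope(ab)\geq \eps^{-1/2}$ and $ab$ below $L$. Then I would form the set $A$ of all points $p$ that are ``shadowed'' by some point $a$ on an ascending $\Lambda$-path of $L$ with $\slope(ap)\geq\eps^{-1/2}$ and $ap$ below $L$; its connected components are polygons bounded by a tame subpath of $L$ and a single segment of slope $\eps^{-1/2}$ (the analogue of a shadow polygon). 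Let $\mathcal{U}$ be the family of these components. Since they sit over disjoint subpaths of $L$, we have $\sum_{U\in\mathcal{U}}\hper(L\cap U)\leq \hper(L)$, so it suffices to build, for each $U$, a graph $G(U)$ of weight $O(\eps^{-1/2}\hper(L\cap U))$ such that $G(U)\cup L$ spans the relevant pairs in $S\cap U$.

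Inside each $U$, I would carry out the same two-step ``shrink the shadow'' recursion: first pass to the components $\mathcal{V}$ of the set $B(U)$ of points reachable from a horizontal edge of $L\cap U$ by a segment of slope $\geq\frac12\eps^{-1/2}$ inside $U$; then pass to the components $\mathcal{W}$ of the set $C(V)$ of points reachable from a non-horizontal edge of $L\cap V$ by a segment of slope $\geq\eps^{-1/2}$ inside $V$. The key bookkeeping inequality, the tame-path analogue of \eqref{eq:width0}, now reads
\begin{align*}
\sum_{W\in\mathcal{W}}\hper(L\cap W)
&\asymp \sqrt{\eps}\sum_{W\in\mathcal{W}}\vper(L\cap W)
\leq \sqrt{\eps}\sum_{V\in\mathcal{V}}\vper(L\cap V)\\
&\asymp \tfrac12\sum_{V\in\mathcal{V}}\hper(L\cap V)
\leq \tfrac12\sum_{U\in\mathcal{U}}\hper(L\cap U),
\end{align*}
where each $\asymp$ is an equality up to a factor in $[\frac34,\frac43]$ coming from Lemma~\ref{lem:width} (so the slope thresholds $\eps^{-1/2}$ and $\frac12\eps^{-1/2}$ must be chosen with enough slack that the product of two such factors still gives a contraction by a constant $<1$; this is why the paper uses $\frac12\eps^{-1/2}$ rather than something closer to $\eps^{-1/2}$). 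For each $V\in\mathcal{V}$ I would take the bottom vertex $s_V$ of $V$ as a source, slice $V$ by horizontal lines at heights $\hght(V)/2^i$ above $s_V$, and on each slab containing a point of $S$ build a shallow-light tree from $s_V$ to the corresponding tame subpath of $L$ using Lemma~\ref{lem:combine5} (or Corollary~\ref{cor:stairs}, adapted to tame paths); these have total weight $O(\eps^{-1/2}\hper(L\cap V))$ by Lemma~\ref{lem:combine5}. I would also add $\partial V$ to the spanner at cost $\per(V)=O(\eps^{-1/2}\hper(L\cap V))$ (using that $V$ is built from segments of slope $\geq\frac12\eps^{-1/2}$ plus horizontal edges), and recurse on the $W\in\mathcal{W}$ meeting $S$. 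Summing the geometric series over recursion levels, the contraction above gives $\|G(U)\|=O(\eps^{-1/2}\hper(L\cap U))$, hence $\|G\|=O(\eps^{-1/2}\hper(L))$ overall.

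For the stretch analysis, given $a,b\in S$ with $\slope(ab)\geq\eps^{-1/2}$, $a$ on an ascending $\Lambda$-path, $b$ on a horizontal edge, and $ab$ below $L$, let $U\in\mathcal{U}$ be the smallest polygon in the recursion containing both; then $b\in V$ for some $V$ and $a$ is at or below $s_V$. I would route $P_{ab}$ by first going $y$-monotonically up from $a$ to $s_V$ along non-horizontal edges of $L$ and boundary segments of the intermediate $V$'s and $W$'s — all of slope $\geq\frac12\eps^{-1/2}$, hence within angle $\arctan(2\sqrt\eps)\leq 3\sqrt\eps$ of vertical, so by Lemma~\ref{lem:angle2} this portion has stretch $1+O(\eps)$ — and then following the shallow-light tree from $s_V$ to $b$, which by Lemma~\ref{lem:combine5} has stretch $1+O(\eps)$; since $\|as_Vb\|\le (1+O(\eps))\|ab\|$ the concatenation has stretch $1+O(\eps)$. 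The one place that needs care beyond transcribing Lemma~\ref{lem:staircase} — and the main obstacle — is making the recursive weight bound genuinely contractive: each hop through a shadow-set uses Lemma~\ref{lem:width}, which only holds when the generating segment has slope between $\frac12\eps^{-1/2}$ and $\eps^{-1/2}$ and only for $\eps<\eps_0$, so one must verify that after two hops the combined distortion constant times $\frac12$ (the factor from $\vper\!\to\!\hper$) is still below $1$, and that the boundaries $\partial V$ one adds really do have perimeter $O(\eps^{-1/2}\hper(L\cap V))$ — both of which follow from the $\Lambda\geq 8$ convention and a sufficiently small $\eps_0$. For $\eps\geq\eps_0$ the whole statement is trivial since the spanner can be taken to be $L$ together with all pairwise segments of $S$, whose weight is $O(1)\cdot\diam(L)\le O(\eps^{-1/2}\hper(L))$ with the implied constant depending on $\eps_0$.
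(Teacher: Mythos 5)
Your overall architecture is the same as the paper's: a shadow decomposition into components $\mathcal{U}$, a two-step recursion through $\mathcal{V}$ and $\mathcal{W}$ with the width/height bookkeeping of Lemma~\ref{lem:staircase} replaced by $\hper/\vper$ via Lemma~\ref{lem:width}, SLTs from the bottom vertex $s_V$ over dyadic horizontal slabs, and a stretch analysis that climbs from $a$ to $s_V$ along near-vertical edges (Lemma~\ref{lem:angle2}) and then follows an SLT. However, there is one genuine gap in your case reduction. You reduce to ``$a$ on a $\Lambda$-path and $b$ on a horizontal edge of $L$'' and accordingly define $B(U)$ as the shadow of the horizontal edges only. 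For a staircase this is fine, but a tame path, unlike a staircase, contains \emph{descending} $\Lambda$-paths, and a segment $ab$ below $L$ with $\slope(ab)\geq\eps^{-1/2}$ can perfectly well terminate at a point $b$ in the interior of a descending $\Lambda$-path. Such a $b$ need not be visible from any horizontal edge at slope $\geq\frac12\eps^{-1/2}$ inside $U$, so with your definition it need not lie in any $V\in\mathcal{V}$, and the pair $(a,b)$ is then not served by any SLT in your construction. The paper handles exactly this by defining $B(U)$ as the set of points reachable at slope $\geq\frac12\eps^{-1/2}$ from a horizontal edge \emph{or a descending $\Lambda$-path} of $L\cap U$, and its stretch analysis explicitly allows $b$ on a descending $\Lambda$-path (symmetrically, property (i) of tame histograms is what rules out $a$ lying on a descending $\Lambda$-path). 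Broadening $B(U)$ this way does not disturb your weight bookkeeping, but without it the stretch claim fails for a whole class of pairs.

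Two smaller points. First, your trivial patch for $\eps\geq\eps_0$ is incorrect as stated: adding all pairwise segments of $S$ costs up to $\Theta(|S|^2)\cdot\diam(L)$, which is not $O(\eps^{-1/2}\hper(L))$ since $|S|$ is unbounded; the correct remark is simply that the construction for $\eps_0$ already gives stretch $1+O(\eps_0)=1+O(\eps)$ and weight $O(\eps_0^{-1/2}\hper(L))=O(\eps^{-1/2}\hper(L))$ when $\eps\geq\eps_0$. Second, the SLT you need from $s_V$ to the slab portion $L_i$ is the one of Lemma~\ref{lem:combine6} (a tame path spanning the slab, with the hanging subpaths above horizontal chords handled by the auxiliary grid squares), not Lemma~\ref{lem:combine5}, whose setting is a tame path confined to a square above the rectangle; your parenthetical ``Corollary~\ref{cor:stairs} adapted to tame paths'' is essentially what Lemma~\ref{lem:combine6} does, so this is a citation issue rather than a conceptual one.
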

\begin{proof}
We construct $G$ as a union of two graphs, $G^+$ and $G^-$, where $G^+$ is a spanner for $\{a,b\}$ pairs with $\mathrm{slope}(ab)>0$ and $G^-$ for $\mathrm{slope}(ab)<0$. We focus on $G^+$, as the case of $G^-$ is analogous.

Let $a, b\in S$ such that $\mathrm{slope}(ab)\geq \eps^{-1/2}$ and $ab$ lies below $L$.
Without loss of generality, we may assume $y(a)<y(b)$. Since $ab$ is below $L$, point $a$ cannot be an interior of a horizontal edge of $L$. By property (i) of tame histograms, $a$ cannot be in the interior of  descending $\Lambda$-path. Consequently, $a$ is a point in an ascending $\Lambda$-path.

Let $A$ be the set of all points $p$ below $L$ such that there exists $a\in L$ on some ascending $\Lambda$-path of $L$
such that $\mathrm{slope}(ap)\geq \eps^{-1/2}$ and $ap$ is below $L$; see Fig.~\ref{fig:tame-shadow}(a).
The set $A$ is not necessarily connected, the connected components of $A$ are bounded by disjoint subpaths of $L$
and a line segment of slope $\eps^{-1/2}$. Let $\mathcal{U}$ be the set of these components.
By construction every pair $a,b\in L$ with $\mathrm{slope}(ab)\geq \eps^{-1/2}$ and $ab\subset H$ lies in a polygon in $\mathcal{U}$.
For each polygon $U\in \mathcal{U}$, we construct a geometric graph $G^+(U)$ of weight $O(\eps^{-1/2}\hper(U))$
such that $G^+(U)\cup L$ is a directional $(1+\eps)$-spanner for the points in $S\cap U$. Then $L$ together with $\bigcup_{U\in \mathcal{U}} G^+(U)$ is $(1+\eps)$-spanner for all possible $ab$ pairs. Since the polygons in $\mathcal{U}$ are adjacent to disjoint portions of $L$, we have $\sum_{U\in \mathcal{U}} \hper(U)\leq \hper(L)$,
and so $\sum_{U\in \mathcal{U}}\|G^+(U)\|=O(\eps^{-1/2}\hper(L))$, as required.

\smallskip\noindent\textbf{Recursive Construction.}
For each $U\in \mathcal{U}$, we construct $G^+(U)$ recursively as follows. Assume that $|S\cap U|\geq 2$.
Let $B(U)$ be the set of all points $p\in U$ for which there exists a point $b$ on some horizontal edge or descending $\Lambda$-path of $L\cap U$ such that $bp\subset U$ and $\mathrm{slope}(ab)\geq \frac12 \eps^{-1/2}$; see Fig.~\ref{fig:tame-shadow}(b).
The set $B(U)$ may be disconnected, each component is a simple polygon bounded by a subpath of $L$ and a line segment of slope $\frac12 \eps^{-1/2}$. Denote by $\mathcal{V}$ the set of connected components of $B(U)$.

For every $V\in \mathcal{V}$, let $C(V)$ be the set of all points $p\in V$ for which there exists a point $a$ on some vertical edge of $V$ such that $ap\subset V$ and $\mathrm{slope}(ap)\geq \eps^{-1/2}$; see Fig.~\ref{fig:tame-shadow}(b). Again, the set $C(V)$ may be disconnected, the components are simple polygons adjacent to disjoint subpaths of $L$. Denote by $\mathcal{W}$ the set of all connected components of $C(V)$ for all $V\in \mathcal{V}$.

We can apply Lemma~\ref{lem:width}  with slope $\eps^{-1/2}$ for all $W\in \mathcal{W}$;
and  with slope $\frac12\,\eps^{-1/2}$ for all $V\in \mathcal{V}$. Then
\begin{align}
\sum_{W\in \mathcal{W}}\hper(W)
&\leq \frac43\cdot \sqrt{\eps}\cdot \sum_{W\in \mathcal{W}}\vper(W)
\leq \frac43\cdot \sqrt{\eps}\cdot \sum_{V\in \mathcal{V}}\vper(V)\nonumber\\
&\leq \left(\frac43\right)^2\cdot \frac12\, \sum_{V\in \mathcal{V}}\hper(V)
\leq \frac89\, \sum_{U\in \mathcal{U}}\hper(U).\label{eq:width1}
\end{align}
Consequently, $\sum_{W\in \mathcal{W}} \|G^+(W)\|$ is proportional to
$\frac89\cdot \eps^{-1/2}\sum_{U\in \mathcal{U}}\hper(U)$.

For each polygon $V\in \mathcal{V}$, let $s_V$ be the bottom vertex of $V$, let $L(V)=L\cap V$ be the portion of $L$ on the boundary of $V$. We construct a sequence of \textsf{SLT}s from source $s_V$ as follows. For every nonnegative integer $i\geq 0$, let $h_i$ be a horizontal line at distance $\mathrm{height}(V)/2^i$ above $s_V$. By Let $L_i\subset L(V)$ be a maximum portion of $L(V)$ such that the corresponding $\Lambda$-staircase path $L'_i$ is on or below $h_i$ and strictly above $h_{i+1}$.
By Lemma~\ref{lem:combine6}, we can construct
a \textsf{SLT} from $s_V$ to $L_i$.
The total weight of these \textsf{SLT}s is $O(\eps^{-1/2}\hper(V))$.
Then the overall weight of these spanners is
$\sum_{V\in \mathcal{V}} O(\eps^{-1/2}\hper(V)) =O(\eps^{-1/2}\hper(U))$.
This completes the description of one iteration.
Recurse on all $W\in \mathcal{W}$ that contain any point in $S$.

\smallskip\noindent\emph{Lightness analysis.}
Each iteration of the algorithm, for a polygon $U$, constructs \textsf{SLT}s of total weight $O(\eps^{-1/2}\hper(U))$ by  Lemma~\ref{lem:combine6}, and produces subproblems whose combined horizontal perimeter at most $\frac89\hper(U)$ by Equation~\eqref{eq:width1}.
Consequently, summation over all levels of the recursion yields
$\|G^+(U)\|=
O(\eps^{-1/2}\hper(U) \cdot
\sum_{i\geq 0}\left(\frac89\right)^{-i})=O(\eps^{-1/2}\hper(U))$, as required.

\smallskip\noindent\emph{Stretch analysis.}
Now consider point pair $a,b\in S$ such that $\mathrm{slope}(ab)\geq \eps^{-1/2}$,
$a$ is in an ascending $\Lambda$-path of $L$, and $b$ is in a horizontal edge or a descending $\Lambda$-path of $L$.  Assume that $U$ is the smallest polygon in the recursive algorithm above that contains both $a$ and $b$. Then $b\in V$ for some $V\in \mathcal{V}$, and $a$ is at or below vertex $s_V$ of $V$. Now we can find an $ab$-path $P_{ab}$ as follows:
First construct a $y$-monotonically increasing path from $a$ to $s_V$ along $\Lambda$-paths of $L$ and along edges of some polygons in $\mathcal{V}$; all these edges have slope larger than $\frac{1}{2}\eps^{1/2}$. Then from $s_V$ to $b$, follow an \textsf{SLT} provided by Lemma~\ref{lem:combine6}. Specifically, there exists an integer $i\geq 0$ point $b$ lies on a subpath $L_i\subset L(V)$, where $L_i'$ is between the horizontal lines $h_i$ and $h_{i+1}$, and we can use the \textsf{SLT} between $s_V$ and $L_i$.

All edges of $P_{ab}$ from $a$ to $s_V$ have slope at least $\frac12\eps^{-1/2}$, and so their directions differ from vertical by at most $\mathrm{arctan}(2\eps^{1/2})\leq 3\eps^{1/2}$ from the Taylor expansion of $\tan(x)$ near $0$.
By Lemma~\ref{lem:angle2} the stretch factor of the paths from $a$ to $s_V$ and the path $as_Vb$ are each at most $1+O(\eps)$. By Lemma~\ref{lem:combine6} provides a path from $s_V$ to $b$ with stretch factor $1+O(\eps)$. Overall, $\|P_{ab}\|\leq (1+O(\eps))\|ab\|$.
\end{proof}

The combination of Lemmas~\ref{lem:dir2} and~\ref{lem:tame-staircase} provides a directional $(1+\eps)$-spanner for
all point pairs on the boundary of a tame histogram.

\begin{corollary}\label{cor:dir3}
Let $H$ be a tame histogram and  $S\subset \partial H$ a finite point set.
Then there exists a geometric graph $G$ of weight $\|G\|=O(\eps^{-1/2}\,\hper(H))$
such that $G$ contains a $ab$-path $P_{ab}$ with $\|P_{ab}\|\leq (1+O(\eps))\|ab\|$
for all $a,b\in S$ whenever $ab\subset H$ and $|\slope(ab)|\geq \eps^{-1/2}$.
\end{corollary}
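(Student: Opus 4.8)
The plan is to combine the two structural results for tame histograms into a single directional spanner. Recall the general strategy (as in Lemma~\ref{lem:dir}): a directional $(1+\eps)$-spanner only needs to serve pairs $a,b\in S$ with $|\slope(ab)|\geq \eps^{-1/2}$, and such a segment $ab$ either lies below the $pq$-path $L$ or crosses it. So the first step is a case split based on the geometry of the chord $ab$ relative to $L$ and the base $pq$.

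\begin{proof}[Proof sketch]
Let $H$ be a tame histogram bounded by a horizontal segment $pq$ and a $pq$-path $L$, and let $S\subset \partial H$ be finite. We build $G$ as the union of two graphs, $G_1$ and $G_2$, with $\|G_1\|,\|G_2\|=O(\eps^{-1/2}\hper(H))$.
Let $G_1$ be the graph of Lemma~\ref{lem:dir2}, which handles all pairs $a\in S\cap L$, $b\in S\cap pq$ with $ab\subset H$ and $|\slope(ab)|\geq \eps^{-1/2}$; its weight is $O(\eps^{-1/2}\per(H))$, and since a tame histogram satisfies $\per(H)=O(\hper(H))$ (its nonhorizontal edges are vertical or of slope $\pm\Lambda\eps^{-1/2}$, hence each is within a constant factor of its vertical span, and the vertical span is at most $\frac12\|pq\|\le\frac12\hper(H)$), we get $\|G_1\|=O(\eps^{-1/2}\hper(H))$.
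Let $G_2$ be the graph of Lemma~\ref{lem:tame-staircase} applied to the tame path $L$; it handles all pairs $a,b\in S\cap L$ with $|\slope(ab)|\geq \eps^{-1/2}$ and $ab$ below $L$, and has weight $O(\eps^{-1/2}\hper(L))=O(\eps^{-1/2}\hper(H))$. Set $G=G_0\cup G_1\cup G_2$, where $G_0$ is $\partial H$ itself (weight $O(\per(H))=O(\eps^{-1/2}\hper(H))$).

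It remains to check that every relevant pair is served. Let $a,b\in S$ with $ab\subset H$ and $|\slope(ab)|\geq \eps^{-1/2}$; we may assume $y(a)\le y(b)$. If $a,b\in pq$ then $ab$ is horizontal, contradicting the slope bound, so at least one endpoint lies on $L$. If both lie on $L$: since $L$ is $x$-monotone and $H$ lies below $L$ and above $pq$, the chord $ab$ lies below $L$ (a near-vertical chord between two points of an $x$-monotone path bounding $H$ from above cannot cross $L$), so the pair is served by $G_2$ via Lemma~\ref{lem:tame-staircase}. If exactly one endpoint, say $a$, lies on $pq$ and the other, $b$, on $L$: then the pair is served by $G_1$ via Lemma~\ref{lem:dir2} (after possibly relabeling so the lower point is on $pq$, which is automatic since $pq$ is the base). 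In all cases $G$ contains an $ab$-path of weight at most $(1+O(\eps))\|ab\|$.
\end{proof}

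The main obstacle I expect is the case analysis of where the chord $ab$ lies: one must argue cleanly that a near-vertical chord of $H$ with both endpoints on $L$ necessarily stays below $L$ (so Lemma~\ref{lem:tame-staircase} applies), and that every other admissible chord has one endpoint on the base $pq$ with the lower endpoint on $pq$ (so Lemma~\ref{lem:dir2} applies in the stated orientation). This uses $x$-monotonicity of $L$ together with the fact that $H$ is sandwiched between $pq$ and $L$; it is geometrically intuitive but needs care because $L$ is not convex. The weight bookkeeping is routine once the observation $\per(H)=O(\hper(H))$ for tame histograms is in hand, which follows from the slope restriction on nonhorizontal edges and the tameness bound $\|L\|\le 2\|pq\|$.
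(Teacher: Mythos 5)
Your proposal is correct and matches the paper's intended argument: Corollary~\ref{cor:dir3} is obtained exactly by combining Lemma~\ref{lem:dir2} (pairs with one endpoint on the base $pq$) with Lemma~\ref{lem:tame-staircase} (pairs on the tame path $L$), together with the observation that tameness gives $\|L\|\leq 2\|pq\|$, hence $\per(H)=O(\hper(H))$. Your case analysis and weight bookkeeping are the same as what the paper leaves implicit, so nothing further is needed.
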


\subsection{Directional Spanners for Fuzzy Staircases}
\label{ssec:DeltaStairs}

We can now construct a directional $(1+\eps)$-spanner for fuzzy staircase polygons.

\begin{lemma}\label{lem:dir1}
Let $F$ be a fuzzy staircase polygon and $S\subset \partial F$ a finite point set.
Then there exists a geometric graph $G$ of wight $\|G\|=O(\eps^{-1/2}\,\hper(F))$
such that $G$ contains a $ab$-path $P_{ab}$ with $\|P_{ab}\|\leq (1+\eps)\|ab\|$
for all $a,b\in S$ if $ab\subset F$ and $\mathrm{dir}(ab)\in D$.
\end{lemma}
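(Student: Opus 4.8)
The plan is to peel the boundary $\partial F$ of the fuzzy staircase into its three pieces — the $pr$-path $L$ (which is a tame path), the horizontal segment $pq$, and the steep segment $qr$ of slope $\pm\Lambda\eps^{-1/2}$ — and dispatch the pairs $a,b\in S$ according to which pieces contain them. First add the two segments $pq$ and $qr$ to $G$; this is affordable because $\|pq\|\le\hper(F)$, and $qr$, being the diagonal closure of the staircase, has length $O(\eps^{-1/2})$ times its $x$-extent, which is at most $\wdth(F)=O(\hper(F))$. Including $qr$ already serves every pair with both endpoints on $qr$ (the chord $ab$ itself lies on $qr$), while pairs with both endpoints on $pq$ have $\dir(ab)=0\notin D$ and are vacuous. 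For pairs $a,b\in S\cap L$ with $ab\subset F$, note that $ab$ lies below $L$ and that $\dir(ab)\in D$ forces $|\slope(ab)|\ge\eps^{-1/2}$; hence Lemma~\ref{lem:tame-staircase} supplies a graph $G_1\supseteq L$ of weight $O(\eps^{-1/2}\hper(L))=O(\eps^{-1/2}\hper(F))$ serving these pairs, which we add to $G$.

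What remains are the \emph{mixed} pairs, with one endpoint on $L$ and the other on $pq\cup qr$, or one endpoint on $pq$ and one on $qr$. For these I would reuse the recursive shadow decomposition from the proof of Lemma~\ref{lem:tame-staircase}, observing that the horizontal edge $pq$ behaves exactly like a horizontal edge of a tame path and the steep edge $qr$ behaves like an ascending (or descending) $\Lambda$-path, so $F$ is itself a shadow polygon hanging off a staircase-like boundary and the recursion applies verbatim: at each level peel off the shadow of the ascending $\Lambda$-pieces, then a shadow component $V$ of the horizontal and descending $\Lambda$-pieces, then a shadow component $W$ of the vertical pieces, recurse on the $W$'s, and from the bottom vertex $s_V$ of each $V$ build a sequence of shallow-light trees at geometrically decreasing scales (via Lemma~\ref{lem:combine6} and Corollary~\ref{cor:stairs}) reaching the relevant portions of $\partial F$, now including portions of $pq$ and $qr$, not merely of $L$. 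The weight accounting is as in Lemma~\ref{lem:tame-staircase}: each level costs $O(\eps^{-1/2})$ times the horizontal perimeter it is responsible for, and the analogue of Equation~\eqref{eq:width1} — which relies on Lemma~\ref{lem:width}, valid because the shadow components have controlled aspect ratios $\eps^{-1/2}$ and $\tfrac12\eps^{-1/2}$ — shows that this horizontal perimeter shrinks by a constant factor per level, so summing the geometric series gives total weight $O(\eps^{-1/2}\hper(F))$. The stretch of a mixed $ab$-path is bounded as there: an initial $y$-monotone piece of slope $\ge\tfrac12\eps^{-1/2}$ (hence $1+O(\eps)$-stretch by Lemma~\ref{lem:angle2}) up to some $s_V$, followed by an SLT-path of stretch $1+O(\eps)$, and a final rescaling of constants turns $1+O(\eps)$ into $1+\eps$.

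The step I expect to be the main obstacle is the mixed pair with one endpoint on the horizontal base $pq$ and the other on the steep edge $qr$. Here both endpoints can lie far from the corner $q$ (the base endpoint $a$ anywhere on $pq$, the steep endpoint $b$ high on $qr$), and then the boundary route through $q$ has length $(\|aq\|+\|qb\|)$ exceeding $\|ab\|$ by a factor $1+\Theta(\sqrt\eps)$ — too weak for a $(1+\eps)$-spanner. The $ab$-path must instead be routed through a shallow-light tree whose source sits above the wedge at $q$ and whose long edges make an angle $\ge\Theta(\sqrt\eps)$ with the vertical, which is exactly the regime the SLT/angle-bounded machinery handles (Lemma~\ref{lem:combine2}, Corollary~\ref{cor:stairs}, Lemma~\ref{lem:combine6}); the delicate part is checking that each bottom vertex $s_V$ (in particular $q$ itself) sees the relevant portions of both $pq$ and $qr$ via such trees at cost $O(\eps^{-1/2})$ times the local horizontal width, and that the charging scheme for the subdivision edges absorbs the $\|pq\|$- and $\|qr\|$-contributions into $O(\eps^{-1/2}\hper(F))$. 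Once this is arranged, the stretch and lightness bounds follow routinely from Lemmas~\ref{lem:angle2}, \ref{lem:width}, \ref{lem:combine6}, and~\ref{lem:tame-staircase}.
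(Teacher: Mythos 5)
Your treatment of pairs with both endpoints on $L$ (via Lemma~\ref{lem:tame-staircase}) and of the trivial pairs on $pq$ or on $qr$ alone agrees with the paper, but the substance of this lemma lies in the mixed pairs, and there your argument has a genuine gap. The claim that $F$ ``is itself a shadow polygon'' so that the recursion of Lemma~\ref{lem:tame-staircase} ``applies verbatim'' is not correct: that recursion only serves chords lying \emph{below} a tame path whose lower endpoint lies on an ascending $\Lambda$-path, whereas a mixed chord from the base $pq$ up to $L$ has its lower endpoint on a horizontal segment and need not lie in the shadow of any ascending piece; moreover the full boundary $p\to q\to r$ followed by $L$ is not a tame path, since property (ii) fails badly (a short horizontal chord between $L$ and $qr$ near the bottom of $F$ can have a boundary subpath of length $\Theta(\hght(F))$ between its endpoints), so Lemmas~\ref{lem:fat}, \ref{lem:width} and \ref{lem:combine6}, on which the weight accounting of Lemma~\ref{lem:tame-staircase} rests, do not apply to it. In addition, the SLTs rooted at the sources $s_V$ of that recursion are one-to-many objects, while the $pq$--$L$ pairs require a many-to-many spanner between the base and portions of $L$ that may lie $\Theta(\eps^{-1/2}\hper(F))$ above it; tying these pairs to the $s_V$'s does not produce such a spanner.

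The $pq$--$qr$ case, which you explicitly flag as ``the main obstacle,'' is precisely the construction that is missing, and it does not follow routinely from the cited lemmas. The paper resolves the three mixed cases by three separate constructions: (1) for $pq$--$L$, cut $L$ by horizontal lines at heights $\hght(F)/2^i$ and apply the two-sided SLT combination of Lemma~\ref{lem:combine2} between $pq$ and each band $L_i$, at cost $O(\eps^{-1/2}\hper(L_i)+\hght(F)/2^i)$ per band, which sums to $O(\eps^{-1/2}\hper(F))$; (2) for $pq$--$qr$, place a geometric series of SLTs $T_i$ between points $s_i\in qr$ at height $\hght(F)/2^i$ and the dyadic segments $t_{i-1}q\subset pq$, the key observation---absent from your write-up---being that $\dir(ab)\in D$ with $a\in qr$ at or above $s_i$ forces $b\in t_{i-1}q$, so the total cost is $\sum_{i\ge 1} O(2^{-i}\eps^{-1/2}\|pq\|)$; (3) for $qr$--$L$, subdivide $F$ along a staircase of similar triangles $\Delta(a_ib_ia_{i+1})$ hugging $qr$, with $\slope(a_ib_i)=-\eps^{-1/2}$ and $b_ia_{i+1}$ horizontal, and route each chord through its intersection with $b_ia_{i+1}$, reducing to (1) and (2). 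Without these (or equivalent) constructions and their charging arguments, neither the weight bound $O(\eps^{-1/2}\hper(F))$ nor the $1+O(\eps)$ stretch for mixed pairs is established, so the proposal as written is incomplete.
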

\begin{proof}
Let $F$ be a fuzzy staircase polygon bounded by a horizontal segment $pq$, a segment $qr$ of slope $\Lambda\eps^{-1/2}$, and a path $L$ obtained from an $x$- and $y$-monotone staircase by replacing vertical edges with some $\Lambda$-paths.
For point pairs $a,b\in L\cap S$, Lemma~\ref{lem:tame-staircase} provides a desired spanner of weight $O(\eps^{-1/2}\hper(P))$.

It remains to construct a spanner for point pairs $a,b\in S$, where $a\in pq\cup qr$
Assume that $p$ is the origin, and $pq$ is on the positive $x$-axis.
Let $h=\hght(F) = \hght(qr)$. Since $\slope(qr)=\Lambda\eps^{-1/2}$, where $\Lambda=O(1)$,
then $h=O(\eps^{-1/2}\wdth(pq))=O(\eps^{-1/2}\hper(F))$.

\begin{figure}[htbp]
 \centering
 \includegraphics[width=0.95\textwidth]{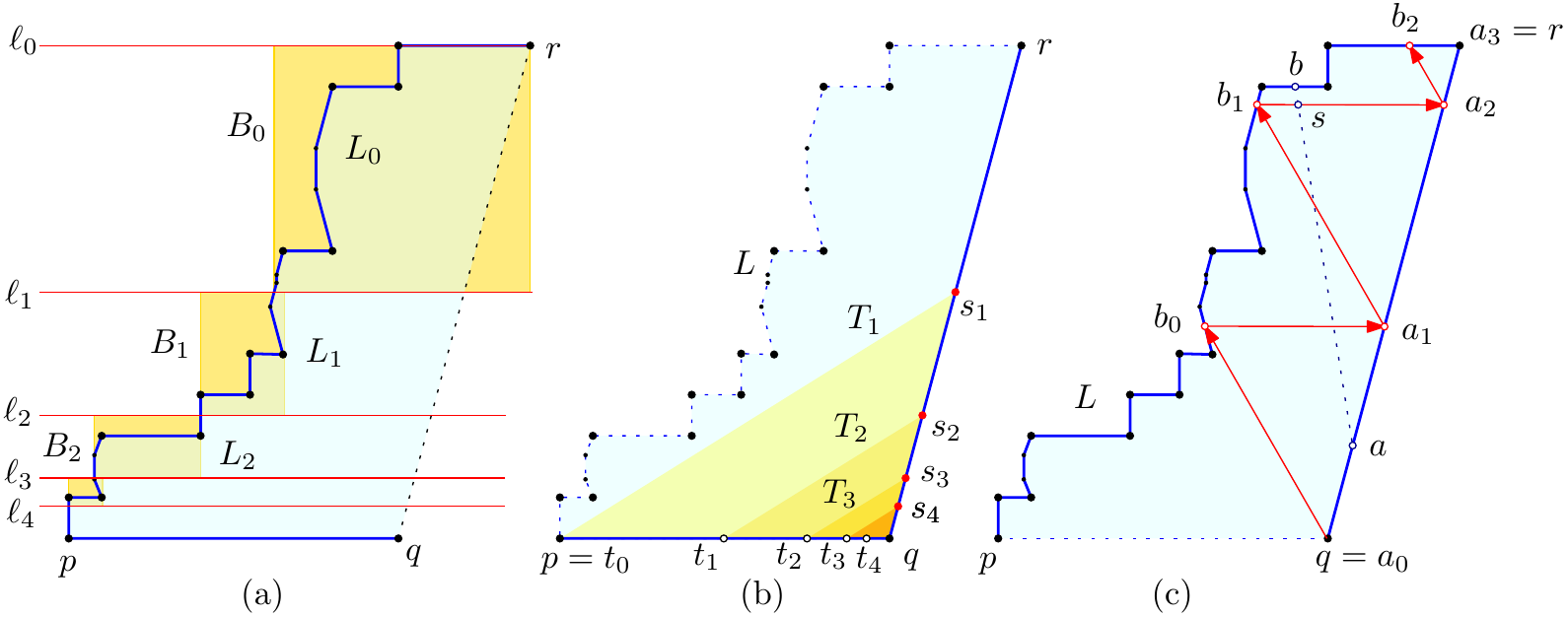}
 \caption{(a) \textsf{SLT}s between the horizontal base $pq$ and $L$.
 (b) \textsf{SLT}s between points $s_i\in qr$ and segments $t_{i-1}q\subset pq$
 (c) \textsf{SLT}s between the right side $qr$ and $L$.}
    \label{fig:fuzzy-stairs}
\end{figure}

\smallskip\noindent\textbf{Case~1: Directional spanner between $pq$ and $L$.}
Refer to Fig.~\ref{fig:fuzzy-stairs}(a).
For every nonnegative integer $i\in \mathbb{N}$, let $\ell_i:y=h/2^i$. Let $L_i$ be the portion of $L$ on or below $\ell_i$ and strictly above $\ell_{i+1}$; see Fig.~\ref{fig:fuzzy-stairs}(a).
We have partitioned $L$ into subpaths $L=\bigcup_{i\geq 0}L_i$, hence $\hper(L)=\sum_{i\geq 0}\hper(L_i)$. Denote by $B_i$ the axis-parallel bounding box of $L_i$, then $\wdth(B_i)\leq \hper(L_i)+\sqrt{\eps}h/2^i$.
For every $i\geq 0$, where $L_i\cap S\neq \emptyset$,
we use Lemma~\ref{lem:combine2} to construct directional
$(1+\eps)$-spanners between $pq$ and $L_i$. The total
weight of these spanners is $O(\eps^{-1/2}\wdth(B_i))=O(\eps^{-1/2}\hper(L_i)+h/2^i)$.
Summation over all $i\geq 0$ yields $O(\eps^{-1/2}\hper(L)+h)=O(\eps^{-1/2}\hper(F))$.

\smallskip\noindent\textbf{Case~2: Directional spanner between $pq$ and $qr$.}
Refer to Fig.~\ref{fig:fuzzy-stairs}(b).
For all $i\in \mathbb{N}$, let $s_i\in pr$ such that $y(s_i)=h/2^i$, let $t_i\in pq$ such that $\|t_iq\|=\|pq\|/2^i$. For all $i\geq 1$, let $T_i$ be a \textsf{SLT} between $s_i$ and the points in the horizontal segment $t_{i-1}q$. Let $G_2$ be the union of $pq\cup qr$, and the \textsf{SLT}s $T_i$ for all $i\geq 1$, for which the interior of segment $t_{i-1}q$ contains any point in $S$.
The combined weight of the \textsf{SLT}s is $\sum_{i\geq 1} O(2^{-i}\eps^{-1/2}\|pq\|)\leq O(\eps^{-1/2}\hper(F))$, as required.

If a point $a\in S$ is on the left side of $R$ at or above $s_i$, and $b$ is at the bottom side of $R$, then the constraint on $\mathrm{dir}(ab)$ implies that $b\in t_{i-1}q$. Consequently, we can construct an $ab$-path by a vertical segment from $a$ to $s_i$, followed by a path from $s_i$ to $b$ in the \textsf{SLT} $T_i$.

\smallskip\noindent\textbf{Case~3: Directional spanner between $qr$ and $L$.}
We reduce this case to the previous two cases; refer to Fig.~\ref{fig:fuzzy-stairs}(c).
We subdivide $P$ by a $qr$-path constructed recursively as follows.
Initially, we set $i=0$ and $a_0=q$. While $a_i\neq r$, we construct
point $b_i\in L$ such that $\slope(a_ib_i)=-\eps^{-1/2}$;
and then construct $a_{i+1}\in qr$ such that $b_ia_{i+1}$ is horizontal.
Since $P$ has finitely many vertices, the algorithm terminates
with $r=a_q$ for some integer $q\geq 1$.

For every $i=0,\ldots , q-1$, we construct the following geometric graph.
The graph includes the subpath of $L$ between $b_i$ and $b_{i+1}$, denoted $L_i$.
We also include the line segment $b_ia_{i+1}$ of weight $\|b_ia_{i+1}\|$.
Between $a_ia_{i+1}$ and $b_ia_{i+1}$,  we construct a geometric series of \textsf{SLT}s
of total weight $O(\hght(a_ia_{i+1}))$, similar to Case~2 above.
Between $b_ia_{i+1}$ and $L_i$, we construct \textsf{SLT}s of total weight $O(\eps^{-1/2}\hper(L_i))$,
similar to  Case~1 above.

\smallskip\noindent\emph{Lightness analysis in Case~3.}
Note that the triangles $\Delta(a_ib_ia_{i+1})$ are similar. Since $\slope(a_ia_{i+1})=\Lambda\eps^{-1/2}$ and
$\slope(a_ib_i)=-\eps^{-1/2}$, then we have $\hght(a_ib_i)\leq \Lambda<\wdth(a_ia_{i+1})$, and
$\|b_ia_{i+1}\|\leq (\Lambda+1)\wdth(a_ia_{i+1})$.
The overall weight of the new edges for all $i=0,\ldots , q-1$ is
\begin{align*}
&\sum_{i=1}^{q-1}\left( \|b_ia_{i+1}\| + O(\hght(a_ia_{i+1}))+O(\eps^{-1/2}\hper(L_i))\right)\\
=&O\left((\Lambda+1)\wdth(qr) + \hght(qr) + \eps^{-1/2}\hper(L) \right)\\
=&O\left(\eps^{-1/2}\wdth(qr) + \eps^{-1/2}\hper(L) \right)
=O\left(\eps^{-1/2}\hper(F)\right).
\end{align*}

\smallskip\noindent\emph{Stretch analysis in Case~3.}
Let $a\in qr$ and $b\in L$ such that $\mathrm{dir}(ab)\in D$.
Assume that $b\in L_i$ for some $i\in \{0,\ldots , q\}$.
Since $\mathrm{dir}(ab)\in D$, this implies that $a\in a_0a_{i+1}$,
in particular $y(a)\leq y(b_i)=y(a_{i+1})$.
Let $s=ab\cap b_ia_{i+1}$; see Fig.~\ref{fig:fuzzy-stairs}(c).
If $a\in a_ia_{i+1}$, then we find a path $P_{ab}$ as
a concatenation of a path from $a$ to $s$ using the \textsf{SLT}s
in the triangle $\Delta(a_ib_ia_{i+1})$, and a path from
$s$ to $b$ using the \textsf{SLT}s between $b_ia_{i+1}$ and $L_i$.
The analysis of Cases~1--2 above implies that
$\|P_{ab}\|\leq (1+\eps)(\|as\|+\|sb\|)=(1+\eps)\|ab\|$.
If $a$ is below point $a_i$, then we construct an $ab$-path $P_{ab}$
as a concatenation of edge $aa_i$, following by a path from
$a_i$ to $b$ via $s$ as in the previous case.
Every edge of the path $aa_isb$ has a direction in the interval $D$,
hence $\|aa_i\|+\|a_is\|+\|sb\|\leq (1+\eps)\|ab\|$ by Lemma~\ref{lem:angle2}.
The \textsf{SLT}s contain paths that approximate $a_is$ and $sb$, resp.,
within a $1+O(\eps)$ factor. Overall, we have
$\|P_{ab}\|\leq (1+O(\eps))\|ab\|$.
\end{proof}

Corollary~\ref{cor:dir3} and Lemma~\ref{lem:dir1} jointly imply Lemma~\ref{lem:hist}.

\histlemma*

This completes all components needed for Theorem~\ref{thm:UB}.

\section{Conclusion and Outlook}
\label{sec:cons}

We have proved a tight upper bound of $O(\eps^{-1})$ on the lightness of Euclidean Steiner $(1+\eps)$-spanners in the plane. That is, for every finite set $S\subset \mathbb{R}^2$, there is a Euclidean Steiner $(1+\eps)$-spanner of weight $O(\eps^{-1}\,\|\MST(S)\|)$. Our proof is constructive, but we do not control the number of Steiner points. This immediately raises the question about the optimum number of Steiner points: What is the minimum sparsity of a Euclidean Steiner $(1+\eps)$-spanner of weight $O(\eps^{-1}\|\MST(S)\|)$ that can be attained for all finite set of points in $\mathbb{R}^2$?

Planarity is an important aspect of any geometric networks.
Therefore, it is desirable to construct Euclidean $(1+\eps)$-spanners that are plane, i.e., no two edges of the spanner cross. Any Steiner spanner can be turned into a plane spanner (planarized), with the same weight and the same spanning ratio between the input points, by introducing Steiner points at all edge crossings. However, planarization may substantially increase the number of Steiner points. Bose and Smid~\cite[Sec.~4]{BoseS13} note that Arikati et al.~\cite{ArikatiCCDSZ96} constructed a Euclidean plane $(1+\eps)$-spanner with $O(\eps^{-4} n)$ Steiner points for $n$ points in $\mathbb{R}^2$; see also~\cite{MaheshwariSZ08}. Borradaile and Eppstein~\cite{BorradaileE15} improved the bound to $O(\eps^{-3} n\log \eps^{-1})$ in certain special cases where all Delaunay faces of the point set are fat. It remains an open problem to find the optimum dependence of $\eps$ for plane Steiner $(1+\eps)$-spanners; and for plane Steiner $(1+\eps)$-spanners of lightness $O(\eps^{-1})$.

\bibliographystyle{plainurl}
\bibliography{spanner}

\end{document}